%% file: manuscript.tex
\documentclass{article}

\usepackage{amsmath, amsthm, amssymb, dsfont, mathtools}
\usepackage{mathrsfs}

\usepackage{graphicx}
\usepackage{verbatim}
\usepackage{natbib}
\usepackage{caption}
\usepackage{subcaption}
\usepackage{fancyvrb}
\usepackage{enumerate}
\usepackage[inline]{enumitem}
\usepackage{relsize}
\usepackage{hyperref}
\usepackage[margin=1.4in]{geometry}
\hypersetup{colorlinks,citecolor=blue,urlcolor=blue,linkcolor=blue}
\usepackage{diagbox}
\usepackage{nikos_tex}
\usepackage{float}
\usepackage{mdframed}
\usepackage{multirow}
\usepackage{booktabs}
\usepackage[utf8]{inputenc}
\usepackage[english]{babel}

\theoremstyle{definition}
\newtheorem{prop}{Proposition}

\newtheorem{lemm}[prop]{Lemma}
\newtheorem{theo}[prop]{Theorem}
\newtheorem{rema}[prop]{Remark}
\newtheorem{defi}[prop]{Definition}

\newcounter{procedure}
\newcounter{procedurevariant}[procedure]
\renewcommand{\theprocedurevariant}{\theprocedure\ensuremath{'}}

\usepackage{tikz}
\usetikzlibrary{matrix,calc,arrows.meta,backgrounds}
\definecolor{naiveblue}{RGB}{36,105,151}
\definecolor{tregred}{RGB}{164,58,77}
\definecolor{restgreen}{RGB}{29,125,94}
\definecolor{actpurple}{RGB}{119,79,151}

\date{Draft manuscript: September, 2026}

\title{Empirical Bayes prepivoting under group invariance:\\ false discovery rate control and moderated t-tests}

\author{
Nikolaos Ignatiadis\\
\texttt{ignat@uchicago.edu}
\and 
Etienne Roquain\\
\texttt{etienne.roquain@upmc.fr}
}

\begin{document}

\maketitle

\begin{abstract}
We consider simultaneously testing hypotheses about thousands of units, e.g., genes or proteins, where each unit yields a handful of replicate measurements and we test whether its mean is zero. A widely used approach in genomics, implemented in the limma software, borrows strength across units to learn the distribution of the unit-specific variances, then computes moderated t-statistics. Here we develop the first procedures using moderated t-statistics that (i) control the false discovery rate (FDR) in finite samples under independence across units and null group invariance, without assuming limma's hierarchical model, and (ii) match the power of an oracle local false discovery rate procedure in a sparse asymptotic regime under limma's working model. Benjamini-Hochberg (BH) with standard t-test p-values has asymptotically zero power in the same regime. Our approach learns the variance distribution from statistics that depend on each unit's data only through its orbit under a compact group of transformations that preserves the null distributions, such as sign flips or orthogonal rotations. We then calibrate the resulting moderated t-statistics against group-transformed statistics pooled across units to obtain compound p-values, which we use with BH and a close variant. For small finite groups, we also construct a Selective SeqStep+ procedure using the same learned scores. Our approach extends to two-sample tests and tests of linear model coefficients.
\end{abstract}

\section{Introduction} 
\label{sec:introduction}

In this paper, we consider the following multiple testing problem. For each of $n$ units, indexed by $i$, we observe $K$ normally distributed replicates,
\begin{equation}
Z_{ij} \simindep \mathrm{N}(\mu_i, \sigma_i^2),\;\; \text{ for } j=1,\ldots,K,\;\;\; i=1,\ldots,n,
\label{eq:normal_samples}
\end{equation}
with both $\mu_i$ and $\sigma_i^2$ unknown.  We are interested in testing the null hypotheses, $H_{0,i}: \mu_i = 0$ for $i=1,\ldots,n$, while controlling the false discovery rate (FDR). 

We endow the parameters $(\mu_i, \sigma_i^2)$ with the following working model due to~\citet{lonnstedt2002replicated} and~\citet{smyth2004linear}, widely used in genomics,
\begin{equation}
\sigma_i^2 \simindep \frac{\nu_0 s_0^2}{ \chi^2_{\nu_0}},\;\;\;\;
\mu_i \mid \sigma_i^2 \simindep \pi_0 \delta_0 \,+\, (1-\pi_0) \mathrm{N}(0, \lambda\sigma_i^2/K),
\label{eq:lonnsted_prior}
\end{equation}
where $\pi_0 \in [0,1]$, $\lambda>0$, $\nu_0>0$ and $s_0^2>0$ are unknown hyperparameters, and $\chi^2_{\nu_0}$ denotes the chi-square distribution with $\nu_0$ degrees of freedom.

We develop procedures with the following properties:
\begin{enumerate}[leftmargin=*, labelsep=4pt, itemindent=0pt]
\item[(i)] they attain a finite sample FDR control guarantee when only~\eqref{eq:normal_samples} holds (i.e., without assuming~\eqref{eq:lonnsted_prior}), or more generally when the error terms $Z_{ij} - \mu_i,\; j=1,\ldots,K$ are rotationally invariant, or even under the weaker assumption that their distributions are invariant under coordinatewise sign flips (as is the case, for example, for independent errors symmetric about zero);
\item[(ii)] under the working model~\eqref{eq:normal_samples}--\eqref{eq:lonnsted_prior}, in a sparse regime with $\pi_0 \to 1$ and $\lambda \to \infty$, 
they asymptotically match the performance of an oracle local false discovery rate procedure that knows the hyperparameters in~\eqref{eq:lonnsted_prior}. 
\end{enumerate}
The multiple testing setting described in~\eqref{eq:normal_samples} is long standing (Section~\ref{sec:existing} reviews existing approaches) and it has many useful applications in high-throughput biology~\citep{ritchie2015limma}; see Section~\ref{subsec:yves} for an application to proteomics.

The specific model illustrates a broader statistical idea: if we were interested only in desideratum (i) above, then the natural benchmark procedure would be to use a separate group invariance randomization test for each hypothesis~\citep{lehmann2005testing,ritzwoller2026randomization} and use the resulting p-values as input to the Benjamini-Hochberg (BH) procedure~\citep{benjamini1995controlling}. However, we will argue that this approach can have low power in some settings for two main reasons. First, if the randomization test uses a finite group (such as sign flips) and the number of replicates $K$ in~\eqref{eq:normal_samples} is small, the resulting p-values may be too coarse to allow the BH procedure to reject any hypothesis. A second limitation concerns adaptivity: standard implementations typically construct each test statistic using only the data from that unit, without borrowing information across units. Moreover, even if such borrowing is introduced, e.g., through empirical Bayes methods, separate randomization can undo the resulting power gains  (Proposition~\ref{prop:sep_pvalue_limma}). 

In this paper, we develop an alternative approach in the case of independence across units.\footnote{\citet{RW2005,arlot2010nonasymptotic} develop methods based on randomization tests for family-wise error control that account for dependence between test statistics; \citet{hemerik2018false} do the same for the median of the false discovery proportion. Here we assume independence across units and focus on the power gains from borrowing information across them.} Our starting point is the approach of~\citet{barber2026false}, who calibrate each unit's test statistic against a reference distribution obtained by pooling simulated or permuted test statistics across units. They show that this yields compound p-values~\citep{armstrong2022false,ignatiadis2025asymptotic}, which relax individual p-value validity to a single condition on the sum of the null rejection probabilities. They also show that applying BH gives a finite sample FDR bound of $1.93\alpha$. In Section~\ref{sec:compound_random}, we extend this approach to more general groups beyond permutations, but perhaps more importantly, we allow the test statistic to be learned across units using quantities that are unchanged by the group transformations. In particular, this allows us to learn powerful scores using shared properties of the null distributions when the null hypotheses are composite. We show that the same $1.93\alpha$ FDR bound holds with learned scores (Theorem~\ref{theo:compound-p}), and give a modification that controls the FDR at level $\alpha$ (Theorem~\ref{theo:compound-ddr}).

For sufficiently small groups, even compound BH cannot make any discoveries (Proposition~\ref{prop:no_discoveries}). In Section~\ref{subsec:seqstep}, we therefore also construct a procedure based on Selective SeqStep+~\citep{barber2015controlling}, using the same learned scores. This procedure controls the FDR at level $\alpha$ (Theorem~\ref{theo:seqstep_multi}) and can make discoveries even with a group of order two.

We pursue these ideas by robustly estimating the variance-prior parameters
in~\eqref{eq:lonnsted_prior} to construct the test statistics, which we interpret as ``empirical Bayes prepivots.'' The resulting construction satisfies desideratum (i) of finite sample FDR control (as explained above), and moreover, we establish in Theorem~\ref{theo:sparse_regime} of Section~\ref{sec:power_analysis} that it also satisfies desideratum (ii) of asymptotic power optimality when \eqref{eq:lonnsted_prior} is true. In the same asymptotic regime, Theorem~\ref{theo:sparse_regime} also shows that BH applied to ordinary t-test p-values or separate sign flip p-values has asymptotically zero power. The results are corroborated by the simulation study of  Section~\ref{sec:numerical_results} and illustrated in a proteomics study in 
Section~\ref{subsec:yves}.

To illustrate the generality of the framework, we also describe
applications to two-sample testing and tests of linear model coefficients in Section~\ref{sec:other_settings}; with an application to differential methylation in Section~\ref{subsec:methylation}.

\section{Existing approaches}
\label{sec:existing}
We begin by providing a (non-exhaustive) review of existing procedures for the problem. This serves two purposes: it motivates our construction, and it provides benchmark methods to which we will compare our proposal; see Section~\ref{sec:conclusion} for further related work.

To describe existing approaches, it will be convenient to write $\boldZ_i = (Z_{i1},\ldots,Z_{iK})^\top \in \RR^K$ for the $K$ replicates of the $i$-th unit. Given $\boldz = (z_1,\ldots,z_K)^\top \in \mathbb R^K$, let us also define the following summary statistics:
\begin{equation}
\hat{\mu}(\boldz):= \frac{1}{K}\sum_{j=1}^K z_j, \quad \hat{\sigma}^2(\boldz):= \frac{1}{K-1}\sum_{j=1}^K (z_j - \hat{\mu}(\boldz))^2.
\label{eq:summary_statistics}
\end{equation}
We also write $\hat{\mu}_i = \hat{\mu}(\boldZ_i)$, $\hat{\sigma}_i^2 = \hat{\sigma}^2(\boldZ_i)$ and $\nu=K-1$.

\paragraph{Standard t-tests with BH ($t\textnormal{-}\mathrm{BH}(\alpha)$).} A textbook approach is to define the t-statistic function for $\boldz \in \mathbb R^K$ as
\begin{equation}
T(\boldz) := \frac{\sqrt{K}\hat{\mu}(\boldz)}{\hat{\sigma}(\boldz)},
\label{eq:standard_t}
\end{equation}
and then compute p-values as $P_i^{t} := 2 \bar{F}_{t,\nu}(|T(\boldZ_i)|)$, where $\bar{F}_{t,\nu}$ is the survival function of a $t$-distribution with $\nu$ degrees of freedom. Finally, one applies the Benjamini-Hochberg procedure to $P_1^{t},\ldots,P_n^{t}$ at level $\alpha$. This procedure controls the FDR at level $\alpha$ under~\eqref{eq:normal_samples}.

\paragraph{Oracle local false discovery rate thresholding ($\mathrm{LFDR}^{\text{OR}}(\alpha)$).} If we assume that the working model in~\eqref{eq:lonnsted_prior} holds, then we can compute the local false discovery rate~\citep{efron2001empirical, efron2010largescale} function for $\boldz \in \RR^K$,
\begin{equation}
\mathrm{lfdr}(\boldz) := \PP{\mu_i = 0 \mid \boldZ_i=\boldz}
\label{eq:local_fdr}
\end{equation}
under the above model (which also depends on the hyperparameters in~\eqref{eq:lonnsted_prior}). To turn the above into a procedure controlling the FDR, we can threshold the local false discovery rates at a level $c_\alpha$ chosen to ensure that the FDR is controlled at level $\alpha$. Specifically, we can define $c_{\alpha}$ as the largest solution of the following equation (setting $c_{\alpha} =0$ if no solution exists):
\begin{equation}
\frac{\pi_0\PP{\mathrm{lfdr}(\boldZ_i)\leq c_{\alpha} \mid \mu_i=0}}
{\pi_0\PP{\mathrm{lfdr}(\boldZ_i)\leq c_{\alpha} \mid \mu_i=0}
+\pi_1\PP{\mathrm{lfdr}(\boldZ_i)\leq c_{\alpha} \mid \mu_i \neq0}}
\stackrel{!}{=}\alpha.
\label{eq:mfdr_equilibrium}
\end{equation}
where $\stackrel{!}{=}$ emphasizes that $c_\alpha$ is chosen so that equality holds.
One then rejects the $i$-th null hypothesis if $\mathrm{lfdr}(\boldZ_i) \leq c_{\alpha}$.
Under both~\eqref{eq:normal_samples} and~\eqref{eq:lonnsted_prior}, \citet{storey2003positive} shows that the above procedure has FDR exactly equal to the display in~\eqref{eq:mfdr_equilibrium} times the probability of making at least one rejection, i.e., equal to $\alpha \cdot(1- \PP{\mathrm{lfdr}(\boldZ_i)> c_{\alpha}}^n)  \leq \alpha$, and so it controls the FDR at level $\alpha$. It does not, however, provide a finite sample guarantee for the FDR under~\eqref{eq:normal_samples} alone. 

\citet{abraham2022empirical} call such a procedure a q-value procedure following~\citet{storey2003positive}. An alternative oracle procedure is the cumulative local fdr (Clfdr) approach of~\citet{sun2007oracle}; see~\citet{abraham2022empirical}.
Empirical Bayes implementations of the q-value and Clfdr approaches estimate the hyperparameters in~\eqref{eq:lonnsted_prior} from the data, and apply the above to plug-in estimates $\widehat{\mathrm{lfdr}}(\boldz)$; such procedures only have FDR guarantees as $n \to \infty$.

\paragraph{Limma with BH.} 

The limma procedure~\citep{smyth2004linear} is perhaps the most widely-used approach in genomics for solving the above problem. Its starting point is to only posit part of the prior specification in~\eqref{eq:lonnsted_prior}, namely only the prior on the variances:
\begin{equation}
\sigma_i^2 \simindep \frac{\nu_0 s_0^2}{ \chi^2_{\nu_0}}.
\label{eq:limma_variance_prior}
\end{equation}
An oracle version of the limma procedure
proceeds as follows. First, it defines the following moderated variance and moderated t-statistic functions for $\boldz \in \RR^K$:
\begin{equation}
\hat{\sigma}^2(\boldz; \nu_0, s_0) := \frac{  \nu_0 s_0^2 + \nu \hat{\sigma}^2(\boldz)}{\nu_0 + \nu},\;\;\; T(\boldz; \nu_0, s_0) := \frac{\sqrt{K} \hat{\mu}(\boldz)}{\hat{\sigma}(\boldz; \nu_0, s_0)},
\label{eq:moderated_stats}
\end{equation}
where we make the dependence on $\nu_0$ and $s_0^2$ explicit. (Notionally, $\hat{\sigma}^2(\boldz) = \hat{\sigma}^2(\boldz; 0, 1)$ and $T(\boldz) = T(\boldz; 0, 1)$.) The key observation is that if~\eqref{eq:normal_samples} and~\eqref{eq:limma_variance_prior} hold, then 
\begin{equation}
T(\boldZ_i; \nu_0, s_0) \mid \mu_i =0 \sim t_{\nu + \nu_0},\,\; \text {or more generally, }\,\;\frac{\sqrt{K} \{\hat{\mu}(\boldZ_i)-\mu_i\}}{\hat{\sigma}(\boldZ_i; \nu_0, s_0)} \sim t_{\nu + \nu_0}, 
\label{eq:moderated_t_pivot}
\end{equation}
i.e., the moderated t-statistic 
has a $t$-distribution with $\nu + \nu_0$ degrees of freedom; see Proposition~\ref{prop:moderated_two_groups} in Supplement~\ref{sec:t_formulas} for a self-contained proof. Thus one can compute p-values as $P_i^{\mathrm{limma}} := 2 \bar{F}_{t,\nu+\nu_0}(|T(\boldZ_i; \nu_0, s_0)|)$.\footnote{ 
\citet{smyth2004linear} writes that ``the
added degrees of freedom [of $T(\boldZ_i; \nu_0, s_0)$ over $T(\boldZ_i)$] reflect the extra information which is borrowed
on the basis of the hierarchical model, from the ensemble of genes for inference about each individual gene.''
} These can then be used to control the FDR at level $\alpha$ by applying the Benjamini-Hochberg procedure to $P_1^{\mathrm{limma}},\ldots,P_n^{\mathrm{limma}}$.

An important feature of oracle limma is that it does not require knowledge of $\pi_0$ and $\lambda$ in~\eqref{eq:lonnsted_prior}. Nevertheless, the ranking induced by the moderated t-statistics $T(\boldZ_i; \nu_0, s_0)$ is equivalent to the ranking induced by the local false discovery rates $\mathrm{lfdr}(\boldZ_i)$ in~\eqref{eq:local_fdr}. The reason is that, as shown by~\citet{smyth2004linear} (also see the self-contained proof in Proposition~\ref{prop:localfdr_formula} of Supplement~\ref{sec:t_formulas}), 
\begin{equation}
\mathrm{lfdr}(\boldz)  = \left[
1+
\frac{\pi_1}{\pi_0}
(1+\lambda)^{(\nu_0+\nu)/2}
\left\{
\frac{\nu_0+\nu+T^2(\boldz;\nu_0,s_0)}
{(1+\lambda)(\nu_0+\nu)+T^2(\boldz;\nu_0,s_0)}
\right\}^{(\nu_0+K)/2}
\right]^{-1},
\label{eq:localfdr_smyth}
\end{equation}
and so $\mathrm{lfdr}(\boldz)$ is a decreasing function of $|T(\boldz;\nu_0,s_0)|$. We note that if we apply BH to oracle $P_i^{\mathrm{limma}}$ the resulting procedure is not exactly the same as the oracle $q$-value or Clfdr procedures described above, but they are all asymptotically equivalent under sparsity (i.e., when $\pi_0 \to 1$).

In practice, limma estimates the hyperparameters $\nu_0$ and $s_0^2$ from $\hat{\sigma}^2(\boldZ_1),\ldots,\hat{\sigma}^2(\boldZ_n)$, yielding estimates $\hat{\nu}_0$ and $\hat{s}_0^2$. These are plugged in, in lieu of the true hyperparameters in~\eqref{eq:moderated_stats}. This plug-in approach can be justified asymptotically (though we are not aware of such justification in the literature).

We call $T(\boldZ_i; \hat{\nu}_0, \hat{s}_0)$ an \emph{empirical Bayes pivot}. The reason is that (ignoring error in estimates of $\hat{\nu}_0, \hat{s}_0$) it follows the pivotal distribution in~\eqref{eq:moderated_t_pivot} when also integrating over the variance distribution in~\eqref{eq:limma_variance_prior}.
This is different from classical pivots for group invariance tests~\citep{janssen1997studentized, chung2013exact, fogarty2021prepivoted} wherein we would ask for a pivotal distribution for fixed $\sigma_i, \mu_i$. One of the main questions we address is how to use an empirical Bayes pivot in a way that yields finite-sample FDR control for fixed $\sigma_i$ and $\mu_i$, and higher power when these parameters are indeed drawn from the working hierarchical model.

\paragraph{Significance analysis of microarrays (SAM).} The SAM method~\citep{tusher2001significance, storey2003sam} uses a test statistic of the following form:\footnote{Our presentation of SAM is simplified and we omit details. We also note that the publications describe the two-sample case (see our Section~\ref{subsec:twosample}), however, the SAM software also allows for one-sample comparisons as described in the user's guide~\citep{chu_sam_guide}. 
This is the version of SAM we describe herein.
}
\begin{equation*}
T^{\mathrm{SAM}}(\boldz; s_0) := \frac{\hat{\mu}(\boldz)}{\hat{\sigma}(\boldz) + s_0},
\label{eq:SAM_T}
\end{equation*}
for some $s_0>0$. We point out the similarity to the moderated t-statistic in~\eqref{eq:moderated_stats}. The second ingredient of SAM is to estimate the FDR associated with a rejection rule of the form $|T^{\mathrm{SAM}}(\boldz; s_0)| \geq t$ for some $t>0$. The proposal is to do so via a pooled null distribution,\footnote{Later versions of SAM replaced outer averaging over $H \in \{-1,+1\}^K$ by taking a median. This is the SAM version studied by~\citet{hemerik2018false}. Nevertheless, the purpose of \cite{hemerik2018false} is to show finite sample confidence bounds for the false discovery proportion, which is a different aim from ours.}
\begin{equation}
\widehat{\mathrm{FDR}}^{\mathrm{SAM}}\!(t) := \frac{1}{2^K}\sum_{H \in \{-1,+1\}^K} \frac{\sum_{k=1}^n \ind\{  |T^{\mathrm{SAM}}(H\boldZ_k; s_0)|       \geq t\}}{\sum_{k=1}^n \ind\{  |T^{\mathrm{SAM}}(\boldZ_k; s_0)|       \geq t\}},
\label{eq:samfdr}
\end{equation}
where $H\boldZ_k = (H_1 Z_{k1},\ldots, H_K Z_{kK})$ is a copy of $\boldZ_k$ with some entries (the ones for which $H_j=-1$) having their sign flipped. 
The FDR estimate in~\eqref{eq:samfdr} can be used to  determine which hypotheses to reject and
will be central to our proposal below.

\paragraph{SeqStep+ methods.}

Several approaches exploit symmetry to control the FDR using the Selective SeqStep+ procedure of~\citet{barber2015controlling}; see~\citet{arias-castro2017distributionfree, ge2021clipper, zou2020new, tian2025conformalized}. We revisit such an approach in Section~\ref{subsec:seqstep}.

\section{BH with compound randomization p-values}
\label{sec:compound_random}
In this section we present our compound randomization framework and establish FDR guarantees in a general group invariance setting.

\subsection{Preliminaries on multiple testing with compound p-values}

In the general multiple testing setting, 
we observe data $Z$ drawn from an unknown distribution $\mathbb P$ on an underlying sample space, and consider testing $n$ hypotheses $H_{0,i}: \mathbb P \in \mathcal{P}_i$ for $i \in [n] := \{1,\dots,n\}$, where each $\mathcal{P}_i$ is a set of distributions. A p-value for $H_{0,i}$ is a $[0,1]$-valued random variable $P_i=\mathrm{P}_i(Z)$ satisfying $\mathbb P[P_i\le t]\le t$ for all $t \in (0,1)$ and all $\mathbb P \in \mathcal{P}_i$.  
Compound p-values~\citep{armstrong2022false, ignatiadis2025asymptotic} relax the above per-hypothesis conditions to a single condition that averages over the true nulls; we will see below why this notion is central to our methodological development.
\begin{defi}[Compound p-values]
\label{defi:compound_pvalues}
Fix the null hypotheses $(\cP_1,\dots,\cP_n)$ and a set $\cP $ of distributions.
Let $P_1,\dots,P_n$ be nonnegative random variables. We say that 
$P_1,\dots,P_n$ are \emph{compound} p-values for $(\cP_1,\dots,\cP_n)$ under the model $\cP$ if 
$$
\sum_{i : \mathbb P \in \cP_i} \PP{P_i\leq t} \leq nt \qquad \mbox{for all $t\in (0,1)$ and all $\mathbb P\in \mathcal P$.}
$$
We omit ``under $\cP$'' in case $\cP$ is the set of all distributions. 
\end{defi}

A multiple testing procedure $\mathcal{D}$ maps data $Z$ to the indicator vector $\{0,1\}^n$, where $\mathcal{D}_i=1$ indicates that the $i$-th null hypothesis is rejected. The hypotheses rejected by  $\mathcal{D}$ are called discoveries. We are interested in procedures controlling the false discovery rate (FDR) of~\citet{benjamini1995controlling}, which is defined as
$$
\mathrm{FDR}[\mathcal{D}] \equiv \mathrm{FDR}[\mathcal{D}; \cP_1,\ldots,\cP_n; \mathbb P] := \EE[\mathbb P]{ \frac{\sum_{i=1}^n \ind\!\cb{\mathcal{D}_{i} = 1,\, \mathbb P \in \cP_i}}{1 \lor \sum_{i=1}^n \ind\cb{\mathcal{D}_{i} = 1}} },
$$
where we will usually keep the dependence on $\cP_1,\ldots,\cP_n$ and $\mathbb P$ implicit.

The most common procedure for controlling the FDR is the Benjamini-Hochberg (BH) procedure~\citep{benjamini1995controlling}.
\begin{defi}[BH]
\label{defi:pBH}
Let $P_1,\dots, P_n$ be nonnegative random variables. Let $P_{(i)}$ be the $i$-th order statistic of $P_1,\ldots,P_n$, from the smallest to the largest. Fix $\alpha, \tau \in (0,1)$. Then, the BH procedure at level $\alpha$ with $\tau$-censoring\footnote{
The $\tau$-censoring extension of BH has found uses for multiple testing with data-driven weights~\citep{li2019multiple, ignatiadis2021covariate} and for null-proportion adaptivity~\citep{storey2004strong, gao2026minstorey}. Here we require it for a different reason; see below.
} rejects hypothesis $H_{0,i}$ if $P_i$ is among the smallest $i_\alpha^*$ values $P_1,\dots,P_n$, where 
\begin{equation*} 
i_\alpha^*:=\max\left\{i \in [n]\,:\, P_{(i)} \le \p{\frac{i \alpha}{n}}\land \tau \right\},
\end{equation*}     
with the convention $\max(\emptyset) = 0$. We write $\mathrm{BH}(\alpha, \tau)$ for the resulting multiple testing procedure and by default we set $\tau=\alpha$ and write $\mathrm{BH}(\alpha) \equiv \mathrm{BH}(\alpha, \alpha)$.
\end{defi} 
FDR control of BH when its input consists of independent p-values is well-known~\citep{benjamini1995controlling}. Recently,~\citet[Theorem 1]{barber2026false} establish a finite-sample FDR control guarantee for BH when applied to independent compound p-values.
\begin{theo}[\citet{barber2026false}]
Suppose that $P_1,\dots,P_n$ are independent compound p-values for $\mathcal{P}_1,\ldots,\mathcal{P}_n$ under $\mathcal{P}$. Then the BH procedure at level $\alpha$ applied to $P_1,\dots,P_n$ controls the FDR at level $1.93\alpha$, that is,
$$
\mathrm{FDR}[\mathrm{BH}; \cP_1,\ldots,\cP_n; \mathbb P] \leq 1.93 \alpha \quad \text{for all } \mathbb P \in \mathcal{P}.
$$
\label{theo:bh}
\end{theo}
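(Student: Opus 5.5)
The plan is to decompose the FDR of BH into per-hypothesis contributions, bound each null term with a leave-one-out argument that uses independence, and then sum over nulls using only the compound condition. Write $\mathcal{H}_0 = \{i : \mathbb P \in \cP_i\}$ and $R = i^*_\alpha$ for the number of BH rejections. Then
\[
\mathrm{FDR} = \sum_{i\in\mathcal{H}_0} \sum_{k=1}^n \frac{1}{k}\,\PP{P_i \le \tfrac{k\alpha}{n}\land\tau,\ R = k}.
\]

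\textbf{Step 1 (leave-one-out).} Let $R_i(p)$ denote the number of BH rejections when $P_i$ is replaced by the value $p$ and the other p-values are left unchanged. On the event $\{P_i\le k\alpha/n,\ R=k\}$ we have $R = R_i(0)$. Hence $R_i(0)$ is a function of $P_{-i}$ alone, and since $P_i$ is independent of $P_{-i}$,
\[
\mathrm{FDR} \le \sum_{i\in\mathcal{H}_0} \mathbb{E}\Big[\frac{G_i(R_i(0)\alpha/n)}{R_i(0)\lor 1}\Big],
\qquad G_i(t) := \PP{P_i\le t\land\tau}.
\]

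\textbf{Step 2 (why the proof is not the classical one).} For genuine p-values we would have $G_i(t)\le t$, and the bound would give $\alpha\pi_0$ directly. Here only the average is controlled: $\sum_{i\in\mathcal{H}_0} G_i(t)\le nt$. Individual $G_i$ may exceed $t$, and these deviations are weighted by the random factors $1/R_i(0)$, which differ across $i$. The main obstacle is to exchange the sum over $i$ with the random thresholds $R_i(0)$. I would handle it in two moves.

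\textbf{Step 3a (a common denominator).} Replace each $R_i(0)$ by a quantity shared across $i$. Monotonicity of BH gives $R_i(0)\ge R^{(-i)}$, the number of rejections of BH at the adjusted level on the other $n-1$ p-values. In turn, $R^{(-i)} \ge R-1$.

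\textbf{Step 3b (a threshold-wise bound).} For a fixed threshold index $k$, write the null contribution as
\[
\sum_{i\in\mathcal{H}_0} G_i\!\left(\tfrac{k\alpha}{n}\right)\,\PP{R_i(0)=k}\,\tfrac{1}{k},
\]
sum over $k$, and bound $\PP{R_i(0)=k}$ by a single non-increasing sequence. Applying summation by parts, a level-$t$ expression of the form $\sum_k \frac{1}{k}\sum_i G_i(k\alpha/n)\,\Delta_k$ reduces to terms $\sum_i G_i(k\alpha/n)\le k\alpha$.

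\textbf{Step 4 (the constant).} The leftover factor is a sum of the form $\sum_k \bigl(\tfrac1k-\tfrac1{k+1}\bigr)\,k\alpha\cdot(\text{something bounded})$. Here I expect a worst-case optimization over adversarial $G_i$ with $\sum_i G_i(t)\le nt$, in which mass can be concentrated on few nulls at small thresholds. The censoring at $\tau$ prevents thresholds above $\tau$ from contributing. I would reduce this to a one-dimensional extremal problem, in which each $G_i$ is a step function placing mass at one BH threshold, and evaluate it numerically. The claim is that the resulting supremum is at most $1.93\alpha$. Carrying out this extremal computation precisely, rather than merely obtaining a crude factor such as $2\alpha$ or $\alpha\log n$, is the step I expect to be hardest.
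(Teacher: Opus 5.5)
The paper does not prove this theorem; it imports it as Theorem~1 of \citet{barber2026false}. Your proposal therefore has to stand on its own, and as written it does not. Your Step~1 is fine: it is in fact an identity, $\mathrm{FDR}=\sum_{i\in\mathcal{H}_0}\mathbb{E}[G_i(R_i(0)\alpha/n)/R_i(0)]$, and the censoring at $\tau=\alpha$ is vacuous since $k\alpha/n\le\alpha$. Step~2 correctly identifies the obstacle. But Steps~3--4, where the whole theorem lives, do not work.

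In Step~3a, the claim $R^{(-i)}\ge R-1$ is false for BH. Take $n=3$ and p-values $(2\alpha/3,2\alpha/3,1)$: BH gives $R=2$, but after deleting either small p-value the pair $(2\alpha/3,1)$ has no rejections at thresholds $k\alpha/3$ (nor at $k\alpha/2$). More fundamentally, lower bounds on $R_i(0)$ cannot help. For a compound null the map $k\mapsto G_i(k\alpha/n)/k$ need not be monotone (take $P_i$ a point mass at $5\alpha/n$), so shrinking $R_i(0)$ gives no inequality in either direction. Also, the common quantity $R$ depends on $P_i$, which destroys the independence that Step~1 relied on.

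In Step~3b, a common dominating sequence $\mathbb{P}(R_i(0)=k)\le w_k$ combined with $\sum_{i\in\mathcal{H}_0}G_i(k\alpha/n)\le k\alpha$ gives at best $\alpha\sum_k w_k$. Summation by parts cannot improve this, because the increments of $k\mapsto G_i(k\alpha/n)/k$ have no sign. Nor is $\sum_k w_k$ bounded by a constant: the leave-one-out counts can have very different laws across $i$. Requiring $w$ to be non-increasing makes this worse. If $n_1$ non-null p-values equal $0$, every $R_i(0)$ exceeds $n_1$, and a non-increasing sequence dominating its law has sum at least $(n_1+1)/n_0$.

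Step~4 then asserts the constant $1.93$ rather than deriving it. The reduction to a one-dimensional extremal problem with single-step $G_i$ is unjustified. The adversary who chooses the $G_i$ also determines the laws of the $R_j(0)$, since these are functions of the other null p-values, so the problem does not decouple across $i$. A numerical evaluation also cannot certify a bound uniform in $n$ and in $\mathbb{P}\in\mathcal{P}$.

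What is missing is an actual mechanism that compares the random, $i$-dependent thresholds $R_i(0)\alpha/n$ with a common threshold at which the compound constraint applies. It must have an $n$-free loss factor and must use independence essentially. Without it, your decomposition rigorously supports only the dependence-free bound. That bound comes from $\mathbf{1}\{p\le\alpha r/n\}/r\le\sum_{k=1}^n\mathbf{1}\{p\le\alpha k/n\}/\{k(k+1)\}+\mathbf{1}\{p\le\alpha\}/(n+1)$ together with the compound condition, and it gives $\mathrm{FDR}\le\alpha\sum_{k=1}^n 1/k$ --- exactly the $\alpha\log n$ rate you wanted to avoid.
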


We note that under an asymptotic regime as that of~\citet{storey2004strong} with weak dependence, the BH procedure with compound p-values controls the FDR at level $\alpha$ asymptotically~\citep{armstrong2022false,ignatiadis2025empirical}. Consequently, a practical compromise (adopted, e.g., in the data analysis of~\citet{barber2026false}) is to run BH at level $\alpha$ anyway, treating $1.93\alpha$ as a worst-case bound that is unlikely to be tight.

For finite sample FDR control at level $\alpha$, one can instead run BH at level $\alpha/1.93$. When bounds on the individual null tail probabilities are available,
a result of~\citet*{dohler2018new} provides an alternative to running
BH at level $\alpha/1.93$ that can be more powerful in some settings.
Below we adapt the reformulation of~\citet{durand2019discretefdr} and express it in terms of compound p-values.

The construction starts from test statistics with known null distributions. Let $S_1=S_1(Z),\ldots,S_n=S_n(Z)$, where larger values indicate more evidence against the null, and suppose that the right-tailed distribution functions
$
\xi_i(s):=\mathbb P[S_i\ge s]$ 
for $\mathbb P \in \mathcal{P}_i$
are known and do not depend on the choice of $\mathbb P\in\mathcal P_i$.
Then,~\citet[Proposition 7]{barber2026false} establish that
\begin{equation}
P_i := \frac{1}{n}\sum_{k=1}^n \xi_k(S_i),
\label{eq:tail_averaging} 
\end{equation}
are compound p-values for $\mathcal{P}_1,\ldots,\mathcal{P}_n$. Instead, consider the following more conservative construction for some $\tau \in (0,1)$:
\begin{equation}
P_i^{\mathrm{DDR}} := \frac{1}{n}\sum_{k=1}^n \frac{\xi_k(S_i)}{1-\xi_k(s_{\tau})}, \;\text{ where }\;
s_{\tau} := \inf\cb{s\,:\, \frac{1}{n}\sum_{k=1}^n \frac{\xi_k(s)}{1-\xi_k(s)} \leq \tau}.
\label{eq:compound_pvalues_ddr}
\end{equation}
We allow for $P_i^{\mathrm{DDR}}=\infty$. Since $P_i^{\mathrm{DDR}} \geq P_i$, the $P_i^{\mathrm{DDR}}$ are also compound p-values. The advantage is that now the BH procedure controls the FDR at level $\alpha$ instead of $1.93\alpha$.
\begin{theo}[\citet*{dohler2018new}]
\label{theo:ddr}
Let $P_1^{\mathrm{DDR}},\ldots,P_n^{\mathrm{DDR}}$ be defined as in~\eqref{eq:compound_pvalues_ddr} for some $\tau \in (0,1)$ and suppose they are independent under $\mathcal{P}$. Then the BH procedure at level $\alpha$ with $\tau$-censoring
applied to $P_1^{\mathrm{DDR}},\ldots,P_n^{\mathrm{DDR}}$ controls the FDR at level $\alpha$, that is,
$$
\mathrm{FDR}[\mathrm{BH}(\alpha, \tau); \cP_1,\ldots,\cP_n; \mathbb P] \leq \alpha \quad \text{for all } \mathbb P \in \mathcal{P}.
$$
\end{theo}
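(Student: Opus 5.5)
The plan is to run the usual leave-one-out argument for step-up procedures. The compound (rather than individual) validity of the $P_i^{\mathrm{DDR}}$ is then handled by the denominators $1-\xi_k(s_\tau)$: they let us condition on the event that hypothesis $i$ is \emph{not} rejected. That event is observable jointly for all $i$, so the sum over nulls can be moved inside a single expectation.

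First I record the structure. Since the $\xi_k$ are known and deterministic, so is $s_\tau$. Hence $P_i^{\mathrm{DDR}}=g(S_i)$ for the fixed nonincreasing function $g(s)=\frac1n\sum_k \xi_k(s)/\{1-\xi_k(s_\tau)\}$. Independence of the $S_i$ therefore transfers to the $P_i^{\mathrm{DDR}}$. Write $R$ for the number of rejections of $\mathrm{BH}(\alpha,\tau)$. Write $R_i^{0}$ for the number of rejections when $P_i^{\mathrm{DDR}}$ is replaced by $0$; this is a function of $S_{-i}$ only. Hypothesis $i$ is rejected iff $P_i^{\mathrm{DDR}}\le (\alpha R_i^0/n)\wedge\tau$, and on that event $R=R_i^0$. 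For $r\ge1$ let $s(r):=\inf\{s: g(s)\le (\alpha r/n)\wedge\tau\}$, so that $s(r)\ge s_\tau$ because the censoring caps thresholds at $\tau$. Conditioning on $S_{-i}$ and using that $\xi_i(s)=\mathbb P[S_i\ge s]$ for every null, we get
\[
\mathrm{FDR}\le \sum_{i:\mathbb P\in\mathcal P_i}\mathbb E\Big[\frac{\xi_i(s(R_i^0))}{R_i^0}\Big].
\]

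Second, I use the non-rejection event to symmetrize. Given $S_{-i}$, the conditional probability that $i$ is not rejected is at least $1-\xi_i(s(R_i^0))\ge 1-\xi_i(s_\tau)$, since $s(R_i^0)\ge s_\tau$ and $\xi_i$ is nonincreasing. Hence
\[
\mathbb E\Big[\frac{\xi_i(s(R_i^0))}{R_i^0}\Big]\le \mathbb E\Big[\frac{\xi_i(s(R_i^0))}{\{1-\xi_i(s_\tau)\}\,R_i^0}\,\mathbb 1\{i\text{ not rejected}\}\Big].
\]
On the event $\{i\text{ not rejected}\}$ I want to replace $R_i^0$ by a quantity common to all $i$, namely a function of the realized $R$. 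The natural candidate comes from the step-up property: on this event, setting $P_i$ to $0$ gives $R_i^0\ge R+1$. If one can show that the relevant quantity can be evaluated at $R+1$ (see the last paragraph), summing over $i$ gives
\[
\sum_i \frac{\xi_i(s(R+1))}{1-\xi_i(s_\tau)}\frac{1}{R+1}= \frac{n\,g(s(R+1))}{R+1}\le \alpha,
\]
using $g(s(r))\le \alpha r/n$ by right-continuity of the tail functions. This yields $\mathrm{FDR}\le\alpha$.

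The main obstacle is exactly that replacement step. The map $r\mapsto \xi_i(s(r))/r$ is not monotone, and on the non-rejection event $R_i^0$ may exceed $R+1$. So a naive bound fails, and one needs the finer step-up self-consistency used by \citet{dohler2018new} and \citet{durand2019discretefdr}. I would first try to decompose $\mathbb 1\{R_i^0=r\}$ over $r$, and to note that $R_i^0=r$ together with $i$ not rejected forces the BH count of the other $n-1$ values to be compatible with $r$. The aim is then to bound $\sum_i \xi_i(s(r))/[\{1-\xi_i(s_\tau)\} r]\cdot\mathbb 1\{R_i^0=r,\ i\text{ not rejected}\}$ by $\alpha$ times an indicator of a single event $\{R^{(-)}=r-1\}$ defined without reference to $i$. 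Failing that, I would fall back on the original argument of \citet{dohler2018new}, reexpressed through the compound p-value normalization in~\eqref{eq:compound_pvalues_ddr}.
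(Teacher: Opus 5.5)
The paper does not prove this statement; it quotes it from \citet{dohler2018new}. Your proposal is not yet a proof. The leave-one-out bound and the reweighting by $1/\{1-\xi_i(s_\tau)\}$, which uses independence of $S_i$ and $S_{-i}$, are fine. But the step that carries all the content is left open, and the route you set up for it cannot be completed. Write $c_r:=(\alpha r/n)\wedge\tau$. On $\{i\text{ not rejected}\}$, $R_i^0$ is not a common function of the data across indices. For example, take $n=2$, $\tau=\alpha$, $P_1^{\mathrm{DDR}}>\alpha$ and $P_2^{\mathrm{DDR}}\in(\alpha/2,\alpha]$. Nothing is rejected, yet $R_1^0=2=R+2$ and $R_2^0=1$. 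A non-rejected $P_i^{\mathrm{DDR}}$ can still fall below some larger critical value $c_r$ and so still affect the counts. All you retain is $R_i^0\ge R+1$, which is useless because $r\mapsto\xi_i(s(r))/r$ is not monotone.

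The missing idea is to condition on a smaller event and to compare with a different common statistic. Your guess of an event $\{R^{(-)}=r-1\}$ has the right shape.

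Take $E_i:=\{P_i^{\mathrm{DDR}}>\tau\}$. On this event $i$ can never be rejected, whatever the other values are. Define the shifted step-up count
\[
\tilde R:=\max\bigl\{j\in\{0,\ldots,n-1\}:\#\{l\in[n]:P_l^{\mathrm{DDR}}\le c_{j+1}\}\ge j\bigr\}.
\]
We have $R_i^0=1+\max\{j:\#\{l\neq i:P_l^{\mathrm{DDR}}\le c_{j+1}\}\ge j\}$, and $P_i^{\mathrm{DDR}}>\tau\ge c_{j+1}$ on $E_i$. Hence $R_i^0=\tilde R+1$ exactly on $E_i$; note that $\tilde R\neq R$ in general.

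Next, $\{S_i<s_\tau\}\subseteq E_i$. For $s<s_\tau$, $\xi_k(s)\ge\xi_k(s_\tau)$ gives $g(s)\ge\frac1n\sum_k\xi_k(s_\tau)/\{1-\xi_k(s_\tau)\}\ge\tau$, the last step by left-continuity. Equality would force $\xi_k(s)=\xi_k(s_\tau)$ for all $k$, which contradicts $s<s_\tau$. Hence $\mathbb P[E_i]\ge1-\xi_i(s_\tau)$.

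Let $G_i(c)$ be the null probability of $\{P_i^{\mathrm{DDR}}\le c\}$, which is computable from $\xi_i$ alone. Independence then gives, for each null $i$,
\[
\mathbb E\Bigl[\frac{G_i(c_{R_i^0})}{R_i^0}\Bigr]=\frac{\mathbb E\bigl[G_i(c_{\tilde R+1})\,\ind\{E_i\}/(\tilde R+1)\bigr]}{\mathbb P[E_i]}\le\frac{1}{1-\xi_i(s_\tau)}\,\mathbb E\Bigl[\frac{G_i(c_{\tilde R+1})}{\tilde R+1}\Bigr].
\]
Summing over all $i$, the non-null terms being nonnegative, yields $\mathrm{FDR}\le\mathbb E[n\,c_{\tilde R+1}/(\tilde R+1)]\le\alpha$. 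This uses $\frac1n\sum_i G_i(c)/\{1-\xi_i(s_\tau)\}\le c$ for $c\le\tau$, which follows by the two-case argument in the proof of Theorem~\ref{theo:compound-p}.

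That last point also fixes a slip in your final display. The function $\xi_k(s)=\mathbb P[S_k\ge s]$ is left-continuous, not right-continuous. With atoms, which is exactly the discrete setting of \citet{dohler2018new}, the infimum defining $s(r)$ need not be attained, and $g(s(r))\le\alpha r/n$ can fail. You must work with the exact rejection probabilities $G_i(c_r)$, which equal $\xi_i(s(r))$ or its right limit, rather than with $\xi_i(s(r))$.
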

To compare Theorem~\ref{theo:bh} and the above result, take $\tau=\alpha$. If $\min_{k}\{1-\xi_k(s_{\alpha})\} \geq 1/1.93$, then $\mathrm{BH}(\alpha)$ applied to $P_1^{\mathrm{DDR}},\ldots,P_n^{\mathrm{DDR}}$ makes at least as many discoveries as $\mathrm{BH}(\alpha/1.93)$ applied to $P_1,\ldots,P_n$.

\citet{durand2019discretefdr} only consider the case $\tau=\alpha$. Allowing $\tau<\alpha$ has been proposed recently in \cite{perier2026confidence} and introduces a useful trade-off: decreasing $\tau$ can make the modified compound p-values ($P_i^{\mathrm{DDR}}$) smaller, but also restricts BH to rejecting hypotheses with $P_i^{\mathrm{DDR}}\leq\tau$.
One possible default is $\tau=\alpha/10$.

\subsection{Procedures based upon compound randomization p-values}

We now specialize the above general framework to the special case wherein $Z= (\boldZ_1,\ldots,\boldZ_n)$, where $\boldZ_i$ takes values in a space $\mathcal{Z}$. 
We assume that under any distribution $\mathbb P \in \mathcal{P}$, the $\boldZ_i$ are independent across $i$ (that is, whenever we write $\mathbb P \in \mathcal{P}$, we assume that $\boldZ_1,\ldots,\boldZ_n$ are independent under $\mathbb P$).
Moreover, we fix $\mathcal{H}$, a compact group acting continuously on $\mathcal{Z}$, and let $\mu_{\mathcal{H}}$ be the Haar measure on $\mathcal{H}$. This measure is right invariant: if $H\sim\mu_{\mathcal{H}}$, then $HG\sim\mu_{\mathcal{H}}$ for every fixed $G\in\mathcal{H}$.

We specify the $i$-th null hypothesis as
\begin{equation}
\mathcal{P}_i\equiv \mathcal{P}_i(\mathcal{H}):= \{\mathbb P \in \mathcal{P}\,:\, \boldZ_i \stackrel{\mathcal{D}}{=} H \boldZ_i \,\text{ for all } \, H \in \mathcal{H}\}.
\label{eq:group_null}
\end{equation}
Let $\mathcal{O}_i := \mathcal{H} \cdot \boldZ_i = \cb{H\boldZ_i : H \in \mathcal{H}}$ be the group orbit of $\boldZ_i$ and $\mathcal{O}_{1:n} = (\mathcal{O}_1,\ldots,\mathcal{O}_n)$.

The main proposal in this paper is the following three-step procedure:
\begin{mdframed}[nobreak=true]
\refstepcounter{procedure}\label{proc:compound_pvalues}
\textbf{Procedure \theprocedure\ (Compound BH: p-value formulation).}

\begin{enumerate}
\item Use $\mathcal{O}_{1:n}$ to learn a score function $\widehat{S}(\cdot)$ that takes the $\boldZ_i$ as input.
\item Compute compound randomization p-values,
\begin{equation}
\begin{aligned}
&\pcomp_i := \pcompfun(\boldZ_i; \widehat{S}(\cdot), \mathcal{O}_{1:n}, \mathcal{H}),\, \text{ where } \\ 
&\pcompfun(\boldz; \widehat{S}, \mathcal{O}_{1:n}, \mathcal{H}) := \frac{1}{n} \sum_{k=1}^n \int_{\mathcal{H}} \ind\cb{ \widehat{S}(H \boldZ_k) \geq \widehat{S}(\boldz)} \dd\mu_{\mathcal{H}}(H).
\end{aligned}
\label{eq:compound_pvalue_fun}
\end{equation}
\item Apply the $\mathrm{BH}(\alpha)$ procedure to $\pcomp_1,\ldots,\pcomp_n$.
\end{enumerate}
\end{mdframed}
When the group is the group of permutations and $\widehat{S}(\cdot) \equiv S(\cdot)$ is a fixed score, then the above procedure reduces to one of the proposals of~\citet{barber2026false}. 
For limma (Section~\ref{sec:compound_limma}), this lets us borrow information across genes to estimate their variances, without losing the finite sample FDR guarantee.

Our main result is as follows.
\begin{theo}
Suppose we are testing the nulls 
$\mathcal{P}_1(\mathcal{H}),\ldots,\mathcal{P}_n(\mathcal{H})$
in~\eqref{eq:group_null}. Also suppose that $\mathcal{P}$ is such that $\boldZ_i$ are jointly independent across $i$. 
Then conditional on the orbits $\mathcal{O}_{1:n}$, the compound randomization p-values $\pcomp_1,\ldots,\pcomp_n$ in~\eqref{eq:compound_pvalue_fun} are jointly independent compound p-values for $\mathcal{P}_1(\mathcal{H}),\ldots,\mathcal{P}_n(\mathcal{H})$ under $\mathcal{P}$. In particular, the BH procedure at level $\alpha$ applied to $\pcomp_1,\ldots,\pcomp_n$ controls the FDR at level $1.93\alpha$, that is,
$$
\mathrm{FDR}[\mathrm{BH}(\alpha); \cP_1(\mathcal{H}),\ldots,\cP_n(\mathcal{H}); \mathbb P] \leq 1.93 \alpha \quad \text{for all } \mathbb P \in \mathcal{P}.
$$
\label{theo:compound-p}
\end{theo}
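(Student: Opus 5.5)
The plan is to condition on the orbits $\mathcal{O}_{1:n}$ and reduce to the setting of~\eqref{eq:tail_averaging}, then invoke Theorem~\ref{theo:bh} conditionally.

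\textbf{Conditional structure.} The first step is to study the conditional law of each $\boldZ_i$ given $\mathcal{O}_{1:n}$. Since the $\boldZ_i$ are independent under $\mathbb P \in \mathcal{P}$ and $\mathcal{O}_i$ is a function of $\boldZ_i$ alone, the $\boldZ_i$ remain jointly independent conditional on $\mathcal{O}_{1:n}$. Moreover, the conditional law of $\boldZ_i$ given $\mathcal{O}_{1:n}$ coincides with its law given $\mathcal{O}_i$. The key fact for a null $i$ (i.e.\ $\mathbb P\in\mathcal{P}_i(\mathcal{H})$) is that, conditional on $\mathcal{O}_i$, $\boldZ_i$ has the same law as $H\boldZ_i$ with $H\sim\mu_{\mathcal{H}}$ drawn independently, that is, the uniform (Haar-induced) distribution on the orbit.

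To prove this, let $H\sim\mu_{\mathcal{H}}$ be independent of $\boldZ_i$. Invariance gives $(\boldZ_i)\stackrel{\mathcal D}{=}(H\boldZ_i)$, since $G\boldZ_i\stackrel{\mathcal D}{=}\boldZ_i$ for each fixed $G$. The orbit of $H\boldZ_i$ equals $\mathcal{O}_i$, so $(\boldZ_i,\mathcal{O}_i)\stackrel{\mathcal D}{=}(H\boldZ_i,\mathcal{O}_i)$. Right invariance then shows that, given $\boldZ_i=\boldz$, $H\boldz$ is Haar-distributed on the orbit. This distribution depends on $\boldz$ only through its orbit, because $HG\sim\mu_{\mathcal{H}}$. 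Measurability of orbits under a compact group acting continuously is standard and I would cite it.

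\textbf{Reduction to tail averaging.} Because $\widehat{S}$ depends only on $\mathcal{O}_{1:n}$, it is a fixed function conditionally. Define $S_i:=\widehat{S}(\boldZ_i)$ and
\[
\xi_k(s):=\int_{\mathcal{H}}\ind\{\widehat{S}(H\boldZ_k)\ge s\}\,\dd\mu_{\mathcal{H}}(H).
\]
Each $\xi_k$ is a function of $\mathcal{O}_k$ only: replacing $\boldZ_k$ by $G\boldZ_k$ and using right invariance leaves it unchanged. Then $\pcomp_i=\frac1n\sum_k\xi_k(S_i)$, which is exactly~\eqref{eq:tail_averaging}, with $\xi_i$ the conditional right-tail function of $S_i$ for every null $i$, by the previous step. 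The $\pcomp_i$ are conditionally independent, since each is a deterministic function of $\boldZ_i$ given $\mathcal{O}_{1:n}$.

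\textbf{Compound property and conclusion.} Rather than only citing \citet[Proposition 7]{barber2026false}, I would verify the compound property directly. For nulls $i$,
\[
\PP{\pcomp_i\le t\mid\mathcal{O}_{1:n}}=\int\ind\{\bar\xi(\widehat S(H\boldZ_i))\le t\}\,\dd\mu_{\mathcal{H}}(H),
\quad\text{where } \bar\xi:=\tfrac1n\textstyle\sum_k\xi_k .
\]
Summing over nulls and bounding by the sum over all $i$ gives
\[
\sum_{i\ \text{null}}\PP{\pcomp_i\le t\mid\mathcal{O}_{1:n}}\le n\,\mathbb Q[\bar\xi(W)\le t],
\]
where $W$ is drawn from the mixture $\mathbb Q$ that picks $k$ uniformly and then $\widehat S(H\boldZ_k)$. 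The right-tail function of $W$ under $\mathbb Q$ is exactly $\bar\xi$. The main obstacle is the standard fact $\mathbb Q[\bar\xi(W)\le t]\le t$, which handles atoms and ties. To prove it, I would let $s^*=\inf\{s:\bar\xi(s)\le t\}$. The event $\{\bar\xi(W)\le t\}$ is contained in $\{W\ge s^*\}$ or $\{W>s^*\}$, depending on whether $\bar\xi(s^*)\le t$, and in both cases its probability is at most $t$ by right-continuity and monotonicity arguments.

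Finally, Theorem~\ref{theo:bh} applies conditionally on $\mathcal{O}_{1:n}$, with the null set fixed. Taking expectations over $\mathcal{O}_{1:n}$ then yields the $1.93\alpha$ bound.
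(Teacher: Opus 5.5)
Your proposal is correct and follows essentially the same route as the paper's proof. The paper also conditions on $\mathcal{O}_{1:n}$, shows via right invariance that a null $\boldZ_i$ is Haar-distributed on its orbit (using the averaging operator $Qf$), and writes each compound p-value as the pooled orbit tail function evaluated at $\widehat S(\boldZ_i)$. It then bounds the sum over nulls by $n$ times the pooled tail through the same two-case argument at $s^*(t)$, and applies Theorem~\ref{theo:bh} conditionally; recasting that two-case step as super-uniformity of $\bar\xi(W)$ under the mixture $\mathbb Q$ is only a cosmetic repackaging.
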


We can rewrite Procedure~\ref{proc:compound_pvalues} equivalently in terms of a score threshold, see Procedure~\ref{proc:compound_threshold} in Supplement~\ref{subsec:compound_threshold}. In view of the latter representation (and using the group $\mathcal{H} = \{-1,+1\}^K$ which we discuss in more detail in Section~\ref{sec:compound_limma}), Theorem~\ref{theo:compound-p} also establishes FDR control for the SAM procedure in~\eqref{eq:samfdr} as long as $s_0$ is estimated from the orbits.

We can also apply the DDR construction in~\eqref{eq:compound_pvalues_ddr} conditional on the orbits. For a fixed $\tau\in(0,1)$, this gives the following procedure.

\begin{mdframed}[nobreak=true]
\refstepcounter{procedure}\label{proc:compound_ddr}
\textbf{Procedure \theprocedure\ (DDR compound BH).}

\begin{enumerate}
\item Use $\mathcal{O}_{1:n}$ to learn a score function $\widehat{S}(\cdot)$ that takes the $\boldZ_i$ as input.
\item Compute the modified compound randomization p-values,
\begin{equation}
P_i^{\mathrm{DDR}} := \frac{1}{n}\sum_{k=1}^n
\frac{\int_{\mathcal H}\ind\cb{\widehat{S}(H\boldZ_k)\geq\widehat{S}(\boldZ_i)}\dd\mu_{\mathcal H}(H)}
{1-\int_{\mathcal H}\ind\cb{\widehat{S}(H\boldZ_k)\geq s_\tau}\dd\mu_{\mathcal H}(H)},
\label{eq:compound_randomization_ddr}
\end{equation}
where
$$
s_\tau:=\inf\Bigg\{s:\frac{1}{n}\sum_{k=1}^n
\frac{\int_{\mathcal H}\ind\cb{\widehat{S}(H\boldZ_k)\geq s}\dd\mu_{\mathcal H}(H)}
{1-\int_{\mathcal H}\ind\cb{\widehat{S}(H\boldZ_k)\geq s}\dd\mu_{\mathcal H}(H)}\leq\tau\Bigg\}.
$$
\item Apply the $\mathrm{BH}(\alpha,\tau)$ procedure to $P_1^{\mathrm{DDR}},\ldots,P_n^{\mathrm{DDR}}$.
\end{enumerate}
\end{mdframed}

Applying Theorem~\ref{theo:ddr} conditional on the orbits gives the following result.
\begin{theo}
\label{theo:compound-ddr}
Under the assumptions of Theorem~\ref{theo:compound-p}, Procedure~\ref{proc:compound_ddr} with any fixed $\alpha,\tau\in(0,1)$ controls the FDR at level $\alpha$.
\end{theo}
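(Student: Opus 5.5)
The plan is to condition on the orbits $\mathcal{O}_{1:n}$ and reduce to Theorem~\ref{theo:ddr}, applied conditionally. This is the same reduction that proves Theorem~\ref{theo:compound-p}.

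\emph{Step 1: conditional structure.} Fix $\mathbb P\in\mathcal P$ and condition on $\mathcal{O}_{1:n}$. The $\boldZ_i$ are independent, and $\mathcal{O}_i$ is a function of $\boldZ_i$ alone. Hence, conditionally on $\mathcal{O}_{1:n}$, the $\boldZ_i$ remain independent, and the conditional law of $\boldZ_i$ depends only on $\mathcal{O}_i$. Moreover, for $i$ with $\mathbb P\in\mathcal P_i(\mathcal H)$, the invariance $\boldZ_i\stackrel{\mathcal D}{=}H\boldZ_i$ for all $H\in\mathcal H$ implies that, given $\mathcal{O}_i$, $\boldZ_i$ is distributed as $H\boldz_0$ with $H\sim\mu_{\mathcal H}$, for any representative $\boldz_0\in\mathcal{O}_i$.

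To justify this, let $H\sim\mu_{\mathcal H}$ be independent of $\boldZ_i$. Invariance gives $(H\boldZ_i,\mathcal{O}_i)\stackrel{\mathcal D}{=}(\boldZ_i,\mathcal{O}_i)$, because the orbit of $H\boldZ_i$ equals $\mathcal{O}_i$. By right invariance of Haar measure, given $\boldZ_i$ the variable $H\boldZ_i$ is uniform on the orbit, meaning it has the law of $H\boldz_0$. I expect this to be the same lemma used for Theorem~\ref{theo:compound-p}, so I would cite it rather than reprove it.

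\emph{Step 2: identify the ingredients of Theorem~\ref{theo:ddr}.} Given $\mathcal{O}_{1:n}$, the learned score $\widehat S$ is a fixed (deterministic) function. Set $S_i:=\widehat S(\boldZ_i)$ and
\[
\xi_k(s):=\int_{\mathcal H}\ind\cb{\widehat S(H\boldZ_k)\ge s}\dd\mu_{\mathcal H}(H).
\]
By right invariance, $\xi_k$ depends on $\boldZ_k$ only through $\mathcal{O}_k$, so it is known conditionally. By Step~1, for every null $k$ we have $\mathbb P[S_k\ge s\mid\mathcal{O}_{1:n}]=\xi_k(s)$. This equality is exact and the same for every $\mathbb P\in\mathcal P_k(\mathcal H)$, which is exactly the ``known null right-tail function'' hypothesis.

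With these definitions, \eqref{eq:compound_randomization_ddr} and the definition of $s_\tau$ in Procedure~\ref{proc:compound_ddr} coincide literally with \eqref{eq:compound_pvalues_ddr}. Conditional independence of $P_1^{\mathrm{DDR}},\ldots,P_n^{\mathrm{DDR}}$ needs a check, since each one uses all the $\xi_k$ and the common threshold $s_\tau$. These are measurable with respect to $\mathcal{O}_{1:n}$, hence constant given the conditioning. So $P_i^{\mathrm{DDR}}$ is a deterministic function of $\boldZ_i$ alone, and the $P_i^{\mathrm{DDR}}$ inherit conditional independence from Step~1.

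\emph{Step 3: conclude.} In the conditional model, the null units have the exact tails $\xi_k$ and the non-null units are arbitrary but independent. Theorem~\ref{theo:ddr} therefore yields
\[
\mathbb E\bigl[\mathrm{FDP}\mid\mathcal{O}_{1:n}\bigr]\le\alpha \quad\text{almost surely.}
\]
Taking expectations gives $\mathrm{FDR}\le\alpha$.

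The main obstacle is making sure Theorem~\ref{theo:ddr} genuinely applies in the conditional model. Its proof requires, for each null $i$, the tail identity $\mathbb P[S_i\ge s]=\xi_i(s)$ at all $s$, including at atoms of $S_i$, since $\widehat S(H\boldZ_i)$ can have ties. The version here uses the non-strict inequality $\ge$ on both sides, so the identity holds exactly, including at atoms. I would state Theorem~\ref{theo:ddr} as holding for an arbitrary product measure whose null marginals have these tails, and apply it to the regular conditional distribution given $\mathcal{O}_{1:n}$. Measurability of $\widehat S$ and of $s_\tau$ as functions of the orbits would be assumed, as in Theorem~\ref{theo:compound-p}.
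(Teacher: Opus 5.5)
Your proposal is correct and is exactly the paper's argument. The paper proves Theorem~\ref{theo:compound-ddr} in one line by applying Theorem~\ref{theo:ddr} conditionally on $\mathcal{O}_{1:n}$, relying on the Haar conditional law \eqref{eq:haar_conditional_orbit}--\eqref{eq:conditional_score_survival} and the conditional independence established in the proof of Theorem~\ref{theo:compound-p}; your Steps 1--3 simply spell out what the paper leaves implicit (orbit-measurability of the $\xi_k$ and $s_\tau$, exact null tails including at atoms, conditional independence, and averaging over the orbits).
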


\subsection{Separate, per-hypothesis randomization p-values}

It is useful to contrast the compound randomization p-values in~\eqref{eq:compound_pvalue_fun} with the following separate, per-hypothesis randomization p-values (see~\citet{ritzwoller2026randomization} for a recent review):
\begin{equation}
\begin{aligned}
&\psep_i := \psepfun(\boldZ_i; \widehat{S}(\cdot), \mathcal{O}_i, \mathcal{H}),\, \text{ where } \\ 
&\psepfun(\boldz; \widehat{S},\mathcal{O}_i, \mathcal{H}) := \int_{\mathcal{H}} \ind\cb{ \widehat{S}(H \boldZ_i) \geq \widehat{S}(\boldz)} \dd\mu_{\mathcal{H}}(H).
\end{aligned}
\label{eq:sep_pvalue_fun}
\end{equation}
Note that under the null hypothesis in~\eqref{eq:group_null}, we have that,
$$
\PP{\psep_i \leq t \mid \mathcal{O}_i} \leq t \quad \text{for all } t \in (0,1) \text{ and all } \mathbb P \in \mathcal{P}_i(\mathcal{H}),
$$
that is, the $\psep_i$ are valid p-values for each null hypothesis $\mathcal{P}_i(\mathcal{H})$. Conditional on $\mathcal{O}_{1:n}$, the $\psep_i$ are independent across $i$ and remain valid null p-values. Thus, the BH procedure applied to $\psep_1,\ldots,\psep_n$ controls the FDR at level $\alpha$ by conditioning on $\mathcal{O}_{1:n}$, even if the score function $\widehat{S}(\cdot)$ is learned from these orbits.

Compound randomization has two advantages over separate randomization:
\begin{enumerate}[leftmargin=*, labelsep=4pt, itemindent=0pt]
\item \textbf{Resolution:} As noted by~\citet{barber2026false}, compound randomization p-values have better resolution in the sense that if the group $\mathcal{H}$ is finite, then the smallest possible value of $\psep_i$ is $1/|\mathcal{H}|$, whereas the smallest possible value of $\pcomp_i$ is $1/(n|\mathcal{H}|)$; with multiplicity correction this can lead to more discoveries when the number of hypotheses $n$ is large.
\item \textbf{Ranking:} The compound p-values maintain the ranking of the scores $\widehat{S}(\boldZ_i)$, whereas the separate p-values do not. In particular, if \smash{$\widehat{S}(\boldZ_i) \geq \widehat{S}(\boldZ_j)$}, then \smash{$\pcomp_i \leq \pcomp_j$}, but it is possible that \smash{$\psep_i > \psep_j$}. If the score function \smash{$\widehat{S}(\cdot)$} is powerful for detecting non-null hypotheses, then the compound randomization p-values may be substantially more powerful than the separate randomization p-values.
\end{enumerate}

Our theory and simulations will seek to emphasize the above two points.\footnote{The advantage of separate randomization is that it provides individually valid p-values for each $i$.} In discussing SAM's pooling of null distributions as in~\eqref{eq:samfdr},~\citet{storey2003sam} and~\citet{kerr2009comments} also bring up these disadvantages of separate randomization p-values.

\subsection{Selective SeqStep+ for small groups}
\label{subsec:seqstep}

We now briefly discuss the important case wherein the group $\mathcal{H}$ is finite with small cardinality, that is, $\kappa:=|\mathcal{H}|-1 < \infty$ and $\mathcal{H}=\{\mathrm{id}, H_1,\ldots, H_{\kappa}\}$. In that case, it is clear that $P_i^{\mathrm{sep}}$ in~\eqref{eq:sep_pvalue_fun} must be $\geq 1/(\kappa+1)$, and so BH at level  $\alpha < 1/(\kappa+1)$ cannot make any discoveries. 
It turns out that even with $\pcomp_i$ (which could be as small as $1/\{n(\kappa+1)\}$), BH cannot make any discoveries if $\alpha < 1/(\kappa+1)$. 

Although this is an important caveat of compound BH, it does not contradict
our main claim that compound randomization can substantially improve power
over separate randomization. First, one of our main examples, orthogonal rotations, uses an infinite group. Moreover, for finite groups, compound BH can be very useful for increasing power once $\alpha$ exceeds $1/(\kappa+1)$. In particular,
Theorem~\ref{theo:sparse_regime}(iv) shows that compound BH with sign flips
asymptotically matches the power of the oracle local false discovery rate
procedure at a reduced testing level (with the reduction becoming exponentially small as the number of replicates $K$ increases), while separate sign flip BH has asymptotically zero power.

\begin{prop}
Suppose $\kappa:=|\mathcal{H}|-1 < \infty$ and that $\alpha < 1/(\kappa+1)$. Then both Procedures~\ref{proc:compound_pvalues} and~\ref{proc:compound_ddr} at level $\alpha$ make zero discoveries.
\label{prop:no_discoveries}
\end{prop}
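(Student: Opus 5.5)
The key observation is that the identity belongs to $\mathcal{H}$ and carries Haar mass $1/(\kappa+1)$. Since $\mathcal{H}$ is finite, $\mu_{\mathcal{H}}$ is the uniform distribution on $\{\mathrm{id},H_1,\ldots,H_\kappa\}$. For every $k$, the identity term therefore contributes $\frac{1}{\kappa+1}\ind\{\widehat{S}(\boldZ_k)\geq \widehat{S}(\boldz)\}$ to the inner integral in~\eqref{eq:compound_pvalue_fun}.

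\textbf{Procedure~\ref{proc:compound_pvalues}.} Fix any $i$ and drop all non-identity terms (they are nonnegative). This gives
\[
\pcomp_i \;\geq\; \frac{1}{n(\kappa+1)}\,\#\{k:\widehat{S}(\boldZ_k)\geq \widehat{S}(\boldZ_i)\}.
\]
Now order the units by decreasing score, with ties broken in any way. The unit with the $j$-th smallest p-value has at least $j$ units with score at least its own. This uses the ranking property: larger score means smaller $\pcomp$, and tied scores give equal p-values, so a tie only increases the count. Hence
\[
\pcomp_{(j)}\;\geq\; \frac{j}{n(\kappa+1)}\;>\;\frac{j\alpha}{n}
\]
for every $j$, because $\alpha<1/(\kappa+1)$. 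So no index satisfies $\pcomp_{(j)}\le j\alpha/n$, which gives $i^*_\alpha=0$ and no rejections. The $\tau$-censoring only makes rejection harder, so it does not affect this conclusion.

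\textbf{Procedure~\ref{proc:compound_ddr}.} The same argument applies once we check that the denominators in~\eqref{eq:compound_randomization_ddr} lie in $(0,1]$, or are zero, in which case we take $P_i^{\mathrm{DDR}}=\infty$. Each numerator dominates the corresponding term of $\pcomp_i$, so $P_i^{\mathrm{DDR}}\ge \pcomp_i$. The order-statistic bound $P^{\mathrm{DDR}}_{(j)}\geq j/\{n(\kappa+1)\}>j\alpha/n$ then follows exactly as before. Ordering is again monotone in the score, since the denominators do not depend on $i$.

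The only delicate step is the tie and ordering argument that turns the pointwise bound into a bound on the order statistics. This is handled by noting that $\pcomp_i$ and $P_i^{\mathrm{DDR}}$ are nonincreasing functions of $\widehat{S}(\boldZ_i)$ alone, with the same function for every $i$.
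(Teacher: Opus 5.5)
Your proof is correct and takes essentially the same route as the paper: the identity element's Haar mass $1/(\kappa+1)$ (the self-comparison) forces the ordered compound p-values to satisfy $P_{(j)}\geq j/\{n(\kappa+1)\}>j\alpha/n$, so BH rejects nothing. For the DDR variant, the paper simply notes that it never makes more discoveries than compound BH, which is what your domination $P_i^{\mathrm{DDR}}\geq P_i$ together with your remark on $\tau$-censoring establishes.
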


\begin{proof}
We show the result for Procedure~\ref{proc:compound_pvalues} which always makes at least as many discoveries as Procedure~\ref{proc:compound_ddr}.
Let us write $\pcomp_{(1)} \leq  \ldots  \leq \pcomp_{(n)}$ for the order statistics of $\pcomp_1, \ldots, \pcomp_n$. The self-comparison inherent in~\eqref{eq:compound_pvalue_fun} (i.e., the fact that for each $i$ we also include $H=\mathrm{id}$ in the comparison) implies that:
$$
\pcomp_{(i)} \geq \frac{i}{n(\kappa+1)} > \frac{i \alpha}{n}, 
$$
where the second strict inequality follows from the assumption that  $\alpha < 1/(\kappa+1)$. Thus \smash{$\pcomp_{(i)}$} all lie above the critical values of the BH procedure, and so BH makes no discoveries whatsoever.
\end{proof}
Given the above, for situations with small $\kappa=|\mathcal{H}|-1$ we propose an alternative procedure that is based on the Selective SeqStep+ procedure of~\citet{barber2015controlling}. The formulation is analogous to the multi-knockoff construction of~\citet{gimenez2019improving} with the maximum contrast score of~\citet{ge2021clipper}.   As with compound BH, we allow the score to be learned from the orbits $\mathcal{O}_{1:n}$. This will allow us to use moderated t-statistics that borrow information about the variances across hypotheses, while retaining finite sample FDR control.

\begin{mdframed}[nobreak=true]
\refstepcounter{procedure}\label{proc:seqstep_multi}
\textbf{Procedure \theprocedure\ (Selective SeqStep+ for a finite group of size $\kappa+1$).}
\begin{enumerate}
\item Use $\mathcal{O}_{1:n}$ to learn a nonnegative score function $\widehat{S}(\cdot)$ that takes the $\boldZ_i$ as input.
\item Compute the signed scores
$$
\tilde{S}_i:=\max_{H\in\mathcal{H}}\widehat{S}(H\boldZ_i)\cdot\mathrm{sign}\Big(\widehat{S}(\boldZ_i)-\max_{H\in\mathcal{H}\setminus\{\mathrm{id}\}}\widehat{S}(H\boldZ_i)\Big),
$$
with the convention $\mathrm{sign}(0)=0$.
For each $s>0$, let
\begin{equation}
\widehat{\mathrm{FDR}}(s):=
\frac{\kappa^{-1}+\kappa^{-1}\sum_{i=1}^n\ind\{\tilde{S}_i\leq-s\}}
{1\lor\sum_{i=1}^n\ind\{\tilde{S}_i\geq s\}},
\label{eq:fdr_seqstep_multi}
\end{equation}
and compute the threshold $\hat{s}:=\inf\!\big\{s\in\big\{|\tilde{S}_1|,\ldots,|\tilde{S}_n|\big\}\setminus\{0\}\,:\,\widehat{\mathrm{FDR}}(s)\leq\alpha\big\}$.
\item Reject all $i$ with $\tilde{S}_i\geq\hat{s}$.
\end{enumerate}
\end{mdframed}
Among hypotheses $i$ that could be rejected, $|\tilde{S}_i|$ depends only on $\mathcal{O}_{1:n}$. This gives the following guarantee.
\begin{theo}
\label{theo:seqstep_multi}
Suppose $\kappa:=|\mathcal{H}|-1 < \infty$.
Conditional on the orbits $\mathcal{O}_{1:n}$, the $\tilde S_i$ are jointly independent and satisfy $\mathbb P[\tilde S_i>0\mid\mathcal{O}_{1:n}]\leq1/(\kappa+1)$ for all null $i$ with $\mathbb P\in\mathcal{P}_i(\mathcal{H})$. Moreover, Procedure~\ref{proc:seqstep_multi} controls the FDR at level $\alpha$.
\end{theo}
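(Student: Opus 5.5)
We condition on $\mathcal{O}_{1:n}$ throughout, which makes $\widehat{S}(\cdot)$ a fixed function. The plan has three steps: establish conditional independence of the $\tilde S_i$, establish the one-bit property for null coordinates, and then invoke the Selective SeqStep+ guarantee of \citet{barber2015controlling} in the form with flip probability $1/(\kappa+1)$.

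\textbf{Independence.} Since the $\boldZ_i$ are independent and $\mathcal{O}_i$ is a function of $\boldZ_i$ alone, the conditional law of $(\boldZ_1,\ldots,\boldZ_n)$ given $\mathcal{O}_{1:n}$ factorizes into the laws of $\boldZ_i$ given $\mathcal{O}_i$. Given the orbits, each $\tilde S_i$ is a fixed measurable function of $\boldZ_i$. Hence the $\tilde S_i$ are conditionally independent.

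\textbf{Null distribution of one coordinate.} Fix a null $i$ with $\mathbb P\in\mathcal{P}_i(\mathcal{H})$. Let $G$ be uniform on the finite group $\mathcal{H}$, independent of everything else.
\begin{itemize}
\item[(a)] Invariance gives $(\boldZ_i,\mathcal{O}_i)\stackrel{\mathcal{D}}{=}(G\boldZ_i,\mathcal{O}_i)$, because $G\boldZ_i$ has the same orbit as $\boldZ_i$. Since the other units are independent, $\boldZ_i\mid\mathcal{O}_{1:n}$ has the same law as $G\boldZ_i\mid\mathcal{O}_{1:n}$.
\item[(b)] Fix $\boldz=\boldZ_i$. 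The multiset $\{\widehat{S}(HG\boldz):H\in\mathcal{H}\}$ equals $\{\widehat{S}(H\boldz):H\in\mathcal{H}\}$. Therefore $\max_{H}\widehat{S}(H\boldZ_i)$, which equals $|\tilde S_i|$ whenever $\tilde S_i\neq 0$, depends only on the orbit.
\item[(c)] The event $\tilde S_i>0$ says that $\widehat{S}(G\boldz)$ strictly exceeds $\widehat{S}(HG\boldz)$ for all $H\neq \mathrm{id}$. Equivalently, $G\boldz$ is the unique element among the $\kappa+1$ points $\{H\boldz\}$ (indexed by group elements) attaining the strict maximum of $\widehat{S}$.
\item[(d)] Among the $\kappa+1$ group elements $g$, at most one can index a point $g\boldz$ that is the strict unique maximizer. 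One must be careful when the orbit has a nontrivial stabilizer: then $g\boldz=g'\boldz$ for some $g\neq g'$, and the comparison with $H=g' g^{-1}\neq \mathrm{id}$ yields equality, so $\tilde S_i\le 0$ and no strict maximizer exists.
\item[(e)] By (d), $\mathbb P[\tilde S_i>0\mid \mathcal{O}_{1:n}]\le 1/(\kappa+1)$.
\end{itemize}
The same uniform-$G$ argument gives more. Let $E_i$ be the event that the maximum over the orbit is attained uniquely in the group indexing. Conditional on $\mathcal{O}_{1:n}$ and $E_i$, the sign of $\tilde S_i$ is $+$ with probability exactly $1/(\kappa+1)$ and $-$ otherwise, independently of $|\tilde S_i|$. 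Off $E_i$, the sign is $\le 0$.

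\textbf{FDR control.} This is the step I expect to be the main obstacle, since it must be fitted to the exact form of the SeqStep+ result.
\begin{itemize}
\item Conditionally on the orbits, the magnitudes $|\tilde S_i|$ are fixed. The null signs are independent, each positive with probability at most $1/(\kappa+1)$, and zeros are never rejected.
\item This is exactly the setting of Selective SeqStep+ \citep{barber2015controlling}, in the multi-knockoff form of \citet{gimenez2019improving}, with $c=1/(\kappa+1)$. Scanning the hypotheses in decreasing order of $|\tilde S_i|$, the estimator~\eqref{eq:fdr_seqstep_multi} matches their FDR estimate $\frac{c}{1-c}\,\frac{1+\#\{\text{negatives}\}}{\#\{\text{positives}\}}$, since $c/(1-c)=\kappa^{-1}$.
\item Some nulls may have $P(\text{positive})<c$ or may be zero. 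To handle these, I would rerun the supermartingale argument: the process $V^+(s)/(1+V^-(s))$ has expectation at most $c/(1-c)$ at the stopping time. Alternatively, one can randomize the signs up to exactly $c$, which only increases $V^+$ while keeping the bound.
\end{itemize}
Combining these gives conditional FDR at most $\alpha$. Integrating over $\mathcal{O}_{1:n}$ yields the unconditional bound.
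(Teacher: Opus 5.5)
Your proposal is correct and follows essentially the paper's route: condition on $\mathcal{O}_{1:n}$ so that $\widehat{S}$ is fixed, get conditional independence from independence across units, bound the null probability of a positive sign by $1/(\kappa+1)$, and invoke Selective SeqStep+ \citep[Theorem~3]{barber2015controlling} as in \citet{gimenez2019improving} with $c/(1-c)=\kappa^{-1}$. The paper gets the sign bound in one line from $\{\tilde S_i>0\}=\{P_i^{\mathrm{sep}}\le 1/(\kappa+1)\}$ and the conditional validity of the separate randomization p-values, which your uniform-$G$ counting argument reproves directly. Your refinement (probability exactly $1/(\kappa+1)$ on $E_i$, never positive off $E_i$) is what makes the appeal to SeqStep+ clean. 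Two small corrections, however. First, off $E_i$ the magnitude $|\tilde S_i|$ is not orbit-determined (it is $0$ or the orbit maximum, depending on the tie). You should therefore also condition on those values, which are never rejected; this matches the paper's phrasing ``among hypotheses that could be rejected.'' Second, do not use your ``randomize signs up to $c$'' alternative, since it changes the threshold.
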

We note that for $\kappa=1$, Procedure~\ref{proc:seqstep_multi} reduces to Procedure~\ref{proc:seqstep_order_two} in Supplement~\ref{subsec:seqstep_order_two}.

An interesting case emerges when one actually starts with a larger group $\mathcal{H}$ (with $|\mathcal{H}| \gg 2$), and then runs the above procedure with the subgroup $\{\mathrm{id}, H\}$ for a chosen $H \in \mathcal{H}$ with $H \neq \mathrm{id}$ and $H^2 = \mathrm{id}$.\footnote{In principle, one could also reduce to some other small $\kappa > 1$ and run Procedure~\ref{proc:seqstep_multi} with $\{\mathrm{id}, H_1,\ldots,H_{\kappa}\}$. Moreover, if $H_1,\ldots,H_{\kappa}$ are chosen from the larger group at random (in a suitable way), then finite-sample FDR control will persist even if they do not form a group alongside $\mathrm{id}$~\citep{hemerik2018exact}.
} 
Such an approach provides an alternative to our main proposal of using compound randomization p-values alongside BH. Moreover, as we explain in Supplement~\ref{subsec:ress_and_SENS}, the SENS method of~\citet{tian2025conformalized} and the earlier RESS method of~\citet{zou2020new} can be interpreted in exactly that way for certain choices of score $\widehat{S}(\cdot)$ (different from the score we advocate for). Briefly, disadvantages of such an approach compared to compound BH are the following:
\begin{enumerate}[leftmargin=*, labelsep=4pt, itemindent=0pt]
\item \textbf{Variability: }The rejection set crucially depends on the (often arbitrary) choice of $H \in \mathcal{H}$. Rerunning with $H' \neq H$ may lead to a very different set of discoveries; see our proteomics application in Section~\ref{subsec:yves} for an example where this occurs.\footnote{Meanwhile, attempting to derandomize over several choices of $H$ is possible via the techniques of~\citet{ren2024derandomised}, but may lead to a decrease in power.} This point is most concerning to us; and so in general our recommendation is to run compound BH by default, and only use Procedure~\ref{proc:seqstep_multi} when $|\mathcal{H}|$ is small to begin with. Our methylation application in Section~\ref{subsec:methylation} illustrates such a scenario with small $|\mathcal{H}|$.
\item \textbf{$+\kappa^{-1}$ correction:} The $+\kappa^{-1}$ in~\eqref{eq:fdr_seqstep_multi} means that the procedure can never make a nonzero number of discoveries below $1/(\kappa\alpha)$. This restriction is particularly severe for $\kappa=1$ and is a common criticism of Selective SeqStep+ (e.g., for $\alpha=0.1$, it can make either $0$ discoveries, or $10$ or more, but never in between). By comparison, Proposition~\ref{prop:no_discoveries} shows that compound BH applied to the same group cannot make any discoveries at all when $\alpha<1/(\kappa+1)$, i.e., when $\alpha < 1/2$ for $\kappa=1$.
\end{enumerate}

In what follows, we also construct a SeqStep+ procedure with a learned moderated t-statistic score, borrowing information across hypotheses about shared properties of the null distributions.
In Theorem~\ref{theo:sparse_regime}, we show that this procedure asymptotically matches the power of the oracle local false
discovery rate procedure in our sparse regime.
(For comparison, SENS, which we discussed above as another instance of Selective SeqStep+ for this problem, has substantially lower power than our proposed methods in our simulations. In Supplement~\ref{subsec:sens_limma}, we show that even an oracle version of SENS has asymptotically negligible power in our setting.)

\section{Compound randomization with moderated t-statistics}
\label{sec:compound_limma}
We are ready to instantiate our framework to provide a solution to the multiple testing problem in~\eqref{eq:normal_samples}.
To do so, we must first choose the group $\mathcal{H}$, and second, the data-driven score function $\widehat{S}(\cdot)$. We will consider three choices of the group $\mathcal{H}$, and we will choose the score function to be the absolute value of the moderated t-statistic in~\eqref{eq:moderated_stats}, with hyperparameters $\nu_0$ and $s_0^2$ estimated from the orbits.

\subsection{Three choices of the invariance assumption and group \texorpdfstring{$\mathcal{H}$}{H}}

Let us write $\boldZ_i = (Z_{i1},\ldots,Z_{iK})^\top \in \mathcal{Z}:=\RR^K$ for the $K$ replicates of the $i$-th unit in~\eqref{eq:normal_samples}. We can write the $i$-th null hypothesis as
\begin{equation}
\mathcal{P}_i^{\mathrm{N}} := \cb{\mathbb P \in \mathcal{P}\, :\, \boldZ_i \sim \mathrm{N}(0, \sigma_i^2 I_K) \text{ for some } \sigma_i^2>0}.
\label{eq:normalnull}
\end{equation}
We will embed this null hypothesis in a broader null hypothesis that is invariant under a compact group $\mathcal{H}$ of transformations. We will consider three choices of $\mathcal{H}$. 

\paragraph{Sign flips.}  As our first group, we take
$$\mathcal{H}_{\mathrm{symm}}:=  \cb{\pm 1}^K,$$
which acts on $\mathcal{Z}$ by componentwise multiplication.
In this case, the broader hypothesis we are testing is whether the null distribution of $\boldZ_i$ is invariant under coordinatewise sign flips. This holds, for instance, when the replicates are independent and symmetric about $0$ under the null. We may identify the orbit $\mathcal{O}_i$ with $(|Z_{i1}|,\ldots, |Z_{iK}|)$. In this case,~\eqref{eq:compound_pvalue_fun} reduces to the following discrete sum:
\begin{equation}
\pcompfun(\boldz; \widehat{S}(\cdot), \mathcal{O}_{1:n}, \mathcal{H}_{\mathrm{symm}}) := \frac{1}{n 2^K} \sum_{i=1}^n \sum_{H \in \cb{\pm 1}^K} \ind\cb{ \widehat{S}(H \boldZ_i) \geq \widehat{S}(\boldz)}.
\label{eq:compound_pvalue_fun_sign_flip}
\end{equation}
Following notation from~\eqref{eq:group_null}, the above yield compound p-values for the null hypotheses $\mathcal{P}_i(\mathcal{H}_{\mathrm{symm}})$.

\paragraph{Orthogonal rotations.}
As our second group, we take
$$\mathcal{H}_{\mathrm{orth}} := O(K),$$ 
the orthogonal group in dimension $K$, which acts on $\mathcal{Z}$ by matrix multiplication. In this case, the broader hypothesis we are testing is whether the null distribution of each $\boldZ_i$ is rotationally invariant.
Here we may identify the orbit $\mathcal{O}_i$ with $\Norm{\boldZ_i}_2$. Again recalling the notation from~\eqref{eq:group_null}, the above yield compound p-values for the null hypotheses $\mathcal{P}_i(\mathcal{H}_{\mathrm{orth}})$.

We note that we have the following inclusions: 
\begin{equation}
\mathcal{P}_i^{\mathrm{N}} \subseteq \mathcal{P}_i(\mathcal{H}_{\mathrm{orth}}) \subseteq \mathcal{P}_i(\mathcal{H}_{\mathrm{symm}}),
\label{eq:null_inclusions}
\end{equation}
and this means that the first choice of group $\mathcal{H}_{\mathrm{symm}}$ provides a stronger guarantee for FDR control (since it requires the fewest assumptions). 

\paragraph{Order-two subgroup.}   Following the strategy discussed in Section~\ref{subsec:seqstep} of choosing an order-two subgroup, a natural (but arbitrary) choice for either $\mathcal{H}_{\mathrm{orth}}$ or $\mathcal{H}_{\mathrm{symm}}$ is $ \{\mathbf{1},H_{\mathrm{half}}\}$, where\footnote{For $\mathcal{H}_{\mathrm{orth}}$, we identify $H_{\mathrm{half}}$ with the corresponding diagonal matrix and $\mathbf{1}$ with the identity matrix.}
\begin{equation}
H_{\mathrm{half}}:=(\underbrace{1,\ldots,1}_{\lceil K/2\rceil},\underbrace{-1,\ldots,-1}_{\lfloor K/2\rfloor}).
\label{eq:h_half}
\end{equation}
Thus we flip half the coordinates when $K$ is even, and set the middle sign to $+1$ when $K$ is odd. The resulting method depends on the arbitrary indexing of $Z_{i1},\ldots,Z_{iK}$.

\subsection{The limma score function} 

Our next task is to construct the score $\widehat{S}(\cdot)$ using the absolute moderated t-statistic in~\eqref{eq:moderated_stats}, with hyperparameters $\nu_0$ and $s_0^2$ estimated from $\Norm{\boldZ_1}_2,\ldots,\Norm{\boldZ_n}_2$. These norms are measurable with respect to the orbits $\mathcal{O}_{1:n}$ for the sign flip group $\mathcal{H}_{\mathrm{symm}}$, the orthogonal rotation group $\mathcal{H}_{\mathrm{orth}}$ and the order-two group $\cb{\mathbf{1}, H_{\mathrm{half}}}$.  We underline that $\hat{\sigma}^2(\boldz)$ in~\eqref{eq:moderated_stats} is still the usual centered sample variance defined in~\eqref{eq:summary_statistics}, which is unbiased for $\sigma_i^2$ even for the non-nulls. Only the hyperparameters $\nu_0$ and $s_0^2$ will be estimated from the orbits, which may introduce some bias into their estimates, as we will see momentarily.

We briefly recall the limma procedure, which estimates the hyperparameters in~\eqref{eq:limma_variance_prior} as
$$
(\hat{\nu}_0^{\mathrm{limma}}, \hat{s}_0^{2,\mathrm{limma}}) := h^{\mathrm{limma}}(\hat{\sigma}_1^2,\ldots,\hat{\sigma}_n^2; \nu)
$$
for a function $h^{\mathrm{limma}}$ that depends on all the sample variances and the degrees of freedom $\nu = K-1$. The choice of $h^{\mathrm{limma}}$ can for instance be the one given by the original limma paper~\citep{smyth2004linear} that applies the method of moments on $\log(\hat{\sigma}_i^2)$, or the more robust Winsorized method of moments by~\citet{phipson2016robust}. Such estimates of $\nu_0$ and $s_0^2$
cannot be used to construct our score function \smash{$\widehat{S}(\cdot)$}, because they do not only depend on the orbits $\mathcal{O}_i$. Instead, we will construct estimates that depend only on the orbits $\mathcal{O}_{1:n}$. Our idea is as follows. Let us define
\begin{equation}
\hat{\tau}^2(\boldz) := \frac{1}{K} \Norm{\boldz}_2^2,
\label{eq:hat_tau_squared}
\end{equation}
and $\hat\tau_i^2 := \hat{\tau}^2(\boldZ_i)$. Note that $\hat\tau_i^2$ is a function of the orbit $\mathcal{O}_i$ for all three choices of the group $\mathcal{H}$ above. Moreover, letting $\chi^2_K(u)$ be the non-central chi-square distribution with $K$ degrees of freedom and non-centrality parameter $u$, we have that,
\begin{equation}
 \hat{\tau}_i^2 \mid \sigma_i^2, \mu_i \sim \frac{\sigma_i^2}{K}\chi^2_K\p{K \mu_i^2 / \sigma_i^2 },\; \text{ and so, }\; \hat{\tau}_i^2 \mid \sigma_i^2, \mu_i =0 \sim \frac{\sigma_i^2}{K}\chi^2_K.
 \label{eq:hat_tau_i_bn}
\end{equation}
This suggests computing
\begin{equation}
(\hat{\nu}_0(\mathcal{O}_{1:n}), \hat{s}_0^2(\mathcal{O}_{1:n})) := h^{\mathrm{limma}}(\hat{\tau}_1^2,\ldots,\hat{\tau}_n^2; K),
\label{eq:hlimma_suff}
\end{equation}
where it may be preferable to use a robust procedure $h^{\mathrm{limma}}$ such as the one by~\citet{phipson2016robust} to estimate the hyperparameters (to reduce the influence of non-nulls). Here we propose a new procedure for doing so for which we can provide asymptotic guarantees in the sparse regime. (Indeed, in our analysis of the asymptotic power in Section~\ref{sec:power_analysis}, we will use the estimates of $\nu_0$ and $s_0^2$ developed below.)

Let $\widehat Q_n(p)$ denote the empirical $p$-quantile of
$\hat\tau_1^2,\ldots,\hat\tau_n^2$, and let $Q_\nu(p)$ denote the
$p$-quantile of $F_{K,\nu}$, the $F$-distribution with $K$ and $\nu$ degrees of freedom.
Under~\eqref{eq:normal_samples} and~\eqref{eq:lonnsted_prior} and conditional on $\mu_i=0$ (see Proposition~\ref{prop:second_moment_f} in Supplement~\ref{sec:t_formulas} for a proof),
$$
\frac{\hat\tau_i^2}{s_0^2} \mid \mu_i=0\,\sim \, F_{K,\nu_0}.
$$
Hence the ratio of two null quantiles depends only on $\nu_0$. We define
$
R(\nu):=Q_\nu(3/4)/Q_\nu(1/4),
$
and show in Lemma~\ref{lemm:quantile_ratio_monotone} that $R$ is invertible on its range. Thus  we propose to estimate $\nu_0$ by
\begin{equation}
\hat{\nu}_0(\mathcal{O}_{1:n})
:=
R^{-1}\p{
\widehat Q_n(3/4)\big / 
{\widehat Q_n(1/4)}
}.
\label{eq:nu_0_quantile}
\end{equation}
We then estimate the scale from the median,
\begin{equation}
\hat s_0^2(\mathcal{O}_{1:n})
:=
\widehat Q_n(1/2)\big / 
Q_{\hat\nu_0(\mathcal{O}_{1:n})}(1/2).
\label{eq:s_0_median}
\end{equation}
Finally, our score is the absolute value of the estimated moderated t-statistic:
\begin{equation}
\widehat{S}(\boldz) := \abs{T(\boldz; \hat{\nu}_0(\mathcal{O}_{1:n}), \hat{s}_0(\mathcal{O}_{1:n}))}.
\label{eq:moderated_t_score}
\end{equation}
We emphasize that $T(\cdot; \hat{\nu}_0, \hat{s}_0)$ is still the moderated t-statistic in~\eqref{eq:moderated_stats}, whose denominator uses the centered sample variance $\hat{\sigma}^2(\boldz)$. The quantities $\hat{\tau}_i^2$ are used only to estimate the hyperparameters $\nu_0, s_0$.

It is useful at this point to contrast the compound randomization p-values $\pcomp_i$ in~\eqref{eq:compound_pvalue_fun} with the separate randomization p-values $\psep_i$ in~\eqref{eq:sep_pvalue_fun} for the limma score function $\widehat{S}(\cdot)$ above.

\begin{prop}
For both $\mathcal{H} \in \{\mathcal{H}_{\mathrm{symm}}, \mathcal{H}_{\mathrm{orth}}\}$ and any $\boldz \in \mathcal{O}_i$, we have that:
$$
\psepfun(\boldz; \abs{T(\cdot; \hat{\nu}_0(\mathcal{O}_{1:n}), \hat{s}_0(\mathcal{O}_{1:n}))}, \mathcal{O}_i, \mathcal{H})  = \psepfun(\boldz; \abs{T(\cdot; 0, 1)}, \mathcal{O}_i, \mathcal{H}).
$$
Moreover, for $\mathcal{H} = \mathcal{H}_{\mathrm{orth}}$, we have that
$$
P_i^{\mathrm{sep}} = \psepfun(\boldZ_i; \abs{T(\cdot; 0, 1)}, \mathcal{O}_i, \mathcal{H}_{\mathrm{orth}}) = 2 \bar{F}_{t,\nu}(|T(\boldZ_i)|) = P_i^t\; \text{ almost surely},
$$
with $P_i^t$ being the usual two-sided t-test p-value for testing $\mu_i=0$ defined after~\eqref{eq:standard_t}.
\label{prop:sep_pvalue_limma}
\end{prop}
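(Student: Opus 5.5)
The plan is to show that, on a single orbit, the moderated score is a strictly increasing function of the unmoderated score $\abs{T(\cdot;0,1)}$, so the indicator inside \eqref{eq:sep_pvalue_fun} is unchanged. The key observation is that for both groups the quantity $\Norm{H\boldz}_2^2$ is constant over the orbit $\mathcal{O}_i$. Likewise $(\hat\nu_0,\hat s_0) = (\hat{\nu}_0(\mathcal{O}_{1:n}), \hat{s}_0(\mathcal{O}_{1:n}))$ is fixed, because it depends only on the orbits.

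Write $c := \Norm{\boldz}_2^2 = K\hat\mu(\boldy)^2 + \nu\hat\sigma^2(\boldy)$, valid for every $\boldy \in \mathcal{O}_i$. Then, with $T=T(\boldy)$,
$$
T^2(\boldy) = \frac{K\hat\mu^2}{\hat\sigma^2} = \frac{\nu K\hat\mu^2}{c - K\hat\mu^2}.
$$
So on the orbit, $\abs{T(\boldy)}$ is a strictly increasing function of $K\hat\mu(\boldy)^2 \in [0,c]$. For the moderated statistic we have
$$
T^2(\boldy;\hat\nu_0,\hat s_0) = \frac{(\hat\nu_0+\nu)K\hat\mu^2}{\hat\nu_0\hat s_0^2 + c - K\hat\mu^2}.
$$
This is also strictly increasing in $K\hat\mu^2$ on the orbit, because the numerator increases and the denominator decreases (with $\hat\nu_0\hat s_0^2 \ge 0$). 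Hence the two absolute statistics are linked on $\mathcal{O}_i$ by a strictly increasing map $g$.

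It follows that for all $H$ and all $\boldz\in\mathcal{O}_i$,
$$
\ind\{\widehat S(H\boldZ_i)\ge \widehat S(\boldz)\} = \ind\{\abs{T(H\boldZ_i)}\ge\abs{T(\boldz)}\},
$$
and integrating against $\mu_{\mathcal H}$ gives the first claim. One edge case needs attention: where $\hat\sigma^2 = 0$, i.e.\ $K\hat\mu^2 = c$, the ordinary $T$ is $\pm\infty$. I would treat this with the convention that $\abs{T}=\infty$ is the maximum, so monotonicity still holds. The case $c=0$ is trivial, since then all scores are equal.

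For the second claim, take $\mathcal H_{\mathrm{orth}}$ with $H\sim\mu_{\mathcal H}$. Then $H\boldZ_i$ is uniform on the sphere of radius $\Norm{\boldZ_i}_2$. Whenever $\boldZ_i\neq 0$, which holds almost surely, $T(H\boldZ_i)$ has the same law as $T(\boldW)$ with $\boldW\sim\mathrm{N}(0,I_K)$. This is because the t-statistic is scale invariant and $\boldW/\Norm{\boldW}_2$ is uniform on the sphere. Since $T(\boldW)\sim t_\nu$, we get
$$
\psepfun(\boldZ_i;\abs{T(\cdot;0,1)},\mathcal{O}_i,\mathcal{H}_{\mathrm{orth}}) = \PP{\abs{t_\nu}\ge \abs{T(\boldZ_i)}} = 2\bar F_{t,\nu}(\abs{T(\boldZ_i)}),
$$
using the symmetry and continuity of the $t$ distribution.

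The main obstacle is bookkeeping rather than substance. It consists of the degenerate cases ($\hat\sigma=0$ and $\boldZ_i=0$, both of which are null events or handled by the convention above) and of keeping the "almost surely" qualifier attached correctly.
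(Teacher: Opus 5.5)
Your proof is correct and follows essentially the same route as the paper. You use the fact that $\|\cdot\|_2$ is constant on each orbit, so that $T^2(\cdot;\hat\nu_0,\hat s_0)$ and $T^2(\cdot;0,1)$ are both strictly increasing functions of $K\hat\mu^2$ on $\mathcal{O}_i$; you then represent $H\boldZ_i$ as a rescaled Gaussian direction uniform on the sphere to obtain the $t_\nu$ law. Your explicit handling of the degenerate cases $\hat\sigma=0$ and $\boldZ_i=0$ is a harmless refinement that the paper leaves implicit.
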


In other words, the score used for separate randomization p-values automatically reverts to the usual t-statistic score and it was pointless to learn the score $\widehat{S}(\cdot)$ in~\eqref{eq:moderated_t_score} with data-driven choices of $\hat{\nu}_0$ and $\hat{s}_0^2$. For $\mathcal{H}_{\mathrm{orth}}$ this becomes even more striking, as the separate randomization p-value is exactly the usual two-sided t-test p-value.\footnote{The second part of Proposition~\ref{prop:sep_pvalue_limma}, i.e., the fact that $\psepfun(\boldZ_i; \abs{T(\cdot; 0, 1)}, \mathcal{O}_i, \mathcal{H}_{\mathrm{orth}}) = 2 \bar{F}_{t,\nu}(|T(\boldZ_i)|)$ also appears in~\citet{koning2024more}.} (For $\mathcal{H}_{\mathrm{symm}}$, we may expect the separate randomization p-value to be approximately equal to the usual two-sided t-test p-value for large $K$, see e.g., Example 15.2.4 in~\citet{lehmann2005testing}.)

\begin{rema}[Interpretation as prepivoting limma's p-values]
Recall from Section~\ref{sec:existing}
that limma uses p-values of the form $2\bar{F}_{t,\nu+\hat{\nu}_0}(|T(\boldZ_i; \hat{\nu}_0, \hat{s}_0)|)$. If $\hat{\nu}_0$ and $\hat{s}_0^2$ are computed as in~\eqref{eq:nu_0_quantile} and~\eqref{eq:s_0_median}, then an alternative score would be to use the (negative) p-value itself as a score, i.e., $\widehat{S}(\boldz) := -2\bar{F}_{t,\nu+\hat{\nu}_0(\mathcal{O}_{1:n})}(|T(\boldz; \hat{\nu}_0, \hat{s}_0)|)$. However, due to monotonicity of the survival function $\bar{F}_{t,\nu}$, this is equivalent to using the absolute value of the moderated t-statistic as a score. In other words, our proposal can be interpreted as prepivoting limma's p-values.
\end{rema}

\subsection{Computational considerations}

For $ \mathcal{H}_{\mathrm{symm}}$, we can compute $\pcompfun(\boldz;  \widehat{S}(\cdot), \mathcal{O}_{1:n}, \mathcal{H}_{\mathrm{symm}})$ exactly by enumerating all sign flips in~\eqref{eq:compound_pvalue_fun_sign_flip}. This is tractable for small $K$, say $K \leq 16$ which is the regime in which the proposed methods are most useful. As a small shortcut, we note that by symmetry, we can reduce the number of sign flips to enumerate to $2^{K-1}$, since the sign flips $H$ and $-H$ yield the same value of $\widehat{S}(H\boldZ_i)$. For larger $K$, we could instead work with a subgroup of $\mathcal{H}$~\citep{koning2024more} or a random subset of $\mathcal{H}$~\citep{hemerik2018exact}.

For the orthogonal rotations $\mathcal{H}_{\mathrm{orth}}$, we can compute 
$\pcompfun(\cdot)$ analytically, as the next result shows.

\begin{prop} Let $\bar{F}_{B, 1/2, \nu/2}$ denote the survival function of a $\mathrm{Beta}(1/2, \nu/2)$ random variable. Then for any $\boldz \in \RR^K$,
$$\pcompfun(\boldz; \widehat{S}(\cdot), \mathcal{O}_{1:n}, \mathcal{H}_{\mathrm{orth}}) = \frac{1}{n} \sum_{i=1}^n \bar{F}_{B, 1/2, \nu/2}\cb{ \frac{\widehat{S}^2(\boldz)/\{\nu + \hat{\nu}_0\}}{ 1 +\widehat{S}^2(\boldz)/\{\nu + \hat{\nu}_0\}}\p{ 1 + \frac{\hat{\nu}_0 \hat{s}_0^2}{\Norm{\boldZ_i}^2_2}}}.
$$
\label{prop:orthogonal_rotation_cpvalue}
\end{prop}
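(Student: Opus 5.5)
The plan is to compute, for each fixed $i$, the inner Haar integral $\int_{O(K)} \ind\{\widehat{S}(H\boldZ_i)\geq \widehat{S}(\boldz)\}\dd\mu_{\mathcal{H}}(H)$ in closed form, and then average over $i$. Since $\hat\nu_0,\hat s_0^2$ depend only on the orbits, they are fixed constants for this computation. The first step is to observe that if $H\sim\mu_{\mathcal{H}_{\mathrm{orth}}}$, then $H\boldZ_i$ is uniformly distributed on the sphere of radius $r_i:=\Norm{\boldZ_i}_2$, so that $H\boldZ_i \stackrel{\mathcal{D}}{=} r_i U$ with $U$ uniform on $\mathbb{S}^{K-1}$. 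If $\boldZ_i=0$, the score $\widehat{S}(0)=0$, and both sides can be checked directly, with the argument of $\bar F_B$ being infinite or degenerate. I will treat this null-probability edge case separately, or note that it is excluded almost surely.

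The second step is to rewrite the score of $r U$ in terms of $W:=K\hat\mu(U)^2 = (\mathbf{1}^\top U)^2/K$. Since $\mathbf{1}/\sqrt K$ is a unit vector, rotational invariance gives $W\sim U_1^2\sim\mathrm{Beta}(1/2,\nu/2)$; this is the standard fact that a squared coordinate of a uniform point on the sphere is $\mathrm{Beta}(1/2,(K-1)/2)$. Next, using $\Norm{U}_2=1$, I have $\nu\hat\sigma^2(rU) = r^2(\Norm{U}^2 - K\hat\mu(U)^2) = r^2(1-W)$. It follows that
\[
\widehat{S}^2(rU) = \frac{(\nu+\hat\nu_0)\, r^2 W}{\hat\nu_0\hat s_0^2 + r^2(1-W)}.
\]

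The third step is to invert this expression. Writing $c := \widehat{S}^2(\boldz)/(\nu+\hat\nu_0)$, the event $\widehat{S}^2(rU)\geq (\nu+\hat\nu_0)c$ is equivalent to $r^2W \geq c(\hat\nu_0\hat s_0^2 + r^2 - r^2W)$. This in turn is $W(1+c)\geq c(1+\hat\nu_0\hat s_0^2/r^2)$, that is, $W\geq \frac{c}{1+c}(1+\hat\nu_0\hat s_0^2/r^2)$. The map is monotone increasing in $W$ because the denominator is positive (assuming $\hat\nu_0\hat s_0^2>0$, or $r^2(1-W)>0$ a.s.), so the inversion is valid. Since the Beta distribution is continuous, the probability of this event equals $\bar F_{B,1/2,\nu/2}$ evaluated at the stated argument, whether we use $\geq$ or $>$. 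Averaging over $i$ with $r_i^2=\Norm{\boldZ_i}_2^2$ then gives the formula.

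The main obstacle is the clean algebraic inversion together with the care needed at edge cases. These are when $\hat\nu_0=0$, when the threshold exceeds $1$ (in which case $\bar F_B=0$, consistent with the formula), and when $r_i=0$. The distributional identity $W\sim\mathrm{Beta}(1/2,\nu/2)$ is standard and I will cite it.
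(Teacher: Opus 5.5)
Your proposal is correct and follows essentially the same route as the paper's proof. Both arguments use $\Norm{H\boldZ_i}_2=\Norm{\boldZ_i}_2$ to write $\nu\hat\sigma^2(H\boldZ_i)=\Norm{\boldZ_i}_2^2-K\hat\mu^2(H\boldZ_i)$, rearrange the event $\widehat S(H\boldZ_i)\geq\widehat S(\boldz)$ into a lower bound on $(\mathbf{1}^\top H\boldZ_i)^2/(K\Norm{H\boldZ_i}_2^2)$, and identify the Haar law of that ratio as $\mathrm{Beta}(1/2,\nu/2)$. Your uniform-on-the-sphere justification and your treatment of the edge cases ($\boldZ_i=0$, $\hat\nu_0=0$, threshold above $1$) make explicit what the paper leaves implicit.
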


\section{Asymptotic power in sparse regime}
\label{sec:power_analysis}
In this section, we will study the asymptotic power of our proposal (as well as some competing methods) in a sparse regime with $\pi_1 := 1-\pi_0 \to 0$.  Specifically, we will consider data-generation governed by~\eqref{eq:normal_samples} and~\eqref{eq:lonnsted_prior}, with the following asymptotic regime:
$$
n\to\infty,\;\;\pi_1 \to 0,\;\;n\pi_1 \to \infty,\;\;\lambda \to \infty,\;\;\,     
K, s_0^2, \nu_0 \text{ fixed}.
$$
As in prior work~\citep{tang2017bayesian} (also see~\citet{bogdan2011asymptotic, neuvial2012false}), we will choose $\lambda$ to diverge at a rate such that asymptotically the problem is non-trivial, i.e., the power of the oracle procedure is not $0$ (impossible to detect any non-nulls) nor $1$ (trivial to detect all non-nulls). To set up this regime, we note that $\mathrm{lfdr}(\boldz)$ in~\eqref{eq:localfdr_smyth} satisfies the following:
\begin{equation}
\inf_{\boldz \in \RR^K} \mathrm{lfdr}(\boldz) = \frac{\pi_0}{\pi_0 + (1-\pi_0) (1+\lambda)^{(\nu_0+\nu)/2}}.
\label{eq:lfdr_infimum}
\end{equation}
For this reason, we will choose $\lambda$ to diverge at a rate such that the above infimum converges to $1/(1+c_\star)$, for some $c_\star \in (0,\infty)$, i.e., we will choose $\lambda$ such that
\begin{equation}
\pi_1\lambda^{(\nu + \nu_0)/2} \to c_{\star}\,  \text{ for some }\, c_\star \in (0,\infty).
\label{eq:lambda_asymptotics}
\end{equation}
Below we analyze the FDR and power of various procedures in this asymptotic regime. Given any procedure $\mathcal{D}_n : \RR^{K \times n} \to \{0,1\}^n$ that takes the data $\boldZ_{1:n}$ as input and outputs a vector of rejections, we define its FDR and power as
\begin{equation}
\thinmuskip=2mu \medmuskip=2mu \thickmuskip=3mu
\mathrm{FDR}_n[\mathcal{D}_n] := \EE{ \frac{\sum_{i=1}^n \ind\!\cb{\mathcal{D}_{n,i} = 1,\, \mu_i = 0}}{1 \lor \sum_{i=1}^n \ind\cb{\mathcal{D}_{n,i} = 1}} },\;
\mathrm{Pow}_n[\mathcal{D}_n] := \EE{ \frac{\sum_{i=1}^n \ind\!\cb{\mathcal{D}_{n,i} = 1,\, \mu_i \neq 0}}{1 \lor \sum_{i=1}^n \ind\cb{\mu_i \neq 0}} },
\label{eq:fdr_power_defs}
\end{equation}
where we make the dependence on $n$ explicit in the notation. 

To gain some intuition for the asymptotic power, we first consider the oracle local false discovery rate procedure $\mathrm{LFDR}^{\mathrm{OR}}(\alpha)$, which knows the hyperparameters in~\eqref{eq:lonnsted_prior}. By virtue of~\eqref{eq:localfdr_smyth}, this is equivalent to thresholding the absolute moderated t-statistic. 
Thus the oracle rejects whenever $|T(\boldZ_i;\nu_0,s_0)|\geq s_{\alpha,n}$, where $s_{\alpha,n}$ is the largest solution of the following equation (setting $s_{\alpha,n} = \infty$ if no solution exists):
\begin{equation}    
 \mathrm{Fdr}_n(s_{\alpha,n})\stackrel{!}{=}\alpha,\, \text{ where }\;  \mathrm{Fdr}_n(s) := \frac{\pi_0\bar F_0(s)}
{\pi_0\bar F_0(s)+\pi_1\bar F_1(s)},
\label{eq:mfdr_equilibrium_rewritten}
\end{equation}
and where \smash{$
\bar F_0(s):=\bar F_{t,\nu+\nu_0}(s)
$}
and
\smash{$
\bar F_1(s):=\bar F_{t,\nu+\nu_0}(s/\{1+\lambda\}^{1/2})
$} are the survival functions of the moderated t-statistic under the null and alternative, respectively.

Notice that since $\pi_1 \to 0$, it must be the case that $\bar{F}_0(s_{\alpha,n})$ is of order $\pi_1$ or smaller. Then, a convenient way to parameterize thresholds is via $\tau>0$ mapped through the function,  
$$
\zeta_n(\tau)
:=
\p{\frac{L_{\nu+\nu_0}}{\tau\pi_1}}^{1/(\nu+\nu_0)},\, \text{ where }\, L_{\nu+\nu_0}:=\lim_{s\to\infty}s^{\nu+\nu_0}\bar F_{t,\nu+\nu_0}(s) = \frac{(\nu+\nu_0)^{\frac{\nu+\nu_0}{2}-1}
\Gamma\bigl(\tfrac{\nu+\nu_0+1}{2}\bigr)}
{\sqrt{\pi}\,\Gamma\bigl(\tfrac{\nu+\nu_0}{2}\bigr)}.
$$
Lemma~\ref{lemm:sparse_limits} collects the following limits: for any fixed $\tau>0$, we have $\bar F_0(\zeta_n(\tau))/\pi_1 \to \tau$ as $n\to\infty$. Moreover, $\bar{F}_1(\zeta_n(\tau)) \to \Psi_1(\tau)$, where
$\Psi_1(\tau):=\bar F_{t,\nu+\nu_0}\p{\{L_{\nu+\nu_0}/(\tau c_\star)\}^{1/(\nu+\nu_0)}}.$
These definitions suggest that the threshold $s_{\alpha,n}$ in~\eqref{eq:mfdr_equilibrium_rewritten} can be parameterized as $s_{\alpha,n} \approx \zeta_n(\tau_\alpha)$ for some $\tau_\alpha>0$ that solves
\begin{equation} 
\mathrm{Fdr}_{\infty}(\tau_{\alpha}) \stackrel{!}{=} \alpha,\, \text{ where }\; \mathrm{Fdr}_\infty(\tau):=\frac{\tau}{\tau+\Psi_1(\tau)}.
\label{eq:mfdr_equilibrium_asymptotic}
\end{equation}
The following result shows that indeed the solution of equation~\eqref{eq:mfdr_equilibrium_asymptotic} and the threshold function $\zeta_n(\cdot)$ govern the asymptotic behavior of $s_{\alpha,n}$.
\begin{prop}
Fix $\alpha \in (0,1)$ and suppose that the asymptotic regime described in this section holds.
If $c_\star>(1/\alpha)-1$, then there exists a unique $\tau_\alpha>0$ that solves~\eqref{eq:mfdr_equilibrium_asymptotic}. Moreover, for all large enough $n$, there exists a unique $s_{\alpha,n}$ that solves~\eqref{eq:mfdr_equilibrium_rewritten}, and $s_{\alpha,n}/\zeta_n(\tau_\alpha)\to1$ as $n\to\infty$.
\label{prop:asymptotic_threshold}
\end{prop}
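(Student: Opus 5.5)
We argue in two stages: first the limiting equation~\eqref{eq:mfdr_equilibrium_asymptotic}, then the finite-$n$ equation~\eqref{eq:mfdr_equilibrium_rewritten} by perturbation, using Lemma~\ref{lemm:sparse_limits} throughout.

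\textbf{Limiting equation.} Write $d:=\nu+\nu_0$. Since $\tau\mapsto\{L_d/(\tau c_\star)\}^{1/d}$ is decreasing, $\Psi_1(\tau)$ is continuous and increasing, with $\Psi_1(\tau)\to 0$ as $\tau\to0$ and $\Psi_1(\tau)\to \bar F_{t,d}(0)=1/2$ as $\tau\to\infty$. The equation $\mathrm{Fdr}_\infty(\tau)=\alpha$ is equivalent to $g(\tau):=\Psi_1(\tau)/\tau=(1-\alpha)/\alpha$.

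We show that $g$ is strictly decreasing. Substituting $u=\{L_d/(\tau c_\star)\}^{1/d}$ gives $\tau=L_d/(c_\star u^d)$, hence
\[
g=\frac{c_\star}{L_d}\,u^d\,\bar F_{t,d}(u).
\]
This is increasing in $u$, i.e.\ decreasing in $\tau$, provided $u^d\bar F_{t,d}(u)$ is strictly increasing on $(0,\infty)$. Differentiating, this is equivalent to $d\,\bar F_{t,d}(u)> u f_{t,d}(u)$. This is a Mills-ratio type inequality for the $t_d$ density, which we verify by showing that $h(u):=d\bar F_{t,d}(u)-uf_{t,d}(u)$ satisfies $h(\infty)=0$ and $h'<0$. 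One computes
\[
h'(u)=-(d+1)f(u)-uf'(u),
\qquad
uf'(u)/f(u)=-(d+1)u^2/(d+u^2),
\]
so that $h'(u)=-(d+1)f(u)\,d/(d+u^2)<0$.

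The limits of $g$ are then: $g\to c_\star L_d^{-1}\lim_{u\to\infty} u^d\bar F_{t,d}(u)=c_\star$ as $\tau\to0$, and $g\to0$ as $\tau\to\infty$. Hence a unique root $\tau_\alpha$ exists exactly when $c_\star>(1-\alpha)/\alpha=1/\alpha-1$. Moreover $g(\tau)>(1-\alpha)/\alpha$ for $\tau<\tau_\alpha$ and $g(\tau)<(1-\alpha)/\alpha$ for $\tau>\tau_\alpha$.

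\textbf{Finite $n$.} Similarly, $\mathrm{Fdr}_n(s)=\alpha$ iff $G_n(s):=\pi_1\bar F_1(s)/\{\pi_0\bar F_0(s)\}=(1-\alpha)/\alpha$. Writing $a=(1+\lambda)^{1/2}$,
\[
G_n(s)=\frac{\pi_1}{\pi_0}\cdot\frac{\bar F_{t,d}(s/a)}{\bar F_{t,d}(s)}.
\]
Monotonicity in $s$ is equivalent to monotonicity of $\log\bar F(s/a)-\log\bar F(s)$. Its derivative is $r(s)-r(s/a)/a$, where $r=f/\bar F$ is the hazard. This is positive if $u\mapsto u\,r(u)$ is increasing, which is exactly the statement that $u^d\bar F(u)$ has log-derivative $(d-u r(u))/u$ with $u r(u)\uparrow d$; we will check this property of the $t$ hazard, or else avoid needing global monotonicity, as explained below.

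\textbf{Uniqueness and the ratio.} Even without global monotonicity, the key step is to pin the root near $\zeta_n(\tau_\alpha)$. Fix $\epsilon>0$. By Lemma~\ref{lemm:sparse_limits}, $G_n(\zeta_n(\tau))\to\Psi_1(\tau)/\tau=g(\tau)$ for each fixed $\tau$. Hence $G_n(\zeta_n(\tau_\alpha(1\pm\epsilon)))$ lies strictly on opposite sides of $(1-\alpha)/\alpha$ for large $n$, and by continuity a root lies between these two points. Since $\zeta_n$ is a power map, $\zeta_n(\tau_\alpha(1\pm\epsilon))/\zeta_n(\tau_\alpha)=(1\pm\epsilon)^{-1/d}$, giving $s_{\alpha,n}/\zeta_n(\tau_\alpha)\to1$, provided the largest root is unique and lies in this window.

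For uniqueness, we use monotonicity of $G_n$ if it is available. Otherwise we argue regionwise. For $s\ll\zeta_n$, i.e.\ large $\tau$, $G_n$ is bounded by $\pi_1/\{\pi_0\bar F_0(s)\}\to0$. For $s\gg\zeta_n$, i.e.\ small $\tau$, we need $G_n\approx c_\star>(1-\alpha)/\alpha$ uniformly, using $\bar F(s/a)/\bar F(s)\to a^d$ and $\pi_1a^d\to c_\star$.

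\textbf{Main obstacle.} This uniformity over $s$ is the main obstacle, i.e.\ upgrading the pointwise limits of Lemma~\ref{lemm:sparse_limits} to uniform convergence of $G_n(\zeta_n(\tau))$ on compact subsets of $\tau$ and its tails. We would handle it by monotonicity of $G_n$ together with Dini/P\'olya-type arguments, since $g$ is continuous and monotone; this is why establishing monotonicity of $u\,r(u)$ for the $t_d$ hazard is worth doing first.
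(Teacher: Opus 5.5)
Your treatment of the limiting equation is correct. The inequality $d\,\bar F_{t,d}(u)>u f_{t,d}(u)$ that you obtain from $h'<0$ is the same one the paper derives in Lemma~\ref{lemm:monotonicity_infty}, where it comes from monotonicity of $u^{d+1}f_{t,d}(u)$. Bracketing the root between $\zeta_n(\tau_\alpha(1\pm\epsilon))$ is also the right way to get the ratio.

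The gap is in the finite-$n$ step. Two things need strict monotonicity of $G_n$ (equivalently of $\mathrm{Fdr}_n$) on all of $(0,\infty)$: uniqueness of $s_{\alpha,n}$, and the fact that the \emph{largest} root is the one inside your window rather than another root further out. You leave this unproven. You reduce it to $u\mapsto u\,r(u)$ being increasing for the $t_d$ hazard, but you do not verify that. What you did prove, $u\,r(u)<d$, only says that $u^d\bar F_{t,d}(u)$ is increasing; it does not make $u\,r(u)$ monotone. Your fallback, a regionwise argument with uniform control of $G_n(\zeta_n(\tau))$, is by your own account not carried out. It would also have to cover every $s\in(0,\infty)$, including the tails in $\tau$, and the pointwise limits of Lemma~\ref{lemm:sparse_limits} do not give that.

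The missing ingredient is the monotone likelihood ratio of the scale family, which is Lemma~\ref{lemm:monotonicity_finite_sample}. With $d=\nu+\nu_0$, the ratio $f_1(s)/f_0(s)=(1+\lambda)^{-1/2}\{(1+s^2/d)/(1+s^2/[d(1+\lambda)])\}^{(d+1)/2}$ is strictly increasing on $(0,\infty)$ because $\lambda>0$. Hence $\bar F_1(s)/\bar F_0(s)=\bar F_0(s)^{-1}\int_s^\infty\{f_1(u)/f_0(u)\}f_0(u)\,\mathrm{d}u>f_1(s)/f_0(s)$, which is exactly $\frac{\mathrm{d}}{\mathrm{d}s}\{\bar F_0(s)/\bar F_1(s)\}<0$. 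In your language, $u\,(\log f_{t,d})'(u)=-(d+1)u^2/(d+u^2)$ is decreasing, and that is what makes $u\,r(u)$ increasing.

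Once $G_n$ is continuous and strictly increasing, your bracketing step finishes the proof:
\begin{itemize}
\item the pointwise limits $G_n(\zeta_n(\tau))\to\Psi_1(\tau)/\tau$ at the two fixed points $\tau=\tau_\alpha(1\pm\epsilon)$ give existence for large $n$ by the intermediate value theorem;
\item uniqueness is automatic;
\item the root lies in the window, so $s_{\alpha,n}/\zeta_n(\tau_\alpha)\to1$.
\end{itemize}
The ``main obstacle'' you describe therefore disappears, and no uniform convergence or Dini/P\'olya argument is needed. This is also how the paper proceeds: its subsequence contradiction uses only this monotonicity and the pointwise limits.
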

To interpret the condition on $c_{\star}$, note that (using~\eqref{eq:lfdr_infimum}),
$$
c_{\star} > \frac{1}{\alpha} -1 \;\;\; \Longleftrightarrow \;\;\;\lim_{n \to \infty} \inf_{\boldz \in \RR^K} \mathrm{lfdr}(\boldz) \equiv \frac{1}{1+c_{\star}}  < \alpha. 
$$
Our main theorem compares the asymptotic FDR and power of the following procedures: 
\begin{enumerate}[label=(\roman*), wide]
\item $\mathrm{LFDR}^{\mathrm{OR}}(\alpha)$: the oracle q-value procedure defined in~\eqref{eq:mfdr_equilibrium_rewritten}.
\item $\mathrm{cBH}(\alpha,\mathcal{H}_{\mathrm{orth}})$: BH at level $\alpha$ applied to the compound orthogonal rotation p-values, using the absolute moderated t-statistic with $\hat{\nu}_0$ and $\hat{s}_0^2$ defined in~\eqref{eq:nu_0_quantile} and~\eqref{eq:s_0_median}.
\item $\mathrm{DDR}(\alpha, \tau_n, \mathcal{H}_{\mathrm{orth}})$: as above but with the DDR modification in Procedure~\ref{proc:compound_ddr} and BH applied at level $\alpha$ and with $\tau_n$-censoring (where our theory will take $\tau_n \to 0$ slowly).
\item $\mathrm{cBH}(\alpha,\mathcal{H}_{\mathrm{symm}})$: BH at level $\alpha$ applied to the compound sign flip p-values, using the score in (ii).
\item $\mathrm{SeqStep+}(\alpha)$: Procedure~\ref{proc:seqstep_multi} with the order-two group $\{\mathbf{1}, H_{\mathrm{half}}\}$ using the score in (ii);
\item $t\textnormal{-}\mathrm{BH}(\alpha)$: BH at level $\alpha$ applied to the usual two-sided t-test p-values.
\item $\mathrm{sBH}(\alpha,\mathcal{H}_{\mathrm{symm}})$: BH at level $\alpha$ applied to the separate sign flip p-values, using the score in (ii).\footnote{ Note that by Proposition~\ref{prop:sep_pvalue_limma},  $\mathrm{sBH}(\alpha,\mathcal{H}_{\mathrm{orth}})$ is identical to $t\textnormal{-}\mathrm{BH}(\alpha)$, so we do not include it in the list.
}
\end{enumerate}

\begin{theo}
\label{theo:sparse_regime}
Consider the asymptotic regime described above and fix $\alpha \in (0,1)$. The following hold:
\begin{enumerate}[label=(\roman*)]
\item If $c_\star > (1/\alpha)-1$, then
$\mathrm{FDR}_n[\mathrm{LFDR}^{\mathrm{OR}}(\alpha)] \to \alpha$ and
$\mathrm{Pow}_n[\mathrm{LFDR}^{\mathrm{OR}}(\alpha)] \to 2\Psi_1(\tau_{\alpha})$.
\item If $c_\star > (1/\alpha)-1$, then
$\mathrm{FDR}_n[\mathrm{cBH}(\alpha, \mathcal{H}_{\mathrm{orth}})] \to \alpha$ and
$\mathrm{Pow}_n[\mathrm{cBH}(\alpha, \mathcal{H}_{\mathrm{orth}})] \to 2\Psi_1(\tau_{\alpha})$.
\item If $c_\star > (1/\alpha)-1$ and $\tau_n \to 0$, $\pi_1 / \tau_n \to 0$ as $n \to \infty$, then
$\mathrm{FDR}_n[\mathrm{DDR}(\alpha, \tau_n, \mathcal{H}_{\mathrm{orth}})] \to \alpha$ and
$\mathrm{Pow}_n[\mathrm{DDR}(\alpha, \tau_n, \mathcal{H}_{\mathrm{orth}})] \to 2\Psi_1(\tau_{\alpha})$.
\item If $\alpha>2^{-\nu}$ and $c_\star > (1/\alpha_{\nu})-1$, where $\alpha_\nu:=(\alpha-2^{-\nu})/(1-2^{-\nu})$, then
$\mathrm{FDR}_n[\mathrm{cBH}(\alpha, \mathcal{H}_{\mathrm{symm}})] \to \alpha_{\nu}$ and
$\mathrm{Pow}_n[\mathrm{cBH}(\alpha, \mathcal{H}_{\mathrm{symm}})] \to 2\Psi_1(\tau_{\alpha_{\nu}})$.
\item If $c_\star > (1/\alpha)-1$, then
$\mathrm{FDR}_n[\mathrm{SeqStep+}(\alpha)] \to \alpha$ and
$\mathrm{Pow}_n[\mathrm{SeqStep+}(\alpha)] \to 2\Psi_1(\tau_{\alpha})$.
\item For any $c_{\star} \in (0,\infty)$, $\mathrm{FDR}_n[t\textnormal{-}\mathrm{BH}(\alpha)] \to \alpha$ and $\mathrm{Pow}_n[t\textnormal{-}\mathrm{BH}(\alpha)] \to 0$.
\item For any $c_{\star} \in (0,\infty)$, $\mathrm{FDR}_n[\mathrm{sBH}(\alpha, \mathcal{H}_{\mathrm{symm}})] \to 0$ and
$\mathrm{Pow}_n[\mathrm{sBH}(\alpha, \mathcal{H}_{\mathrm{symm}})] \to 0$.
\end{enumerate}
\end{theo}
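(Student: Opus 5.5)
The common engine for all seven parts is a uniform law-of-large-numbers argument for the empirical null and alternative tail proportions at thresholds of the form $\zeta_n(\tau)$, followed by a ``BH equals plug-in threshold'' argument. I will write $V_n(s):=n^{-1}\sum_i\ind\{|T(\boldZ_i;\nu_0,s_0)|\ge s,\mu_i=0\}$ and $W_n(s)$ for the analogous non-null count. Since $n\pi_1\to\infty$, and by Lemma~\ref{lemm:sparse_limits}, we have $V_n(\zeta_n(\tau))/\pi_1\to\tau$ and $W_n(\zeta_n(\tau))/\pi_1\to 2\Psi_1(\tau)$ in probability. (The factor 2 comes from the two tails, and I expect it to appear with the normalization of $\Psi_1$ used in the statement.) Monotonicity in $\tau$ and a Glivenko--Cantelli/Dini argument upgrade this to uniform convergence on compacts of $(0,\infty)$.

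\textbf{Part (i).} Use Proposition~\ref{prop:asymptotic_threshold}: the oracle rejects at $s_{\alpha,n}\sim\zeta_n(\tau_\alpha)$. The power is $\bar F_1$-type mass, which tends to $2\Psi_1(\tau_\alpha)$. The FDR equals $\alpha(1-\PP{\text{no rejection}})\to\alpha$, because the expected number of rejections is of order $n\pi_1\to\infty$.

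\textbf{Parts (ii)--(iii).} First I would show that $\hat\nu_0\to\nu_0$ and $\hat s_0^2\to s_0^2$ in probability. The non-null fraction $\pi_1\to0$ perturbs the empirical quantiles of $\hat\tau_i^2$ by $O(\pi_1)$, and $R$ is continuous and invertible (Lemma~\ref{lemm:quantile_ratio_monotone}). The hard part is that thresholds go to infinity, so the estimation error must be controlled in the far tail. For the score, $T(\boldz;\hat\nu_0,\hat s_0)/T(\boldz;\nu_0,s_0)\to1$ uniformly in $\boldz$ on events where $\hat\sigma^2(\boldz)$ is not huge. This gives $\widehat S(\boldZ_i)$ crossing $\zeta_n(\tau)$ in agreement with the oracle statistic, up to a $(1\pm\epsilon)$ factor in $\tau$. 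Using Proposition~\ref{prop:orthogonal_rotation_cpvalue}, the compound p-value is $\pcomp_i=\hat G_n(\widehat S(\boldZ_i))$, where $\hat G_n(s)=n^{-1}\sum_k\bar F_{B}(\cdot)$. I would show $\hat G_n(\zeta_n(\tau))/\pi_1\to\tau$ uniformly on compacts. Because the Beta tail evaluated near $1$ behaves like $s^{-(\nu)}$ times a factor depending on $\|\boldZ_k\|^2$, averaging over $k$ should reproduce exactly the $t_{\nu+\nu_0}$ tail constant $L_{\nu+\nu_0}\zeta^{-(\nu+\nu_0)}$ by the $F_{K,\nu_0}$ law of $\hat\tau_k^2$. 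This is a moment-of-$(1+\nu_0s_0^2/\|\boldZ_k\|^2)^{-\nu/2}$ computation, and it is where I expect the main obstacle: the averaged Beta tails must match the moderated-$t$ tail at the $\pi_1$ scale, including the $O(\pi_1)$ non-null contribution, which I would bound by showing it is $o(\pi_1)$ at these thresholds since $\lambda\to\infty$ pushes non-null norms to infinity. With this in hand, BH rejects $\{\pcomp_i\le \alpha R_n/n\}$, where $R_n/n\approx \pi_1(\tau+2\Psi_1(\tau))$. So the BH fixed point is $\tau=\alpha(\tau+2\Psi_1(\tau))$, whose largest solution is $\tau_\alpha$ (with the $\Psi_1$ normalization). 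The standard argument that BH selects the largest crossing, combined with the uniform convergence, gives the limits in (ii). For the DDR version the denominator is $1-O(\tau_n)\to1$, and $\pi_1/\tau_n\to0$ guarantees that censoring at $\tau_n$ is inactive, which gives (iii).

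\textbf{Parts (iv)--(v).} For sign flips, the compound p-value additionally receives, for each unit, the mass of flips $H=\pm\mathbf 1$, which reproduce $\pm\boldZ_k$ itself; each carries weight $2^{-K}$, so together they contribute $2^{-\nu}$. Non-null units therefore contribute an extra $2^{-\nu}W_n$, while the other flips of non-null units behave like null draws. The equation becomes $\tau+2^{-\nu}2\Psi_1=\alpha(\tau+2\Psi_1)$, i.e.\ FDR at level $\alpha_\nu$. For SeqStep+, null $\tilde S_i<0$ with probability $1/2$ and the tail counts match; the $\kappa^{-1}$ term is negligible since $n\pi_1\to\infty$. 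The estimated FDR is then $\tau/(\tau+2\Psi_1)$, so the same fixed point arises.

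\textbf{Parts (vi)--(vii).} For the t-test, the non-null $|T|$ has tail $\asymp(\lambda^{1/2}/s)^{\nu}$ while the null tail is $s^{-\nu}$. BH needs null tail $\lesssim\pi_1$, i.e.\ $s\gtrsim\pi_1^{-1/\nu}$, and then the non-null proportion above $s$ is $\lambda^{\nu/2}\pi_1\to0$, since $\lambda^{\nu/2}\ll\pi_1^{-1}$ by~\eqref{eq:lambda_asymptotics} with $\nu_0>0$. So power goes to $0$. Nulls are exactly uniform and the number of rejections diverges, so the FDR tends to $\alpha$. For separate sign flips, $\psep_i\ge2^{-\nu}$ bounded away from zero forces zero rejections whenever $\alpha n^{-1}\cdot(\#\text{rej})<2^{-\nu}$. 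Since power tends to $0$, the rejection count is $o(n)$, which gives both limits $0$.
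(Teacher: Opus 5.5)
The overall architecture is the paper's: a plug-in threshold at $\zeta_n(\tau)$, consistent hyperparameters, a multiplicative sandwich of $\widehat S$ around $S^*$, the $2^{-\nu}$ self-count for sign flips, and a density-ratio bound for $t$-BH. Several steps, however, fail as written.

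\textbf{SeqStep+ and separate sign flips.} In (v), null sign symmetry only tells you that among nulls the counts of $\tilde S_i\le -s$ and $\tilde S_i\ge s$ agree in expectation given the orbits. For $\mathrm{Fdr}_n(s)$ and oracle power you also need $\{\tilde S_i\ge s\}=\{\widehat S(\boldZ_i)\ge s\}$ and $\{\tilde S_i\le -s\}=\{\widehat S(H_{\mathrm{half}}\boldZ_i)\ge s\}$ at the relevant thresholds, for nulls and non-nulls alike. Without this, the null tail $\PP{\widehat S(\boldZ_i)\ge s,\ \widehat S(\boldZ_i)>\widehat S(H_{\mathrm{half}}\boldZ_i)}$ cannot be identified with $2\bar F_0(s)$, and nothing keeps non-nulls off the negative side.
\begin{itemize}
\item The missing idea is the deterministic bound $\min\{\widehat S(\boldz),\widehat S(H_{\mathrm{half}}\boldz)\}\le\frac{K}{K-1}\sqrt{\nu+\hat\nu_0}$ for all $\boldz$. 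It follows from Cauchy--Schwarz, using $\sum_j(H_{\mathrm{half}})_j\in\{0,1\}$.
\item Above a fixed $M$, this bound reduces the SeqStep+ estimate to $(1+\#\{i:\widehat S(H_{\mathrm{half}}\boldZ_i)\ge s\})/(1\lor\#\{i:\widehat S(\boldZ_i)\ge s\})$ and the rejections to $\{\widehat S(\boldZ_i)\ge\hat s\}$.
\item You also need that non-null $\widehat S(H_{\mathrm{half}}\boldZ_i)$ is bounded in probability.
\item You need $\hat s>M$ with probability tending to one, which holds because the estimate is close to $1$ for $s\le M$.
\end{itemize}
Your (vii) is circular: ``power tends to $0$, so the rejection count is $o(n)$'' presupposes the conclusion, and power says nothing about null rejections. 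The needed fact is that null $\psep_i$ are uniform on the grid $\{j2^{-\nu}\}$ while non-nulls are $o_{\mathbb P}(n)$. Hence $\#\{i:\psep_i\le t\}/(nt)\to1>\alpha$ at each of the finitely many grid values $t$, so BH rejects nothing with probability tending to one.

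\textbf{Compound rotations, DDR and normalization.} In (ii), the step you call the main obstacle rests on a wrong expansion. Write $u=\widehat S^2/(\nu+\hat\nu_0)$, $x=u/(1+u)$, $c=\hat\nu_0\hat s_0^2$ and $r=\Norm{\boldZ_k}_2^2$. Then $\bar F_{B,1/2,\nu/2}\{x(1+c/r)\}\asymp(1+u)^{-\nu/2}(1-uc/r)_+^{\nu/2}$, which vanishes unless $r\gtrsim\widehat S^2$. It is therefore not $s^{-\nu}$ times a function of $r$, and a moment of such a function would give an $s^{-\nu}$ tail, not $s^{-(\nu+\nu_0)}$.
\begin{itemize}
\item No computation is needed. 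For null $k$, $\EE{f(\boldZ_k)\mid\mathcal O_k}=\int f(H\boldZ_k)\dd\mu_{\mathcal H}(H)$ gives $\EE{\int\ind\{S^*(H\boldZ_k)\ge s\}\dd\mu_{\mathcal H}(H)}=2\bar F_0(s)$ exactly.
\item Combine this with relative concentration uniform over all $s\le\zeta_n(\tau_+)$, not just $\tau$ in compacts. Excluding earlier BH crossings requires scales from $\pi_1$ to $1$, so Glivenko--Cantelli/Dini is not enough; you need peeling.
\item Large non-null norms do not make the non-null part small; they make rotated scores as heavy as $\sqrt{(\nu+\nu_0)/\nu}\,|t_\nu|$. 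Negligibility comes from $S^*\le\sqrt{(\nu+\nu_0)/\nu}\,|T(\cdot;0,1)|$ with $T(H\boldZ_k;0,1)\sim t_\nu$. This gives a ratio to $\bar F_0(s)$ of $O(\pi_1 s^{\nu_0})=O(\pi_1^{\nu/(\nu+\nu_0)})$ for $s\lesssim\zeta_n(\tau)$.
\item The same bound shows the DDR denominators in (iii) are $1-O_{\mathbb P}(s_{\tau_n}^{-\nu})$, not $1-O(\tau_n)$, so you must show $s_{\tau_n}\to\infty$.
\end{itemize}
Your normalization is also off by a factor of two. The null count at $\zeta_n(\tau)$ is $2\pi_1\tau$, so the fixed point is $\tau=\alpha\{\tau+\Psi_1(\tau)\}$. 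Your equation $\tau=\alpha\{\tau+2\Psi_1(\tau)\}$ is not solved by $\tau_\alpha$, and (iv) should read $\tau+2^{-\nu}\Psi_1=\alpha(\tau+\Psi_1)$.

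\textbf{Two false side claims.}
\begin{itemize}
\item In (iv), non-null scores under flips $H\notin\{\pm\mathbf 1\}$ do not behave like null draws. They converge to the bounded constants $\{(\nu+\nu_0)\delta_K(H)^2/(1-\delta_K(H)^2)\}^{1/2}$ with $\delta_K(H):=K^{-1}\sum_jH_j$, which you must prove from $\boldZ_i/\mu_i\to\mathbf 1$.
\item In (vi), the number of $t$-BH rejections stays $O_{\mathbb P}(1)$ rather than diverging. The claim $\mathrm{FDR}=\pi_0\alpha$ instead follows from exact BH theory with independent uniform nulls.
\end{itemize}
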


Let us discuss this result. First, compound BH with orthogonal rotations asymptotically matches the power of the oracle local false discovery rate procedure. Its FDR converges to $\alpha$ (and so in this asymptotic regime, the $1.93$ factor in Theorem~\ref{theo:bh} is not binding). Meanwhile, compound BH with sign flips pays a price which can be attributed to the identity group element and its negative. In particular, both $\mathbf{1}$ and $-\mathbf{1}$ leave the absolute moderated t-statistic unchanged, so that
\begin{equation}
    \begin{aligned}
&\pcompfun(\boldz;\widehat{S}(\cdot),\mathcal{O}_{1:n},\mathcal{H}_{\mathrm{symm}}) \\
&\quad= \frac{2^{-\nu}}{n}\sum_{i=1}^n\ind\cb{\widehat{S}(\boldZ_i)\geq\widehat{S}(\boldz)}
+\frac{1}{n2^K}\sum_{i=1}^n\sum_{H\in\mathcal{H}_{\mathrm{symm}}\setminus\{\mathbf{1},-\mathbf{1}\}}\ind\cb{\widehat{S}(H\boldZ_i)\geq\widehat{S}(\boldz)}.
    \end{aligned}
\label{eq:compound_pvalue_symm_identity}
\end{equation}
Thus every score exceeding $\widehat{S}(\boldz)$ contributes $2^{-\nu}/n$ to the compound p-value, including non-nulls. The condition $\alpha>2^{-\nu}$ in part (iv) is also consistent with Proposition~\ref{prop:no_discoveries}, since here $1/(\kappa+1)=2^{-\nu}$ after reducing the group size by half.\footnote{Since $\widehat S(\boldz)=\widehat S(-\boldz)$, we may
restrict to sign flips whose first coordinate is $+1$, without changing
the p-values. This subgroup has $2^\nu$ elements.}

Let us explain the asymptotic implication of~\eqref{eq:compound_pvalue_symm_identity} informally.
Write $\bar{F}$ for the right-tailed distribution function of $\widehat{S}(\boldZ_i)$ and $\bar{F}_0$ for the right-tailed null distribution function.   By~\eqref{eq:compound_pvalue_symm_identity}, we have that compound BH is using an inflated estimate of $\bar{F}_0$: assuming that \smash{$\widehat{S}(H\boldZ_i)$} behaves like a null score for  all $H \notin \cb{\pm \mathbf{1}}$,\footnote{This is exactly correct only for null $i$.}  this estimate is approximately $2^{-\nu}\bar{F}+(1-2^{-\nu})\bar{F}_0$. Thus the alternatives contaminate the estimated null distribution, and BH rejects any hypothesis with $\widehat{S}(\boldZ_i) \geq  s$ for $s$ that approximately satisfies:
$$
\frac{2^{-\nu} \bar{F}(s) + (1-2^{-\nu}) \bar{F}_0(s)}{\bar{F}(s)} \approx \alpha, \,\text{ and, by rearranging, }\, \frac{\bar{F}_0(s)}{\bar{F}(s)} \approx \frac{\alpha - 2^{-\nu}}{1-2^{-\nu}}= \alpha_{\nu}.
$$
These heuristics can be made precise in the asymptotics whereby $\mathrm{cBH}(\alpha,\mathcal{H}_{\mathrm{symm}})$ has asymptotic FDR equal to $\alpha_{\nu} < \alpha$ and matches the power of the oracle procedure run at level $\alpha_{\nu}$.
Orthogonal rotations do not pay this price, since the two transformations have Haar measure zero in $O(K)$.  
(On the other hand, sign flip invariance is a weaker assumption than orthogonal rotation invariance, so that we may still prefer the former in practice.)

Turning to the other baselines, $\mathrm{sBH}(\alpha, \mathcal{H}_{\mathrm{symm}})$ has asymptotically negligible power. This is purely because of discreteness (the smallest possible separate p-value is $2^{-\nu}$), which implies that for large enough $n$, BH makes exactly $0$ discoveries with high probability; this is established as part of the proof. Meanwhile $t\textnormal{-}\mathrm{BH}(\alpha)$ has also asymptotically negligible power, but for a different reason (note that here $\mathrm{FDR}_n$ converges to $\alpha$, so the probability of making at least one discovery is asymptotically at least $\alpha$): the ranking implied by the two-sided t-statistics is not sufficiently powerful for detecting non-nulls in our regime. In fact, one can show that t-tests with BH would only have non-trivial power if we substantially strengthened the signals by requiring 
$\pi_1 \lambda^{\nu/2} \to c_{\star}$ instead of $\pi_1\lambda^{(\nu + \nu_0)/2} \to c_{\star}$ in~\eqref{eq:lambda_asymptotics}. This shows the benefit of the additional $\nu_0$ degrees of freedom obtained through variance moderation.
Finally, in Supplement~\ref{subsec:sens_limma}, we construct an oracle version of SENS~\citep{tian2025conformalized} that uses the true local false discovery rate conditional on the summary statistic used by SENS and show it also has asymptotically negligible power in our regime.

Theorem~\ref{theo:sparse_regime} illustrates the main claim of our paper: compound randomization can retain the power gains from learning a score across hypotheses, while maintaining validity under the assumed null invariance. In the asymptotic regime of this section, the gains can be substantial, with compound BH attaining non-negligible power (matching the oracle for orthogonal rotation) even though separate randomization and ordinary t-tests have asymptotically zero power. The result also highlights the benefits of compound over separate randomization: both use the same (nearly optimal) score function, but separate randomization does not retain the ranking implied by this score function.

\section{Numerical results}
\label{sec:numerical_results}

We illustrate the performance of our proposed methods in numerical experiments. Throughout, we compare six of the procedures from Theorem~\ref{theo:sparse_regime} (labeled ``Local fdr oracle'', ``Comp. $\mathcal{H}_{\mathrm{orth}}$'', ``Comp. $\mathcal{H}_{\mathrm{symm}}$'', ``Limma SeqStep+'', ``t-test'', and ``Sep. $\mathcal{H}_{\mathrm{symm}}$'', respectively, in the figures). Recall from Proposition~\ref{prop:sep_pvalue_limma} that separate orthogonal rotation p-values coincide with ordinary t-test p-values, so both are represented by ``t-test''. We also consider the plug-in limma method of~\citet{smyth2004linear} (``Limma''). Finally, we compare to the Gaussian and General variants of SENS from~\citet{tian2025conformalized} (``SENS Gaussian'' and ``SENS General''). Both control the FDR under sign flip invariance, but differ in the score function used. In Supplement~\ref{subsec:compound_variants}, we also compare compound BH at level $\alpha$, as used here, with BH at level $\alpha/1.93$ and the DDR modification, for both orthogonal rotations and sign flips, both of which ensure finite-sample FDR control at level $\alpha$.

We generate $n=5{,}000$ independent hypotheses, with $(\mu_i,\sigma_i^2)$ drawn according to~\eqref{eq:lonnsted_prior}, and observe
$$
Z_{ij} = \mu_i + \sigma_i\varepsilon_{ij},\;\;\; j=1,\ldots,K,
$$
where the errors $\varepsilon_{ij}$ are independent of the parameters and of one another, with mean zero and variance one. Throughout, we take $\lambda=10$, $\nu_0=10$, $s_0^2=1$, and nominal FDR level $\alpha=0.1$.  We report estimates of the FDR and power defined in~\eqref{eq:fdr_power_defs} averaged over $500$ Monte Carlo replicates. We consider four settings:
\begin{itemize}[nosep]
\item Gaussian noise, varying $K$: $\varepsilon_{ij} \sim \mathrm{N}(0,1)$  with $\pi_1=0.025$ and $K\in\{3,5,\ldots,13\}$;
\item Gaussian noise, varying $\pi_1$: $\varepsilon_{ij} \sim \mathrm{N}(0,1)$ with $K=5$ and $\pi_1\in\{0.005,0.015,\ldots,0.045\}$;
\item Uniform noise, varying $K$: $\varepsilon_{ij} \sim \mathrm{Un}[-\sqrt{3},\sqrt{3}]$ with $\pi_1=0.025$ and  $K\in\{3,5,\ldots,13\}$;
\item Laplace noise, varying $K$: $\varepsilon_{ij} \sim \mathrm{Laplace}(0,1/\sqrt{2})$ with $\pi_1=0.025$ and $K\in\{3,5,\ldots,13\}$.
\end{itemize}

In Figure~\ref{fig:main_gaussian_AB}, we show the results for the first two settings with Gaussian noise, so that both~\eqref{eq:normal_samples} and~\eqref{eq:lonnsted_prior} hold. In other words, this is the setting studied in Section~\ref{sec:power_analysis}. All methods approximately control the FDR at the nominal level. The local fdr oracle and limma have nearly identical power, higher than the other methods. Compound BH with orthogonal rotations has higher power than compound BH with sign flips, as in Theorem~\ref{theo:sparse_regime}. Ordinary t-tests, SENS and separate BH with sign flips have lower power. In Panel A, we see that the power of all methods increases with $K$. Notice that for $K=3$, $\alpha < 2^{-\nu}$, and compound BH with sign flips has no power, but it has non-negligible power for $K\geq 5$.  Separate sign flips only start to make discoveries at $K=13$, when the smallest p-value is $2^{-12} \approx 0.00024$. Among data-driven methods with finite-sample guarantees, compound BH with orthogonal rotations and Limma SeqStep+ have non-negligible power for $K=3$, with Limma SeqStep+ having higher power. For $K\geq5$, compound BH with orthogonal rotations has similar or higher power than Limma SeqStep+. The gap between compound BH with orthogonal rotations and sign flips becomes smaller as $K$ increases (which is consistent with the asymptotic results in Theorem~\ref{theo:sparse_regime}). Meanwhile, in Panel B, we see that compound BH with orthogonal rotations gets closer to the oracle as $\pi_1$ decreases, consistent with the sparse asymptotic regime of Section~\ref{sec:power_analysis}. Limma SeqStep+ is close to the oracle for the largest $\pi_1$, but has much lower power at $\pi_1=0.005$. This is consistent with the cost of the $+1$ correction discussed in Section~\ref{subsec:seqstep}. Also, for $K=5$ we have $\alpha_\nu=(0.1-2^{-4})/(1-2^{-4})=0.04$, close to the empirical FDR of compound BH with sign flips at $\pi_1=0.025$.

\begin{figure}
\centering
\includegraphics[width=1\textwidth]{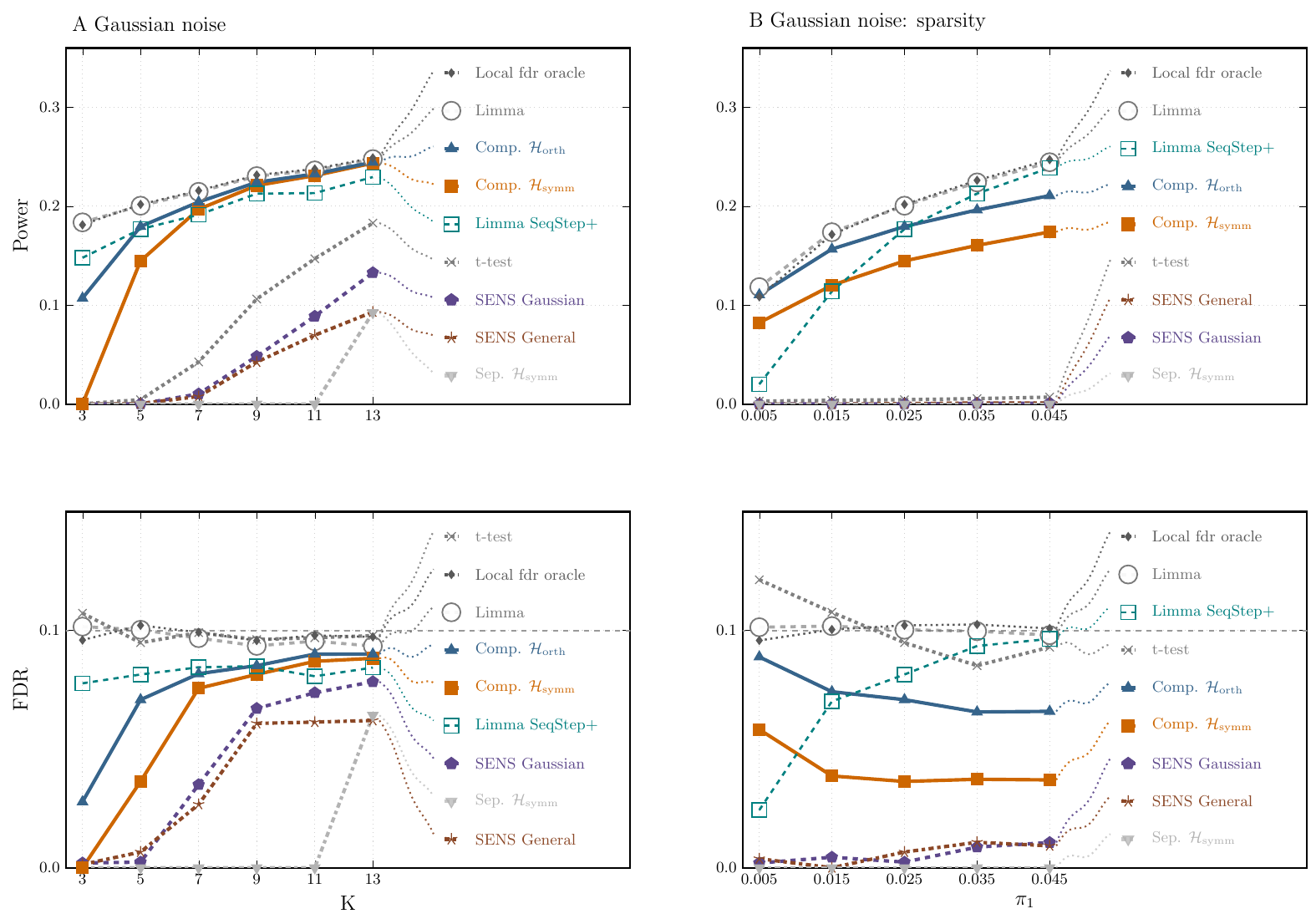}
\caption{Power (top) and FDR (bottom) under Gaussian noise. Panel A varies the number of replicates $K$ with $\pi_1=0.025$; Panel B varies the non-null proportion $\pi_1$ with $K=5$. The nominal FDR is set to $\alpha=0.1$. Our compound randomization procedures (``Comp. $\mathcal{H}_{\mathrm{orth}}$'' and ``Comp. $\mathcal{H}_{\mathrm{symm}}$'') have substantially higher power than separate randomization, ordinary t-tests, and SENS, with orthogonal rotations retaining power even at $K=3$. Their power approaches the oracle as $K$ increases (Panel A), and the gap between compound orthogonal rotations and the oracle narrows as $\pi_1$ decreases (Panel B). Limma SeqStep+ also retains power at $K=3$, but loses power for the smallest $\pi_1$.}
\label{fig:main_gaussian_AB}
\end{figure}

In Figure~\ref{fig:main_robustness_CD}, we show the results for the last two settings, with uniform and Laplace noise. Both distributions have variance one (the second parameter of the Laplace distribution is its scale), so the conditional variance of $Z_{ij}$ remains $\sigma_i^2$. The null distributions are invariant under sign flips, but not under orthogonal rotations. We apply all methods as in Figure~\ref{fig:main_gaussian_AB}. In particular, the local fdr oracle still uses the Gaussian formulas with the true prior hyperparameters, so it is no longer a correctly calibrated oracle (and is labeled ``Gaussian oracle'').
In Panel C, with uniform noise, we see that ordinary t-tests, limma and the Gaussian oracle exceed the nominal FDR level. Compound BH with orthogonal rotations also exceeds the nominal level for $K\geq7$, with FDR around $0.16$ for $K\geq9$. Compound BH with sign flips stays below the nominal level throughout, and has substantially higher power than separate BH with sign flips and SENS for $K\geq5$. Limma SeqStep+ also stays below the nominal level, with higher power than compound sign flips for $K\leq7$ and lower power for $K\geq9$.
Thus the higher power of the Gaussian methods in this panel comes with a loss of FDR control. Recall the inclusions in~\eqref{eq:null_inclusions}. The uniform null belongs to $\mathcal{P}_i(\mathcal{H}_{\mathrm{symm}})$ but not to $\mathcal{P}_i(\mathcal{H}_{\mathrm{orth}})$ or $\mathcal{P}_i^{\mathrm{N}}$. Here, compound rotations incur much less FDR inflation than the Gaussian methods.
Meanwhile, in Panel D, all methods approximately control the FDR empirically, with ordinary t-tests, limma and compound BH with orthogonal rotations being conservative. Limma SeqStep+ has the highest power among data-driven methods for $K\leq7$, while compound BH with sign flips has slightly higher power for $K\geq9$. Both exceed the Gaussian oracle for $K\geq5$. Thus the learned score can remain useful when the working model is incorrect: sign flip calibration retains power and validity.

Overall, the experiments support the main claim of our paper: compound randomization retains the power gains from borrowing information across hypotheses, which separate randomization with the same score does not retain. These gains persist even when the working model used to learn the score is incorrect. Compound sign flips retain their validity under the weaker symmetry assumption, while compound rotations can exceed the nominal FDR level when rotation invariance does not hold.

\begin{figure}
\centering
\includegraphics[width=1\textwidth]{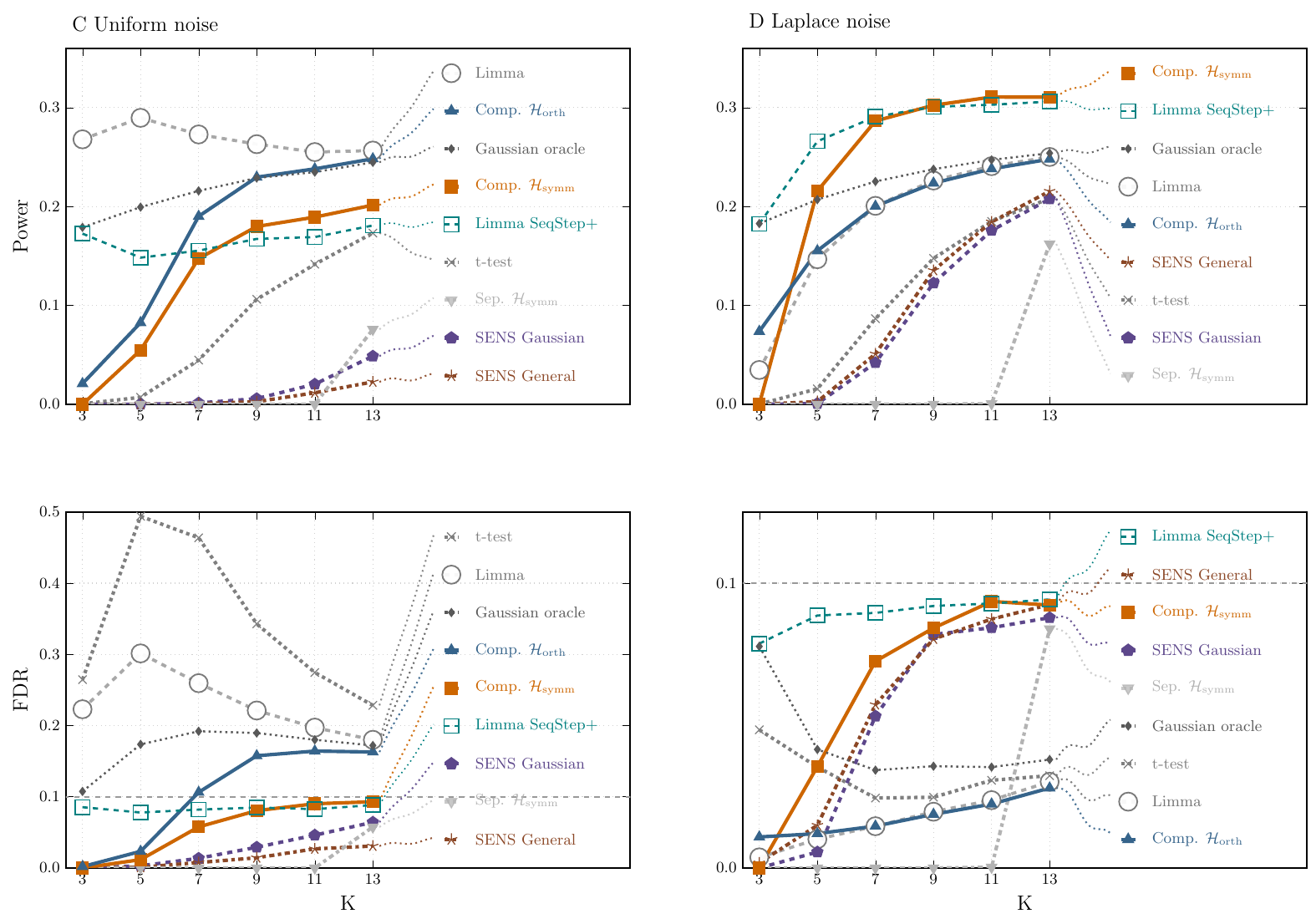}
\caption{Power (top) and FDR (bottom) under uniform noise (Panel C) and Laplace noise (Panel D). We vary the number of replicates $K$ with $\pi_1=0.025$ and nominal FDR level $\alpha=0.1$; the FDR axes have different scales. The curve labeled ``Gaussian oracle'' uses Gaussian formulas with the true prior hyperparameters. In Panel C, several Gaussian-based methods fail to control the FDR. Our compound sign flip procedure (``Comp. $\mathcal{H}_{\mathrm{symm}}$'') and Limma SeqStep+ stay below the nominal level, with higher power than separate sign flips and SENS for $K\geq5$. Our compound rotation procedure (``Comp. $\mathcal{H}_{\mathrm{orth}}$'') has higher power for $K\geq7$, but exceeds the nominal FDR level. In Panel D, Limma SeqStep+ has the highest power among data-driven methods for $K\leq7$, and compound sign flips for $K\geq9$; both stay below the nominal FDR level. Compound rotations also retain substantial power and have conservative FDR.}
\label{fig:main_robustness_CD}
\end{figure}

The simulations have shown the benefit of compound randomization in situations where limma~\citep{smyth2004linear} is relevant: the regime of small $K$ where it is important to share information about the variances. In other words, this is an example of a setting where a good score function can be learned using shared properties of the composite null distributions. In Supplement~\ref{sec:stylized_example}, we also consider a stylized setting where compound randomization can be beneficial even for larger $K$.

\section{Other settings for compound randomization}
\label{sec:other_settings}
So far in this paper we have focused on one-sample testing. Here we show that the proposed framework is much broader by demonstrating how it can be applied to two-sample testing (Section~\ref{subsec:twosample}) and for testing of linear model coefficients (Section~\ref{subsec:linearmodel}).

\subsection{Two-sample testing}
\label{subsec:twosample}

A common task in high-throughput biology is testing for differential expression, i.e., whether the mean expression of each gene differs between two groups, e.g., treatment and control. Suppose that the $K$ observations for each unit come from two groups $A$ and $B$, with sizes $K_A, K_B \geq 1$ and $K_A + K_B = K > 2$. For $j=1,\ldots,K$ and $i=1,\ldots,n$, we consider the Gaussian model
\begin{equation}
Z_{ij} = \theta_i + \delta_i \ind\cb{j \in B} + \varepsilon_{ij},\;\; \varepsilon_{ij} \simindep \mathrm{N}(0, \sigma_i^2).
\label{eq:normal_samples_two_sample}
\end{equation}
Here $\theta_i$ is the mean in group $A$, $\delta_i$ is the difference in means between groups $B$ and $A$, and $\sigma_i^2$ is the common variance within unit $i$, all unknown. The null hypotheses are $H_{0,i}: \delta_i = 0$ for $i=1,\ldots,n$. Writing $\boldZ_i = (Z_{i1},\ldots, Z_{iK})^\top$, the $i$-th null hypothesis takes the form:
$$
\mathcal{P}_i^{\mathrm{N}} := \cb{\mathbb P \in \mathcal{P}\, :\, \boldZ_i \sim \mathrm{N}(\theta_i\mathbf{1},\, \sigma_i^2 I_K) \text{ for some } \theta_i \in \RR,\; \sigma_i^2>0}.
$$
As in the one-sample setting, we embed this null in broader hypotheses $\mathcal{P}_i(\mathcal{H})$ for two choices of the group.

\paragraph{Permutations.}
We take
$$
\mathcal{H}_{\mathrm{perm}} := \mathfrak{S}_K,
$$
the group of permutations of $\{1,\ldots,K\}$, which acts on $\boldz$ by permuting its coordinates. In this case, $\mathcal{P}_i(\mathcal{H}_{\mathrm{perm}})$ is the hypothesis that the $K$ observations are exchangeable. We may identify the orbit $\mathcal{O}_i$ with the empirical distribution $K^{-1}\sum_{j=1}^K\delta_{Z_{ij}}$.

\paragraph{Orthogonal rotations.}
We may instead take
$$
\mathcal{H}_{\mathrm{orth},\mathbf{1}}:=\cb{H\in O(K):H\mathbf{1}=\mathbf{1}},
$$
which fixes the common mean and rotates the centered observations. Here the orbit is determined as $\mathcal{O}_i = (\hat\mu(\boldZ_i),\hat{\sigma}^2(\boldZ_i))$, that is, the sample mean and the sample variance of $\boldZ_i$ computed as in~\eqref{eq:summary_statistics}, ignoring which group each observation belongs to.

We have that,
$$
\mathcal{P}_i^{\mathrm{N}}\subseteq\mathcal{P}_i(\mathcal{H}_{\mathrm{orth},\mathbf{1}})\subseteq\mathcal{P}_i(\mathcal{H}_{\mathrm{perm}}).
$$
We discuss how our methods apply to this setting in more detail in Supplement~\ref{sec:twosample_supp}.

\subsection{Testing for a coefficient in a linear model}
\label{subsec:linearmodel}
Another common question in genomics, which nests both the one- and two-sample problems,
is whether a regression coefficient is equal to $0$. Consider
\begin{equation}
Z_{ij} = W_j\delta_i + X_{j}^\top \theta_i + \varepsilon_{ij},\;\; \varepsilon_{ij} \simiid \mathrm{N}(0, \sigma_i^2),\;\; \text{ for } j=1,\ldots,K.
\label{eq:normal_samples_linear_model}
\end{equation}
Here $W_j\in\RR$ is the covariate of interest and $X_j\in\RR^p$ contains the nuisance covariates for observation $j$, all treated as fixed. Let $W:=(W_1,\ldots,W_K)^\top$ and $X:=(X_1,\ldots,X_K)^\top\in\RR^{K\times p}$, so that the $j$-th row of $X$ is $X_j^\top$.\footnote{
In principle $W$ and $X$ could depend on $i$, that is we could replace $W$ by \smash{$W^i=(W_1^i,\ldots,W_K^i)$} and $X$ by \smash{$X^i = (X_1^i,\ldots,X_K^i)$}. We omit this dependence for notational simplicity and because for applications in high-throughput biology it is very common to have a shared design across units.
} We assume that $(W,X)$ has full column rank and $K>p+1$. We observe independent units and wish to test $\delta_i=0$. We can write the $i$-th null hypothesis as\footnote{The difference to the fixed-$X$ knockoff setting of~\citet{barber2015controlling} is as follows: therein one is faced with a single regression task and seeks to test whether each of many coefficients are equal to zero. Meanwhile, here we are faced with $n$ regression tasks pertaining to distinct units, and test one coefficient per unit.}
$$
\mathcal{P}_i^{\mathrm{N}}:=\cb{\mathbb P\in\mathcal{P}\,:\,\boldZ_i\sim\mathrm{N}(X\theta_i,\sigma_i^2 I_K)\text{ for some }\theta_i\in\RR^p,\;\sigma_i^2>0}.
$$

\paragraph{Orthogonal rotations.}
We embed $\mathcal{P}_i^{\mathrm{N}} \subseteq \mathcal{P}_i(\mathcal{H}_{\mathrm{orth},X})$, where
\begin{equation}
\mathcal{H}_{\mathrm{orth},X} := \cb{H \in O(K) : HX = X}
\label{eq:rotations_within}
\end{equation}
is the subgroup of orthogonal transformations that fix the nuisance design. Writing
$P_X = I_K - UU^\top$ for the projection onto the column space of $X$, where the columns of $U$ form an orthonormal basis
of its orthogonal complement, every
element of $\mathcal{H}_{\mathrm{orth},X}$ can be written as $H = P_X + UQU^\top$ for a 
$Q \in O(K-p)$, so that
$$
H\boldZ_i = P_X\boldZ_i + UQ\,U^\top\boldZ_i.
$$
Thus $H$ leaves the fitted part unchanged and rotates the residual coordinates $U^\top\boldZ_i$.  We may identify the orbit $\mathcal{O}_i$ with $(P_X\boldZ_i,\Norm{U^\top\boldZ_i}_2)$. This construction is used in the rotation tests of~\citet{langsrud2005rotation, wu2010roast, perry2010arotation, solari2014rotationbased}. (In the context of multiple testing, these authors are interested in this construction for dealing with dependence across units, rather than for learning a score function across units and using it to construct compound p-values.)
We discuss how our methods apply to this setting in more detail in Supplement~\ref{sec:linear_model_supp}.

\paragraph{Stratified permutations.} A further approach that makes weaker assumptions is via the stratified permutations of~\citet{dhaultfoeuille2024robust}.  Identifying permutations with their permutation matrices,\footnote{
For $H \in \mathcal{H}_{\mathrm{perm}} \equiv \mathfrak{S}_K$, $X \in \RR^{K \times p}$, then $HX$ is the matrix whose rows have been permuted by $H$.
} we consider the subgroup that fixes $X$:
\begin{equation}
\mathcal{H}_{\mathrm{perm},X}:=\cb{H\in\mathcal{H}_{\mathrm{perm}}:HX=X}.
\label{eq:permutations_within}
\end{equation}
Notice that $\mathcal{P}_i^{\mathrm{N}} \subseteq \mathcal{P}_i(\mathcal{H}_{\mathrm{orth},X}) \subseteq \mathcal{P}_i(\mathcal{H}_{\mathrm{perm},X}).$
To describe these permutations more explicitly, let $x_1,\ldots,x_L$ be the distinct rows of $X$, and let $I_\ell:=\{j:X_j=x_\ell\}$ be the corresponding strata. Then $H$ permutes observations within each $I_\ell$, with no permutations across strata, so that
\smash{$
\mathcal{H}_{\mathrm{perm},X}\cong\prod_{\ell=1}^L\mathfrak{S}_{|I_\ell|}$}. We may identify the orbit $\mathcal{O}_i$ with the collection of empirical distributions \smash{$(|I_\ell|^{-1}\sum_{j\in I_\ell}\delta_{Z_{ij}})_{\ell=1}^L$}. 

Interpreted in terms of~\eqref{eq:normal_samples_linear_model}, the null $\mathcal{P}_i(\mathcal{H}_{\mathrm{perm},X})$ requires that $\delta_i=0$ and that the error terms within each stratum $\{\varepsilon_{ij} \,:\, j \in I_{\ell}\}$ be exchangeable. In particular, we have that
$\mathcal{P}_i^{\mathrm{N}} \subseteq \mathcal{P}_i^{\mathrm{exch}}\subseteq\mathcal{P}_i(\mathcal{H}_{\mathrm{perm},X})$, where
$$
\mathcal{P}_i^{\mathrm{exch}}:=\cb{\mathbb P\in\mathcal{P}:\boldZ_i-X\theta_i\text{ has exchangeable coordinates for some }\theta_i\in\RR^p}.
$$
A disadvantage of using the permutation group in~\eqref{eq:permutations_within}
is that it requires repeated rows in $X$ for the group to be non-trivial and so it essentially requires discrete entries. If such repeated rows are not available, then we can still construct randomization tests for 
$\mathcal{P}_i^{\mathrm{exch}}$ via the approaches of~\citet{lei2021assumptionfree} and~\citet{pouliot2026exact}.

\section{Real data analyses}

\subsection{Differential proteoform abundance after cytokine treatment}
\label{subsec:yves}

We reanalyze the proteomics study of~\citet{ives2025characterization}, who measured proteoform abundances in matched control and cytokine-treated human pancreatic islets from six donors. Treatment consisted of a 24-hour exposure to interferon-$\gamma$ and interleukin-1$\beta$. 
We retain the $n=286$ proteoforms observed in both conditions (control and treatment) for all six donors. For each proteoform $i$, we take $\boldZ_i\in\RR^6$ to be the vector of treated-minus-control pairwise differences of $\log_2$-transformed intensities. This sets us in the one-sample setting.

We seek to control FDR at both  $\alpha=0.05$ and $\alpha=0.1$.
We use the moderated t-statistic score with $(\hat{\nu}_0,\hat{s}_0^2)=(3.863,0.474)$ estimated by~\eqref{eq:nu_0_quantile}--\eqref{eq:s_0_median}; limma gives $(3.882,0.511)$. For SeqStep+, we run Procedure~\ref{proc:seqstep_multi} separately with the group $\{\mathbf{1},H\}$ for all $H\in\{\pm1\}^6$ with $\#\{j:H_j=+1\}=3$, including $H_{\mathrm{half}}$ from~\eqref{eq:h_half}. We count $H$ and $-H$ only once, since they give identical results, yielding ten choices of groups $\{\mathbf{1},H\}$.

\begin{table}
\centering
\caption{Number of discoveries in the proteomics study. Every entry equal to $1$ refers to the same beta-2 microglobulin (B2M) proteoform. For SeqStep+, parentheses give the number of groups (out of 10) yielding each result. }
\label{tab:ives-discoveries}
\small
\setlength{\tabcolsep}{9pt}
\begin{tabular}{@{}lcc@{}}
\toprule
Method & $\alpha=0.05$ & $\alpha=0.1$ \\
\midrule
BH + t-test & $0$ & $0$ \\
BH + separate sign flips & $0$ & $0$ \\
BH + Limma & $1$ & $1$ \\
Compound BH (or DDR with $\tau=\alpha/10$) with rotations & $1$ & $1$ \\
Compound BH (or DDR with $\tau=\alpha/10$) with sign flips & $0$ & $1$ \\
Compound BH at $\alpha/1.93$ with rotations & $0$ & $1$ \\
Compound BH at $\alpha/1.93$ with sign flips & $0$ & $0$\\
Selective SeqStep+ & $0$ (10) & $0$ (9), $24$ (1) \\
\bottomrule
\end{tabular}
\end{table}

\begin{table}
\centering
\caption{Paired differences and unadjusted p-values for the B2M (amino acids 21--119) and GCG (amino acids 92--128) proteoforms. }
\label{tab:ives-measurements}
\small
\setlength{\tabcolsep}{3.3pt}
\begin{tabular}{@{}lrrrrrrrrrr@{}}
\toprule
& \multicolumn{6}{c}{Donor} & \multicolumn{2}{c}{Separate p-values} & \multicolumn{2}{c}{Compound p-values} \\
\cmidrule(lr){2-7}\cmidrule(lr){8-9}\cmidrule(l){10-11}
& 1 & 2 & 3 & 4 & 5 & 6 & t-test & Sign flips & Rotations & Sign flips \\
\midrule
B2M & $2.15$ & $1.70$ & $1.08$ & $2.05$ & $2.48$ & $0.94$ & $9.8\times10^{-4}$ & $\begin{gathered}1/32\\[-2pt]\approx0.031\end{gathered}$ & $1.6\times10^{-4}$ & $\begin{gathered}2/(286\cdot32)\\[-2pt]\approx2.2\times10^{-4}\end{gathered}$ \\
GCG & $2.72$ & $1.64$ & $-1.60$ & $-1.19$ & $3.33$ & $-1.81$ & $0.61$ & $\begin{gathered}20/32\\[-2pt]\approx0.63\end{gathered}$ & $0.51$ & $\begin{gathered}4736/(286\cdot32)\\[-2pt]\approx0.52\end{gathered}$ \\
\bottomrule
\end{tabular}
\end{table}

We report the number of rejections of different methods in Table~\ref{tab:ives-discoveries}. The t-test makes no discoveries. Both limma and compound rotations (with both BH and DDR) each select one proteoform at both $\alpha=0.05$ and $\alpha=0.1$: the same beta-2 microglobulin (B2M) proteoform. Compound sign flips also reject B2M at $\alpha=0.1$. Meanwhile, SeqStep+  makes $0$ discoveries for $\alpha=0.05$ and it makes $0$ discoveries for $9$ of the $10$ groups at $\alpha=0.1$, while it makes $24$ discoveries for the remaining group. We view this as an important disadvantage of SeqStep+ that the number of discoveries can be so sensitive to the arbitrary choice of $H$ (equivalently, to the arbitrary labeling of six donors).

The first row of Table~\ref{tab:ives-measurements} shows $\boldZ_i$ 
for the B2M proteoform, along with the separate and compound p-values. Note that here all six $Z_{ij}$ are positive, which suggests a potentially large positive $\mu_i$. The second row shows another proteoform, namely a fragment of proglucagon (GCG) comprising amino acids 92--128. Here, half the $Z_{ij}$ are positive and half negative, and so GCG is consistent with the null and receives large p-values. The reason we show GCG is that it explains the compound sign flip p-value for B2M: in~\eqref{eq:compound_pvalue_fun_sign_flip}, $\widehat{S}(\boldZ_{\mathrm{B2M}})$ is only equaled or exceeded by itself and by GCG under the sign pattern $H=(+1,+1,-1,-1,+1,-1)$, which makes all entries of $H\boldZ_{\mathrm{GCG}}$ positive. This yields the compound sign flip p-value of $2/(286\cdot32)$ (where we divide by $32$ instead of $2^6=64$ since we are not counting $H$ and $-H$ separately). In the same table we see that the compound rotation p-value for B2M is slightly smaller than its compound sign flip p-value, which one could partially attribute to the latter's self-counting in~\eqref{eq:compound_pvalue_symm_identity}.

\subsection{Differential methylation after T cell activation}
\label{subsec:methylation}
Our next study pertains to DNA methylation in human T cells~\citep{zhang2013genomewide} and was reanalyzed using limma by~\citet{maksimovic2017crosspackage}. The data consist of ten samples from three human donors (M28, M29, M30), with measurements on naive T cells (naive), resting regulatory T cells (rest.\ Treg), and their activated counterparts (act.\ naive and act.\ Treg). Following the preprocessing of~\citet{maksimovic2017crosspackage}, we retain $n=439{,}918$ CpG sites and use the ``methylation M-values'' as the outcomes.

For the $i$-th CpG site, we model its methylation via the additive ANOVA model with
\begin{equation}
\label{eq:overparameterized}
Z_{ij} = \alpha_i(\mathrm{M}(j)) + \beta_i(\mathrm{C}(j)) + \varepsilon_{ij},\;\;j=1,\ldots,10,
\end{equation}
where $\mathrm{M}(j) \in \{\mathrm{M}28, \mathrm{M}29, \mathrm{M}30\}$ denotes the human donor of the $j$-th sample, and $\mathrm{C}(j) \in \{\text{naive, rest.\ Treg, act.\ naive, act.\ Treg}\}$ denotes the cell type and status; and we model both by fixed effects.\footnote{The $7$ parameters $\{\alpha_i(m)\}_m$ and $\{\beta_i(c)\}_c$ are not identified, but contrasts thereof are.
} We are interested in testing whether contrasts of the form $\beta_i(c) - \beta_i(c')$ for $c \neq c'$ are equal to $0$ and we first consider this contrast for $c=\text{``rest.\ Treg''}$ and $c'=\text{``naive''}$. To this end, we  reparameterize~\eqref{eq:overparameterized} as in~\eqref{eq:normal_samples_linear_model} such that $\delta_i = \beta_i(c) - \beta_i(c')$. All ten samples are used for each contrast.
Here $W_j$ is an indicator whether the sample is from resting Treg cells, and $X_j$ contains an intercept and indicators for M29, M30 (where M29 and M30 are two of the three human donors), activated naive, and activated Treg. Figure~\ref{fig:design_swaps_methylation} shows the resulting design matrix $(W,X)$ in green.

\begin{figure}[!tbp]
\centering
\resizebox{\linewidth}{!}{%
  \input{figures/design_swaps_methylation.tikz}%
}
\caption{Design matrix  $(W,X)$ and within-donor swaps for four  contrasts. Act.\ and Rest.\ denote activated and resting cells; $\mathbf{1}$ is the intercept column. The displayed partition uses the resting Treg indicator as $W$. Arrows identify the pairs of samples that can be exchanged for each contrast.}
\label{fig:design_swaps_methylation}
\end{figure}

We seek to test if $\delta_i=0$ for each CpG $i$. We follow our compound randomization approaches for linear models described in Section~\ref{subsec:linearmodel} (with more details in Supplement~\ref{sec:linear_model_supp} about our scores).
We consider two choices for the invariance group: $\mathcal{H}_{\mathrm{orth},X}$ in~\eqref{eq:rotations_within} and $\mathcal{H}_{\mathrm{perm},X}$ in~\eqref{eq:permutations_within}. The latter is intuitive to describe: the naive and resting Treg samples from donor M28 have identical $X_j$, as do those from donor M30. Thus exchanging either pair leaves $X$ unchanged. We may swap neither pair, either one, or both, giving the four elements of $\mathcal{H}_{\mathrm{perm},X}$. See the left part of Figure~\ref{fig:design_swaps_methylation} for an illustration.

We are interested in three more contrasts, namely: act.\ naive vs.\ naive, act.\ Treg vs.\ rest.\ Treg, and act.\ Treg vs.\ act.\ naive. For each, we reparameterize the same model so that $\delta_i$ is the contrast of interest, with corresponding covariates $W_j$ and $X_j$. The right part of Figure~\ref{fig:design_swaps_methylation} shows the pairs that we may exchange: those with identical $X_j$ under the corresponding reparameterization. For activated naive vs.\ naive, we may swap the pair within each of the three donors, giving $2^3=8$ permutations, but we only use $4$ (because flipping all three donor pairs only has the effect of changing the sign of the estimated coefficient, and so $\widehat{S}(H\boldZ_i) = \widehat{S}(H'\boldZ_i)$ for $H$ and $H'$ that differ by simultaneously flipping all of M28, M29 and M30). For activated Treg vs.\ resting Treg, only M30 has both samples, giving $2$ permutations. For activated Treg vs.\ activated naive, we may swap the pairs within M29 and M30, giving $4$ permutations.

\input{tables/methylation.tex}

We show results in Table~\ref{tab:methylation-discoveries}, showing both the number of discoveries of different methods at $\alpha=0.05$ and the parameters of the estimated limma prior in~\eqref{eq:limma_variance_prior}.
For the resting Treg vs.\ naive contrast, the t-test and separate rotations make only $4$ discoveries. Limma and compound BH with rotations make $3{,}023$ and $3{,}033$, respectively. Variance moderation thus increases the number of discoveries substantially.
For the other three contrasts, compound rotations make substantially more discoveries than limma. One possible explanation is miscalibration of limma's reference distribution $t_{\hat\nu_0+\nu}$ when the working model does not hold. The simulations of~\citet{ignatiadis2025empirical} show that such misspecification can make limma either conservative or anticonservative. Here the larger discovery counts with compound rotations are consistent with conservative calibration by limma. Compound rotations avoid relying on this t-distribution by calibrating the score against the rotated data. The DDR modification (Procedure~\ref{proc:compound_ddr}, with $\tau=\alpha/10$) changes the discovery counts by less than $2\%$ for every contrast.

BH with separate or compound permutations makes no discoveries due to only having two or four effective permutations.
With so few permutations available, this is a natural setting for SeqStep+, as discussed in Section~\ref{subsec:seqstep}. We use Procedure~\ref{proc:seqstep_multi} with the full permutation group for each contrast, except for activated naive vs.\ naive, where we fix the M28 pair and only allow swaps within M29 and M30. SeqStep+ makes the most discoveries for every contrast, including $88$ for Treg activation, where compound rotations make $12$ and limma makes none.
\section{Discussion}
\label{sec:conclusion}

In this paper, we have developed procedures that borrow information
across units to construct powerful scores, while providing finite sample
FDR guarantees under independence across units and null group invariance.
For the problem of testing means with few replicates, we use the orbits
to learn the variance distribution and construct moderated t-statistics.
Under limma's working model, our compound orthogonal rotation and SeqStep+ procedures asymptotically match the power of an oracle
local false discovery rate procedure in the sparse regime we study.
Meanwhile, the finite sample FDR guarantees do not require this working
model to hold.

Our proposal is related to several other approaches that borrow
information about the variances. \citet{lu2016variance} and~\citet{
ignatiadis2025empirical} generalize limma by constructing
empirical Bayes pivots for nonparametric choices of the variance
distribution in lieu of~\eqref{eq:limma_variance_prior}.
\citet{ignatiadis2025empirical} call these ``empirical partially Bayes
p-values,'' following terminology of~\citet{cox1975note}: only a prior
on the nuisance parameters $\sigma_i^2$ is posited, but not on the
primary parameters $\mu_i$.
These approaches are plug-in and only have asymptotic FDR control.
Interestingly, \citet{ignatiadis2025empirical} show that their method
still controls FDR asymptotically when the variances are treated as
fixed: in this frequentist regime, their empirical partially Bayes
p-values are asymptotic versions of the compound p-values in
Definition~\ref{defi:compound_pvalues} (see
\citet[Section~5.2]{ignatiadis2025asymptotic} for a formal definition
and treatment of asymptotic compound p-values).
Thus compound p-values also arise in this approach, although through
a different argument.
Other authors, e.g., \citet{lu2019empirical, zheng2021mixtwice,
seo2025empirical}, have sought plug-in approximations to local fdr
oracles under hierarchical models for $(\mu_i,\sigma_i^2)$ different
from~\eqref{eq:lonnsted_prior}.

A further class of approaches provides finite sample FDR control and
shares our idea of using information in the orbits. Under the Gaussian
model~\eqref{eq:normal_samples}, the t-test p-value $P_i^t$ is independent
of $\hat{\tau}_i^2$ when $\mu_i=0$, and $\hat{\tau}_i^2$ identifies the
orbit under $\mathcal{H}_{\mathrm{orth}}$. One can use
$\hat{\tau}_1^2,\ldots,\hat{\tau}_n^2$ to construct nonnegative compound
e-values~\citep{ignatiadis2025asymptotic}
$E_i=E_i(\hat{\tau}_1^2,\ldots,\hat{\tau}_n^2)$ satisfying
$\EEInline{\sum_{i:\mu_i=0}E_i}\leq n$, and then apply
ep-BH~\citep{ignatiadis2024evalues}, that is, BH applied to
$Q_i:=P_i^t/E_i$. The $Q_i$ are compound p-values, and the resulting
procedure has finite sample FDR control and often more power than BH
with t-tests alone. References pursuing variants of this approach
include~\citet{westfall2004weighted, finos2007fdr,
bourgon2010independent, guo2017analysis, ignatiadis2021covariate,
ignatiadis2026tiny}. These methods use the information in the orbits
to weight the t-test p-values. Here we instead use it to learn a score
function, which we calibrate against group-transformed scores pooled
across units. For our proposed procedures, we also establish the
asymptotic power guarantee in Theorem~\ref{theo:sparse_regime}.

More broadly, the idea of careful masking to construct powerful scores
while retaining finite sample FDR control guarantees has by now a rich literature attached to it; some examples include~\citet{habiger2014compound, lei2018adapt, yurko2020selective,
ignatiadis2021covariate, yang2021bonus, marandon2024adaptive,
tian2025conformalized, chakraborty2026power}.
Our interpretation of this literature is that the emphasis has been
on learning properties of the alternative distribution, or how it
differs from the null. Here we show that even learning properties of
the null distributions, when these are composite, can be very
informative. Indeed, limma is one of the workhorses of modern genomics,
and it borrows information across units precisely by learning the
variance distribution in~\eqref{eq:limma_variance_prior}. Our results
show how to retain the resulting power gains while providing finite
sample FDR guarantees that do not require this variance model to hold.

\paragraph{Reproducibility.} Code for the simulations and data analyses, developed with assistance from GPT-6 Astra, is available at the following Github repository:\\
\url{https://github.com/nignatiadis/compound-randomization-paper}

\paragraph{Acknowledgements.}  We thank Jierui Zhu for helpful comments. This work was completed in part with resources provided by the University of Chicago's Research Computing Center. N.I. gratefully acknowledges support from the U.S. National Science Foundation (DMS-2443410).
The authors acknowledge grant ANR-23-CE40-0018-01 (BACKUP) of the French National Research Agency ANR.

\bibliographystyle{abbrvnat}
\bibliography{biblio}

\appendix

\setcounter{prop}{0}

\renewcommand{\theprop}{S\arabic{prop}}
\renewcommand{\theHprop}{S\arabic{prop}}
\renewcommand{\thelemm}{S\arabic{lemm}}

\section{Reformulations of proposed procedures}

\subsection{Compound BH with randomization}
\label{subsec:compound_threshold}
\begin{mdframed}[nobreak=true]
\renewcommand{\theprocedurevariant}{\getrefnumber{proc:compound_pvalues}\ensuremath{'}}
\refstepcounter{procedurevariant}\label{proc:compound_threshold}
\textbf{Procedure \theprocedurevariant\ (Compound BH: threshold formulation).}

\begin{enumerate}
\item Use $\mathcal{O}_{1:n}$ to learn a score function $\widehat{S}(\cdot)$ that takes the $\boldZ_i$ as input.
\item For each $s$, let
$$\widehat{\mathrm{FDR}}(s) := \frac{ \sum_{k=1}^n \int_{\mathcal{H}} \ind\cb{ \widehat{S}(H \boldZ_k) \geq s} \dd\mu_{\mathcal{H}}(H)}
{1 \lor \sum_{k=1}^n \ind\cb{ \widehat{S}(\boldZ_k) \geq s}},$$  
and compute the threshold $\hat{s} := \inf\Big\{s \in \{ \widehat{S}(\boldZ_1),\ldots,\widehat{S}(\boldZ_n)\}\,:\, \widehat{\mathrm{FDR}}(s) \leq \alpha\Big\}$.
\item Reject all $i$ with $\widehat{S}(\boldZ_i) \geq \hat{s}$.
\end{enumerate}
\end{mdframed}

\subsection{SeqStep+ with group of order two}
\label{subsec:seqstep_order_two}

\begin{mdframed}[nobreak=true]
\renewcommand{\theprocedurevariant}{\getrefnumber{proc:seqstep_multi}\ensuremath{'}}
\refstepcounter{procedurevariant}\label{proc:seqstep_order_two}
\textbf{Procedure \theprocedurevariant\ (Selective SeqStep+ with a learned score and group of order $2$).}\begin{enumerate}
\item Use $\mathcal{O}_{1:n}$ to learn a nonnegative score function $\widehat{S}(\cdot)$ that takes the $\boldZ_i$ as input.
\item Compute the symmetrized scores
$
\tilde{S}_i:= (\widehat{S}(\boldZ_i)\lor\widehat{S}(H\boldZ_i))\cdot \mathrm{sign}(\widehat{S}(\boldZ_i)-\widehat{S}(H\boldZ_i)),
$
with the convention $\mathrm{sign}(0)=0$.
For each $s>0$, let
\begin{equation}
\widehat{\mathrm{FDR}}(s):=
\frac{1+\sum_{i=1}^n\ind\cb{\tilde{S}_i\leq-s}}
{1\lor\sum_{i=1}^n\ind\cb{\tilde{S}_i\geq s}},
\label{eq:fdr_seqstep}
\end{equation}
and compute the threshold $\hat{s}:=\inf\!\big\{s\in\big\{|\tilde{S}_1|,\ldots, |\tilde{S}_n|\big\}\setminus \{0\}\,:\,\widehat{\mathrm{FDR}}(s)\leq\alpha\big\}$.
\item Reject all $i$ with $\tilde{S}_i\geq\hat{s}$.
\end{enumerate}
\end{mdframed}

\section{More on Selective SeqStep+ with groups of order \texorpdfstring{$2$}{2}}

\subsection{RESS and SENS as special cases of Procedure~\ref{proc:seqstep_order_two}}
\label{subsec:ress_and_SENS}
Our goal in this supplement is to show that we can interpret 
RESS (Reflection via Sample Splitting) by~\citet{zou2020new} and SENS (self-calibrated empirical null samples) by~\citet{tian2025conformalized} as a special case of the order-two subgroup framing of Procedure~\ref{proc:seqstep_order_two}. For both procedures, we explain the correspondence for even $K \geq 4$. We spell out this connection because both RESS and SENS are already powerful; and the generalization in
Procedure~\ref{proc:seqstep_order_two} allows to further improve these methods.

For notation convenience, we interpret the summary statistic functions in~\eqref{eq:summary_statistics} and~\eqref{eq:standard_t} as also incorporating the length of $\boldz$. More specifically, we partition any $\boldz \in \RR^K$ as
$$
\boldz^\top = (\boldz^A, \boldz^B)^\top,\;\; \boldz^A, \boldz^B \in \RR^{K/2}.
$$
Our notation convention above means that, for instance, we have:
$$
\hat{\mu}(\boldz) = \frac{1}{K}\sum_{j=1}^K z_j,\;\;\; \hat{\mu}(\boldz^A) = \frac{2}{K} \sum_{j=1}^{K/2} z_{j}, \;\;\; \hat{\mu}(\boldz^B) = \frac{2}{K}\sum_{j=(K/2)+1}^K z_j. 
$$
We also recall the notation $H_{\mathrm{half}}$ from~\eqref{eq:h_half}.

\paragraph{RESS (Reflection via Sample Splitting).} RESS works as follows: it computes $T_i^A = T(\boldZ_i^A)$ and $T_i^B = T(\boldZ_i^B)$ and then lets $W_i = T_i^A T_i^B$. It then computes
\begin{equation}
\hat{w} := \inf\cb{ w\in\{|W_1|,\ldots,|W_n|\}\,:\, \frac{1+\sum_{i=1}^n \ind\{ W_i \leq -w\} }{ 1 \lor \sum_{i=1}^n \ind\{W_i \geq w\}} \leq \alpha},
\label{eq:seqstep_hatw}
\end{equation}
and rejects all hypotheses with $W_i \geq \hat{w}$. 

We can cast it as a special case of Procedure~\ref{proc:seqstep_order_two} with $H=H_{\mathrm{half}}$ by using the score
$$
\widehat{S}(\boldz) = \p{ T(\boldz^A)\cdot T(\boldz^B)}_+,
$$
where $(a)_+ := a \lor 0$ is the positive part. 

To see the equivalence, notice that:
$$
\widehat{S}(H_{\mathrm{half}}\boldz) = \p{- T(\boldz^A)\cdot T(\boldz^B)}_+ = \p{T(\boldz^A)\cdot T(\boldz^B)}_-,
$$
where $(a)_-:= (-a)\lor 0$ is the negative part. Hence
$
|\tilde{S}_i| = |W_i|
$ and $\widehat{S}(Z_i) > \widehat{S}(H_{\mathrm{half}}Z_i)$ if and only if $W_i > 0$. 

We emphasize here that the score of RESS is fixed (i.e., not data-driven).

\paragraph{SENS (self-calibrated empirical null samples).}

SENS works as follows.  It first defines
$$
X_i :=\frac{\hat{\mu}(\boldZ_i^A)+\hat{\mu}(\boldZ_i^B)}{2} = \hat{\mu}(\boldZ_i),\;\;  X_i^0:=\frac{\hat{\mu}(\boldZ_i^A)-\hat{\mu}(\boldZ_i^B)}{2},
$$
and $V_i:=V(\boldZ_i)$, where
\begin{equation}
V^2(\boldz) := \frac{\hat{\sigma}^2(\boldz^A) + \hat{\sigma}^2(\boldz^B)}{2K}.
\label{eq:sens_V}
\end{equation}
They then let
\begin{equation}
T_i^{\mathrm{SENS}}:= \Phi^{-1}(F_{t,K-2}(X_i/ V_i)),\;\; T_i^{\mathrm{SENS},0} := \Phi^{-1}(F_{t,K-2}(X_i^0/ V_i)),
\label{eq:sens_t}
\end{equation}
where $F_{t,K-2}$ is the cumulative distribution function of the t-distribution with $K-2$ degrees of freedom and $\Phi$ is the standard normal distribution function. They then use the unordered pairs $\{T_i,T_i^0\}$, $i\in[n]$, to learn a score function $\hat{g}$.\footnote{The authors propose two options for $\hat{g}$, leading to two versions of SENS: SENS-General and SENS-Gaussian. The former is motivated by a general symmetric noise distribution, and the latter for Gaussian noise.
} Finally, they take
\begin{equation}
U_i := \hat{g}(T_i^{\mathrm{SENS}}) ,\;\;\, U_i^0 := \hat{g}(T_i^{\mathrm{SENS},0}),\;\; W_i = \mathrm{sign}(U_i^0 - U_i)\cdot \cb{\exp( -U_i)\lor \exp(-U_i^0)},
\label{eq:hat_sens}
\end{equation}
and apply SeqStep+, i.e., they use~\eqref{eq:seqstep_hatw} (on the $W_i$ defined here), and reject all hypotheses with $W_i \geq \hat{w}$. 

We can cast SENS as a special case of Procedure~\ref{proc:seqstep_order_two} with $H=H_{\mathrm{half}}$ by using the score
$$
\widehat{S}(\boldz) = \exp\Bigl[-\hat{g}\Bigl\{\Phi^{-1}\Bigl(F_{t,K-2}\bigl\{\hat{\mu}(\boldz)/V(\boldz)\bigr\}\Bigr)\Bigr\}\Bigr].
$$
To see the equivalence, notice that:
$$
X_i = \hat{\mu}(\boldZ_i),\;\; X_i^0 = \hat{\mu}(H_{\mathrm{half}}\boldZ_i),\;\; V_i = V(H_{\mathrm{half}}\boldZ_i) = V(\boldZ_i).$$
Moreover $\hat{g}$ is indeed learned from the orbits since  the unordered pairs $\{T_i,T_i^0\}$ are measurable 
with respect to the orbits $\{\boldZ_i, H_{\mathrm{half}}\boldZ_i\}$.

\subsection{SENS-limma}
\label{subsec:sens_limma}

The goal of this section is to demonstrate that the general formulation in Procedure~\ref{proc:seqstep_order_two} provides the right abstraction for deriving our SeqStep+ method that asymptotically matches the performance of oracle limma. Here it's important that we are computing $\widehat{S}(\cdot)$ as we would usually do and do not try to build invariance properties into it directly.

To make this point, we consider an oracle variant of SENS that we call SENS-limma and show that it is nearly powerless in our setting. We emphasize that this procedure was not proposed by the authors of SENS (they are mostly motivated to learn properties of the alternatives).

Our starting point is based on the following observation by the SENS authors: they argue heuristically that a powerful choice of $\hat{g}(\cdot)$ in~\eqref{eq:hat_sens} should approximate $\mathbb P[\mu_i =0 \mid T_i^{\mathrm{SENS}}=\cdot]$, i.e., the local false discovery rate conditional on the summary statistic $T_i^{\mathrm{SENS}}$ defined in~\eqref{eq:sens_t}; see~\citet[Section~3.2]{tian2025conformalized}.

The oracle SENS-limma method uses exactly $\hat{g}(\cdot) = \PP{\mu_i =0 \mid T_i^{\mathrm{SENS}}=\cdot}$ under our data-generation model given by~\eqref{eq:normal_samples} and~\eqref{eq:lonnsted_prior}. We write $\mathrm{SENS\textnormal{-}limma}(\alpha)$ for the resulting procedure using~\eqref{eq:hat_sens} and~\eqref{eq:seqstep_hatw} at level $\alpha$.

We have the following result that complements Theorem~\ref{theo:sparse_regime}.
\begin{prop}
\label{prop:sparse_regime_sens}
Consider the asymptotic regime of Theorem~\ref{theo:sparse_regime}, with even $K\geq4$, and fix $\alpha\in(0,1)$. Then
$$
\mathrm{Pow}_n[\mathrm{SENS\textnormal{-}limma}(\alpha)] \to 0
\;\text{ as }n\to\infty.
$$
\end{prop}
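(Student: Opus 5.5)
The plan is to reduce $\mathrm{SENS\textnormal{-}limma}(\alpha)$ to a Selective SeqStep+ procedure with the order-two group $\{\mathbf{1},H_{\mathrm{half}}\}$ and a \emph{fixed} score that is monotone in $|\hat\mu(\boldz)/V(\boldz)|$. Then a tail comparison should show that this statistic carries only $K-2$ effective degrees of freedom, instead of the $\nu+\nu_0=K-1+\nu_0$ available to the oracle. The threshold forced by the nulls will therefore be too large for the non-nulls to cross.

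\textbf{Step 1: identify the oracle $\hat g$.} Under~\eqref{eq:normal_samples} and~\eqref{eq:lonnsted_prior}, conditional on $\sigma_i$:
\begin{itemize}
\item $V_i^2\sim \sigma_i^2\chi^2_{K-2}/\{K(K-2)\}$, independent of $(X_i,X_i^0)$;
\item $X_i\sim \mathrm{N}(0,\sigma_i^2/K)$ under the null and $\mathrm{N}(0,\sigma_i^2(1+\lambda)/K)$ under the alternative.
\end{itemize}
Hence $X_i/V_i$ is $t_{K-2}$ under the null and $\sqrt{1+\lambda}\,t_{K-2}$ under the alternative, conditionally on $\sigma_i$ and hence marginally. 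The ratio of these two densities is increasing in $|t|$, and $\Phi^{-1}\circ F_{t,K-2}$ is odd and increasing. So $\hat g(\cdot)=\PP{\mu_i=0\mid T_i^{\mathrm{SENS}}=\cdot}$ is a decreasing function of $|\cdot|$. It follows that $\mathrm{sign}(W_i)=\mathrm{sign}(|X_i|-|X_i^0|)$, with ties having probability zero, and $|W_i|$ is an increasing function of $M_i:=\max(|X_i|,|X_i^0|)/V_i$.

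So the procedure rejects exactly those $i$ with $|X_i|>|X_i^0|$ and $M_i\ge \hat m$, where $\hat m$ is the SeqStep+ threshold~\eqref{eq:seqstep_hatw} expressed on the $M$-scale. We also record that $X_i^0/V_i\sim t_{K-2}$ for every $i$, null or not.

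\textbf{Step 2: a deterministic threshold that the data cannot undercut.} Let $\bar G(m):=\PP{|t_{K-2}|\ge m}\sim 2L_{K-2}m^{-(K-2)}$. Choose $s_n$ with $\bar G(s_n)=\delta_n\pi_1$, where $\delta_n\to\infty$ slowly enough that
$$\delta_n\,\pi_1\lambda^{(K-2)/2}\to0.$$
This is possible because~\eqref{eq:lambda_asymptotics} gives $\pi_1\lambda^{(K-2)/2}=\pi_1\lambda^{(\nu+\nu_0)/2}\lambda^{-(\nu_0+1)/2}\to0$.

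For $m\le s_n$, consider the counts $N_\pm(m):=\#\{i \text{ null}: \pm(|X_i|-|X_i^0|)>0,\ M_i\ge m\}$. These are sums of independent Bernoullis, and by the $H_{\mathrm{half}}$-invariance of the null they are exchangeable in sign, with means of order at least $n\bar G(m)/2\ge n\delta_n\pi_1/2\to\infty$. By Bernstein's inequality on a geometric grid of $m$ values in $[s_n',s_n]$, combined with monotonicity of the counts in $m$, we get $N_+(m)/N_-(m)\to1$ uniformly with high probability. Here $s_n'$ is where the expected counts reach order $n$. Meanwhile the number of non-nulls is $\approx n\pi_1=o(N_-(m))$.

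Hence for all $m\le s_n$ the SeqStep+ ratio is at least
$$\frac{N_-(m)}{N_+(m)+n\pi_1}\to1>\alpha,$$
so $\hat m> s_n$ with probability tending to one. Values $m<s_n'$ are handled the same way with even larger counts.

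\textbf{Step 3: power bound.} On this event, every non-null rejection requires $|X_i|/V_i\ge s_n$. Non-null $X_i/V_i\sim\sqrt{1+\lambda}\,t_{K-2}$, so
$$\PP{|X_i|/V_i\ge s_n\mid\mu_i\ne0}\approx 2L_{K-2}\,\lambda^{(K-2)/2}s_n^{-(K-2)}\asymp \lambda^{(K-2)/2}\delta_n\pi_1\to0.$$
Since $n\pi_1\to\infty$, the number of non-nulls concentrates at $n\pi_1$. Therefore
$$\mathrm{Pow}_n\le o(1)+\PP{\hat m\le s_n}+\sup_i\PP{|X_i|/V_i\ge s_n\mid\mu_i\ne 0}\to0.$$

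\textbf{Main obstacle.} The hardest part is the uniform-in-$m$ concentration in Step 2, i.e.\ controlling $N_+/N_-$ simultaneously over all candidate thresholds below $s_n$, including the regime where the counts are only of order $n\delta_n\pi_1$. I would handle it with a union bound over a grid on which $\bar G$ decreases by a factor $1+\epsilon$, together with sandwiching between grid points. A secondary check is the monotone reduction in Step 1, namely that the likelihood ratio of $\sqrt{1+\lambda}\,t_{K-2}$ to $t_{K-2}$ is increasing in $|t|$; this follows from the explicit $t$ densities, since the ratio $\{(K-2+t^2)/((1+\lambda)(K-2)+t^2)\}^{(K-1)/2}$ is increasing in $t^2$.
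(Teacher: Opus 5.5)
Your proof is correct, but it is organized differently from the paper's. The paper never isolates an explicit threshold. It integrates the global bound $f_{1,K-2}\le(1+\lambda)^{(K-2)/2}f_{0,K-2}$, where these are the non-null and null densities of $X_i/V_i$. This gives, simultaneously for all $w\ge0$,
$$
\mathbb P[W_i\ge w\mid\mu_i\neq0]\le2(1+\lambda)^{(K-2)/2}\,\mathbb P[W_i\ge w\mid\mu_i=0].
$$
Together with null sign-symmetry, it yields $\mathbb E[N_n(w)]/\mathbb E[D_n(w)]\ge\pi_0/\{\pi_0+2\pi_1(1+\lambda)^{(K-2)/2}\}\to1$ uniformly in $w$. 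Here $N_n(w)$ and $D_n(w)$ are the normalized counts of $W_i\le-w$ and $W_i\ge w$. Lemma~\ref{lemm:gc_multiplicative_compound} then makes the SeqStep+ ratio asymptotically at least $1$ at every $w$ with $D_n(w)\ge2\eta\pi_1$. Hence $D_n(\hat w)/\pi_1\to0$ in probability, and the power bound is immediate.

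You instead split at an explicit level $s_n$ with null tail $\delta_n\pi_1$. Below $s_n$ you need no likelihood-ratio comparison at all, because null counts of order $n\delta_n\pi_1$ swamp all non-nulls. Above $s_n$ you need the non-null tail at only one point.

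The paper's route is more economical: one inequality valid at every threshold, no choice of $\delta_n$, and no tail asymptotics. It also yields the stronger conclusion that the total number of discoveries is $o_{\mathbb P}(n\pi_1)$. Your route gives a clearer picture of where the threshold must sit and why only $K-2$ effective degrees of freedom are available, via $\pi_1\lambda^{(K-2)/2}=\pi_1\lambda^{(\nu+\nu_0)/2}\lambda^{-(\nu_0+1)/2}\to0$. Your explicit reduction of $W_i$ to $\mathrm{sign}(|X_i|-|X_i^0|)$ and $M_i$ is more than the paper uses; it only needs $W_i\ge w\Rightarrow\exp(-U_i)\ge w$ plus null symmetry. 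The reduction is nonetheless valid, since $\hat g$ is strictly decreasing in $|t|$.

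Two small tightenings:
\begin{enumerate}
\item In Step~3 you can replace the $\approx$ by the exact inequality $\mathbb P[|X_i|/V_i\ge s_n\mid\mu_i\neq0]\le(1+\lambda)^{(K-2)/2}\delta_n\pi_1$. This follows by integrating the same density-ratio bound, and it removes the need to check that $s_n/\sqrt{1+\lambda}\to\infty$.
\item The grid-plus-Bernstein uniform concentration you flag as the main obstacle is exactly what Lemma~\ref{lemm:gc_multiplicative_compound} provides, applied with $\delta\asymp\delta_n\pi_1$ and fixed $\varepsilon$.
\end{enumerate}
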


\section{Details for other settings for compound randomization}

\subsection{Details for two-sample tests}
\label{sec:twosample_supp}

\subsubsection{Existing approaches}
We start by providing a similar summary of existing approaches for the two-sample problem as we did in Section~\ref{sec:existing} for the one-sample problem. We keep it shorter however and only discuss the standard t-test and limma.

We continue using the summary notation in~\eqref{eq:summary_statistics} that we used for the one-sample problem. 

Given $\boldz \in \RR^K$, write $\boldz^A$ and $\boldz^B$ for the subvectors indexed by $A$ and $B$, and $\hat{\mu}(\boldz^A)$ and $\hat{\mu}(\boldz^B)$ for their sample means. We define the difference in sample means and the pooled sample variance as
\begin{equation}
\hat{\delta}(\boldz) := \hat{\mu}(\boldz^B) - \hat{\mu}(\boldz^A),\;\;\;
\hat{\sigma}_{\mathrm{pool}}^2(\boldz) := \frac{1}{K-2}\cb{\sum_{j \in A} (z_j - \hat{\mu}(\boldz^A))^2 + \sum_{j \in B} (z_j - \hat{\mu}(\boldz^B))^2}.
\label{eq:summary_stats_twosample}
\end{equation}
For the above quantities, we were careful not to use conflicting notation with the one-sample case. However, below we will overload some notation, for instance, we will write $\nu = K-2$. 

Note that under~\eqref{eq:normal_samples_two_sample}, complete and sufficient statistics for $(\theta_i, \delta_i, \sigma_i^2)$ are given by $\hat{\mu}(\boldZ_i)$, $\hat{\delta}(\boldZ_i)$ and $\hat{\sigma}_{\mathrm{pool}}^2(\boldZ_i)$.

Effectively both the t-test and limma in this setting can be interpreted as discarding $\hat{\mu}(\boldZ_i)$ and working only with $ \hat{\delta}(\boldZ_i)$ and $\hat{\sigma}_{\mathrm{pool}}^2(\boldZ_i)$. At that point, one can proceed almost verbatim as in the one-sample case.

\paragraph{Standard t-tests with BH.}  Here, the textbook approach is to define the equal variance two-sample t-statistic function for $\boldz \in \mathbb R^K$ as
\begin{equation}
T(\boldz) :=  \sqrt{\frac{K_AK_B}{K_A+K_B}}\cdot \frac{\hat{\delta}(\boldz)}{\hat{\sigma}_{\mathrm{pool}}(\boldz)},
\label{eq:standard_t_two}
\end{equation}
and then compute p-values as $P_i^{t} := 2 \bar{F}_{t,\nu}(|T(\boldZ_i)|)$. Finally, one applies the Benjamini-Hochberg procedure to $P_1^{t},\ldots,P_n^{t}$ at level $\alpha$. This procedure controls the FDR at level $\alpha$ under~\eqref{eq:normal_samples_two_sample}.

\paragraph{Limma with BH.}  Suppose one is willing to continue assuming~\eqref{eq:limma_variance_prior}. Then, one can define in analogy to~\eqref{eq:moderated_stats} (and again overloading notation),
\begin{equation}
\hat{\sigma}^2(\boldz; \nu_0, s_0) := \frac{  \nu_0 s_0^2 + \nu \hat{\sigma}_{\mathrm{pool}}^2(\boldz)}{\nu_0 + \nu},\;\;\; T(\boldz; \nu_0, s_0) := \sqrt{\frac{K_AK_B}{K_A+K_B}}\cdot \frac{ \hat{\delta}(\boldz)}{\hat{\sigma}(\boldz; \nu_0, s_0)}.
\label{eq:moderated_stats_two_sample}
\end{equation}
If~\eqref{eq:normal_samples_two_sample} with $\delta_i=0$ holds for some $i$, and~\eqref{eq:limma_variance_prior} also holds, then $T(\boldZ_i; \nu_0, s_0) \sim t_{\nu + \nu_0}$. For oracle limma, one can compute p-values as $P_i^{\mathrm{limma}} := 2 \bar{F}_{t,\nu+\nu_0}(|T(\boldZ_i; \nu_0, s_0)|)$ and then apply BH. For data-driven limma, one first estimates the hyperparameter $\nu_0$ and $s_0^2$.

\subsubsection{Implementing compound randomization}

Recall from Section~\ref{subsec:twosample}, that for the two-sample testing case we consider two choices of group: $\mathcal{H}_{\mathrm{perm}}$ or $\mathcal{H}_{\mathrm{orth},\mathbf{1}}$. For both of these, $\hat{\sigma}^2(\boldZ_i)$ defined in~\eqref{eq:summary_statistics} is measurable with respect to the $i$-th orbit $\mathcal{O}_i$. Notice that this is the sample variance ignoring group membership, and so different from $\hat{\sigma}^2_{\mathrm{pool}}(\boldZ_i)$ defined in~\eqref{eq:summary_stats_twosample}. Nonetheless under the null we have that (compare to~\eqref{eq:hat_tau_i_bn}):
$$
\hat{\sigma}_i^2 \mid (\sigma_i^2, \delta_i=0) \; \sim \frac{\sigma_i^2}{K-1} \chi^2_{K-1}.
$$
This suggests that (compare to~\eqref{eq:hlimma_suff}), we can compute
$$
(\hat{\nu}_0(\mathcal{O}_{1:n}), \hat{s}_0^2(\mathcal{O}_{1:n})) := h^{\mathrm{limma}}(\hat{\sigma}_1^2,\ldots,\hat{\sigma}_n^2; K-1),
$$
with $\hat{\sigma}_i^2 = \hat{\sigma}^2(\boldZ_i)$. Then we can use the score:
$$
\widehat{S}(\boldz) := \abs{ T(\boldz; \hat{\nu}_0(\mathcal{O}_{1:n}), \hat{s}_0(\mathcal{O}_{1:n}))}.
$$
For $\mathcal{H}_{\mathrm{orth},\mathbf{1}}$, we can again compute the compound randomization p-values analytically. Notice that
$$
(K-1)\hat{\sigma}^2(\boldz) = \nu\hat{\sigma}_{\mathrm{pool}}^2(\boldz) + \frac{K_AK_B}{K_A+K_B}\hat{\delta}^2(\boldz).
$$
The same calculation as in Proposition~\ref{prop:orthogonal_rotation_cpvalue}, now rotating the centered observations, gives
$$
\pcompfun(\boldz; \widehat{S}(\cdot), \mathcal{O}_{1:n}, \mathcal{H}_{\mathrm{orth},\mathbf{1}}) = \frac{1}{n} \sum_{i=1}^n \bar{F}_{B,1/2,\nu/2}\cb{\frac{\widehat{S}^2(\boldz)/\{\nu+\hat{\nu}_0\}}{1+\widehat{S}^2(\boldz)/\{\nu+\hat{\nu}_0\}}\p{1+\frac{\hat{\nu}_0\hat{s}_0^2}{(K-1)\hat{\sigma}_i^2}}}.
$$
For both $\mathcal{H} \in \{\mathcal{H}_{\mathrm{perm}}, \mathcal{H}_{\mathrm{orth},\mathbf{1}}\}$ and any $\boldz \in \mathcal{O}_i$, we have that:
$$
\psepfun(\boldz; \widehat{S}(\cdot), \mathcal{O}_i, \mathcal{H}) = \psepfun(\boldz; \abs{T(\cdot;0,1)}, \mathcal{O}_i, \mathcal{H}).
$$
Thus, for permutations, separate randomization gives the usual two-sample permutation p-value, regardless of the estimates $\hat{\nu}_0$ and $\hat{s}_0^2$.

Moreover, as in Proposition~\ref{prop:sep_pvalue_limma}, separate randomization with $\mathcal{H}_{\mathrm{orth},\mathbf{1}}$ returns the usual two-sided two-sample t-test p-value:
$$
P_i^{\mathrm{sep}} = \psepfun(\boldZ_i; \widehat{S}(\cdot), \mathcal{O}_i, \mathcal{H}_{\mathrm{orth},\mathbf{1}}) = 2\bar{F}_{t,\nu}(|T(\boldZ_i)|) = P_i^t\;\text{ almost surely}.
$$
Indeed, on each orbit, both the absolute moderated t-statistic and the absolute ordinary t-statistic are increasing functions of $\hat{\delta}^2(\boldz)$.

\begin{rema}[Limma-trend]
We note that beyond $\hat{\sigma}^2(\boldZ)$, $\hat{\mu}(\boldZ)$ is also measurable with respect to orbits $\mathcal{O}_i$ (for both choices of $\mathcal{H})$. This is important if one wants to implement compound randomization based on limma-trend p-values~\citep{law2014voom, nandy2026how}. We do not pursue the details here.
\end{rema}

\subsection{Details for testing a coefficient in a linear model}
\label{sec:linear_model_supp}

We continue from~\eqref{eq:normal_samples_linear_model} and consider only $\mathcal{H}_{\mathrm{orth},X}$. As in Supplement~\ref{sec:twosample_supp}, we overload notation below, now writing $\nu=K-p-1$. Given $\boldz \in \RR^K$, the joint least squares estimates are
$$
(\hat{\delta}(\boldz),\hat{\theta}(\boldz)) := \argmin_{\delta\in\RR,\;\theta\in\RR^p} \Norm{\boldz-W\delta-X\theta}_2^2.
$$
We also define the residual variance:
$$
\hat{\sigma}_{\mathrm{res}}^2(\boldz) := \frac{1}{\nu}\Norm{\boldz-W\hat{\delta}(\boldz)-X\hat{\theta}(\boldz)}_2^2.
$$
Under~\eqref{eq:normal_samples_linear_model}, $\hat{\delta}(\boldZ_i)$, $\hat{\theta}(\boldZ_i)$ and $\hat{\sigma}_{\mathrm{res}}^2(\boldZ_i)$ are sufficient for $(\delta_i,\theta_i,\sigma_i^2)$. Again, we can interpret both the standard t-test and limma as discarding $\hat{\theta}(\boldZ_i)$ and working only with the coefficient estimate $\hat{\delta}(\boldZ_i)$ and the residual variance $\hat{\sigma}_{\mathrm{res}}^2(\boldZ_i)$.

In analogy to~\eqref{eq:moderated_stats_two_sample}, we define
\begin{equation}
\hat{\sigma}^2(\boldz; \nu_0, s_0) := \frac{\nu_0s_0^2+\nu\hat{\sigma}_{\mathrm{res}}^2(\boldz)}{\nu_0+\nu},\;\;\;
T(\boldz; \nu_0, s_0) := \Norm{U^\top W}_2\cdot\frac{\hat{\delta}(\boldz)}{\hat{\sigma}(\boldz; \nu_0, s_0)}.
\label{eq:moderated_stats_linear_model}
\end{equation}
We also write $T(\boldz):=T(\boldz;0,1)$ for the usual regression t-statistic.

For compound randomization, notice that $\Norm{U^\top\boldZ_i}_2$ is measurable with respect to the orbit $\mathcal{O}_i$, and under the Gaussian null,
$$
\frac{1}{K-p}\Norm{U^\top\boldZ_i}_2^2 \mid (\sigma_i^2,\delta_i=0) \sim \frac{\sigma_i^2}{K-p}\chi^2_{K-p}.
$$
This suggests computing
$$
(\hat{\nu}_0(\mathcal{O}_{1:n}), \hat{s}_0^2(\mathcal{O}_{1:n})) := h^{\mathrm{limma}}\p{\frac{\Norm{U^\top\boldZ_1}_2^2}{K-p},\ldots,\frac{\Norm{U^\top\boldZ_n}_2^2}{K-p}; K-p}.
$$
We then use the same score $\widehat{S}(\cdot)$ as in Supplement~\ref{sec:twosample_supp}, with $T$ now given by~\eqref{eq:moderated_stats_linear_model}. The computational shortcut there continues to hold, with $\mathcal{H}_{\mathrm{orth},\mathbf{1}}$ replaced by $\mathcal{H}_{\mathrm{orth},X}$ and $(K-1)\hat{\sigma}_i^2$ replaced by $\Norm{U^\top\boldZ_i}_2^2$, since
$$
\Norm{U^\top\boldZ_i}_2^2 = \nu\hat{\sigma}_{\mathrm{res}}^2(\boldZ_i) \,+\, \Norm{U^\top W}_2^2\hat{\delta}^2(\boldZ_i).
$$
Finally, separate randomization again returns the usual two-sided t-test p-value for the coefficient:
$$
P_i^{\mathrm{sep}} = \psepfun(\boldZ_i; \widehat{S}(\cdot), \mathcal{O}_i, \mathcal{H}_{\mathrm{orth},X}) = 2\bar{F}_{t,\nu}(|T(\boldZ_i)|)\;\text{ almost surely}.
$$

\section{Further simulation results}

\subsection{Compound randomization variants}
\label{subsec:compound_variants}

In the experiments of the main text, compound BH at level $\alpha$ controls the FDR empirically whenever the corresponding null invariance assumption holds. This is consistent with asymptotic FDR control at level $\alpha$ for BH with compound p-values~\citep{armstrong2022false,ignatiadis2025empirical}. However, the finite-sample guarantee in Theorem~\ref{theo:bh} is only $1.93\alpha$. We therefore compare two modifications that give finite-sample FDR control at level $\alpha$: BH at level $\alpha/1.93$, using the bound of~\citet{barber2026false}, and DDR (Procedure~\ref{proc:compound_ddr}), based on Theorem~\ref{theo:ddr}. For orthogonal rotations, Theorem~\ref{theo:sparse_regime}(iii) also shows that DDR asymptotically matches the power of the local fdr oracle in the sparse regime, with $\tau_n\to0$ and $\pi_1/\tau_n\to0$.

In Figures~\ref{fig:supp_gaussian_compound} and~\ref{fig:supp_robustness_compound}, we compare these two modifications with compound BH at level $\alpha$, for each of $\mathcal{H}_{\mathrm{orth}}$ and $\mathcal{H}_{\mathrm{symm}}$. We take $\tau=\alpha/10$ for DDR and use the same simulation settings and learned scores as in Figures~\ref{fig:main_gaussian_AB} and~\ref{fig:main_robustness_CD}.

DDR generally has power close to compound BH at level $\alpha$, while BH at level $\alpha/1.93$ loses more power. The difference is particularly pronounced for sign flips at $K=5$: since $\alpha/1.93<2^{-4}$, compound BH at this level makes no discoveries, whereas DDR retains non-negligible power. Under uniform noise, DDR with orthogonal rotations still exceeds the nominal FDR level for $K\geq9$, since rotation invariance does not hold. All three sign flip variants stay below the nominal level in these experiments.

\begin{figure}
\centering
\includegraphics[width=1\textwidth]{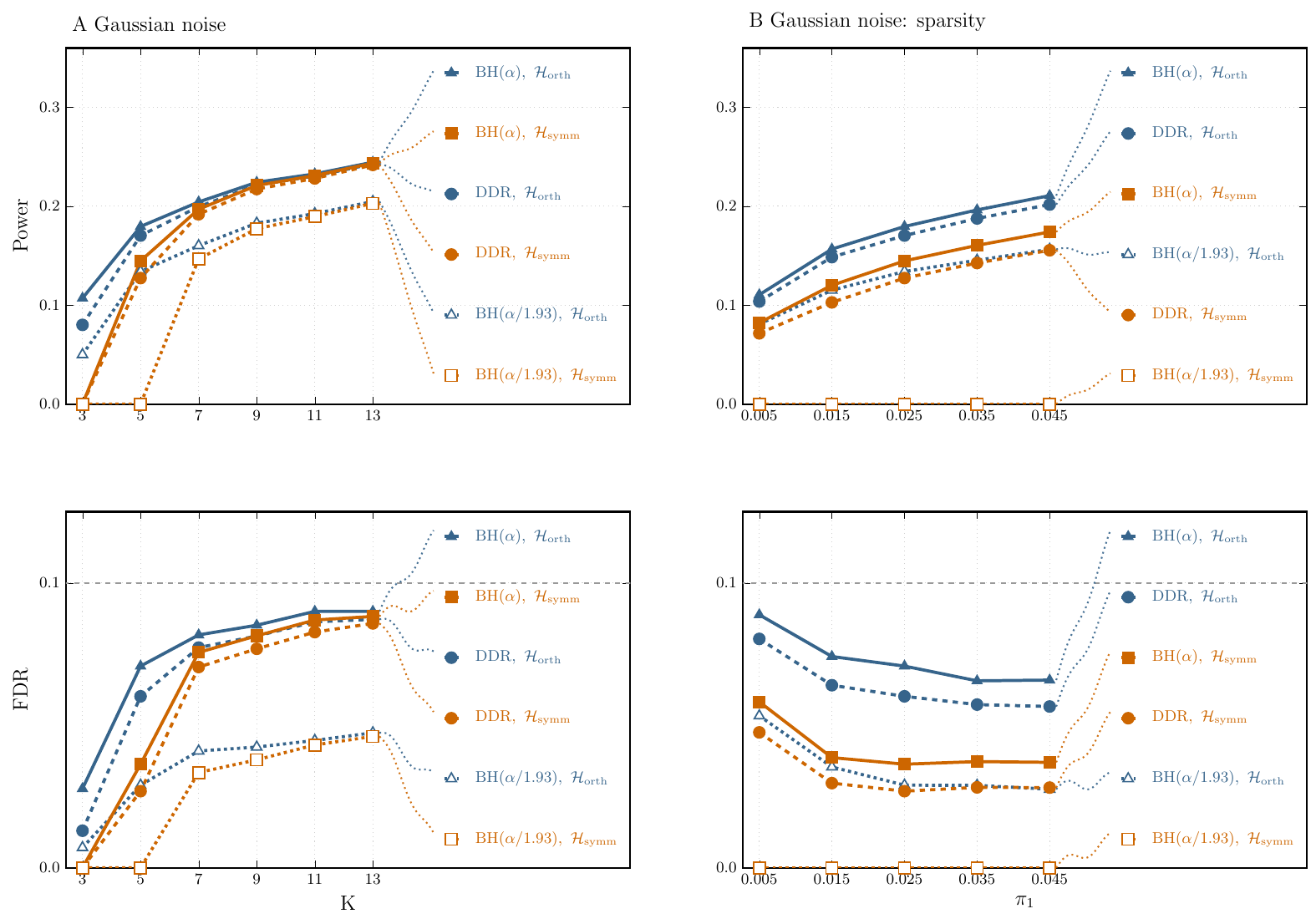}
\caption{Power (top) and FDR (bottom) for compound BH variants under Gaussian noise, in the settings of Figure~\ref{fig:main_gaussian_AB}. For each group, we compare BH at levels $\alpha$ and $\alpha/1.93$ with DDR at level $\alpha$ and $\tau=\alpha/10$, where $\alpha=0.1$.}
\label{fig:supp_gaussian_compound}
\end{figure}

\begin{figure}
\centering
\includegraphics[width=1\textwidth]{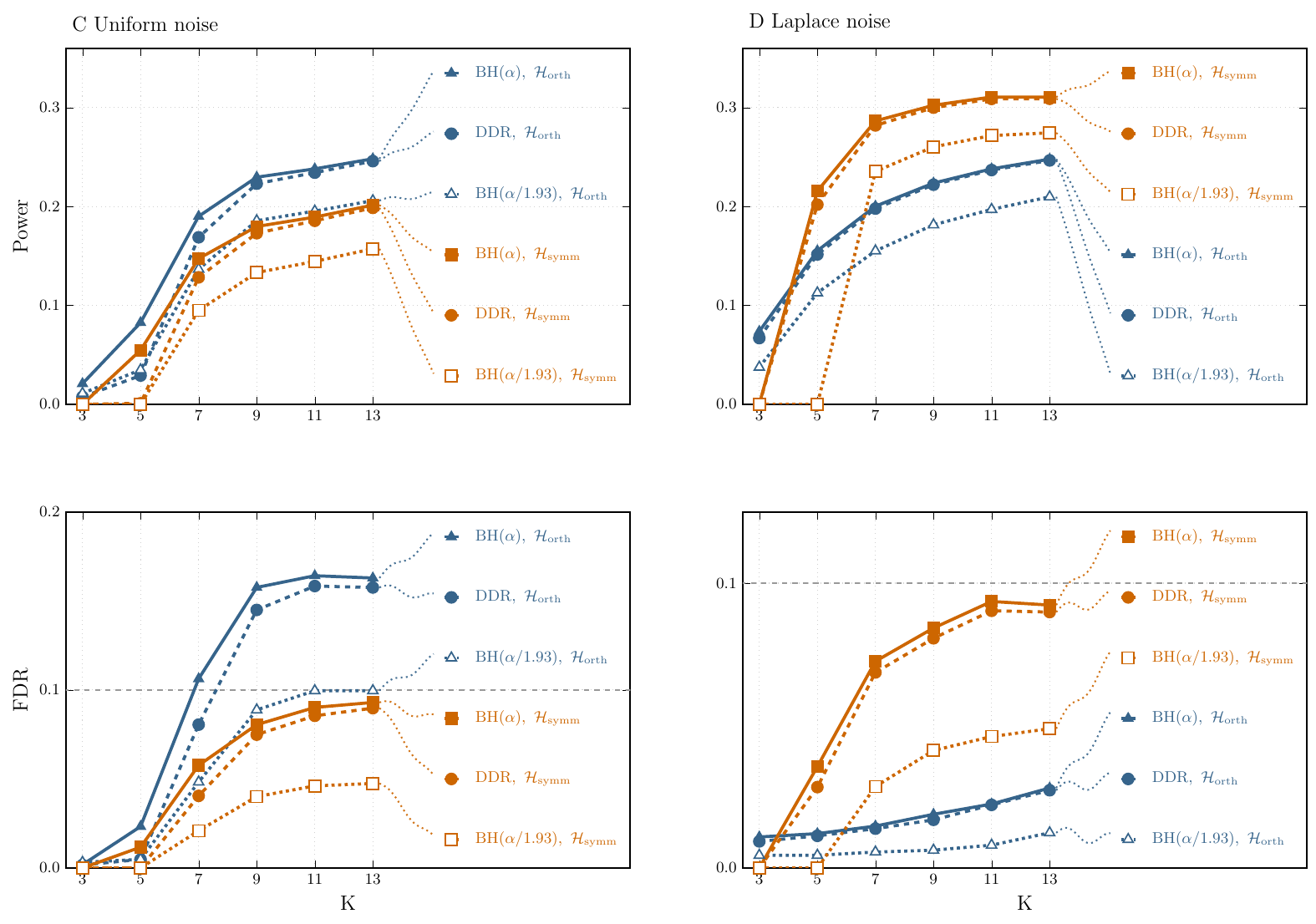}
\caption{Power (top) and FDR (bottom) for the same compound BH variants as in Figure~\ref{fig:supp_gaussian_compound}, under uniform noise (Panel C) and Laplace noise (Panel D), in the settings of Figure~\ref{fig:main_robustness_CD}. The FDR axes have different scales.}
\label{fig:supp_robustness_compound}
\end{figure}

\subsection{A further simulation with larger \texorpdfstring{$K$}{K}}
\label{sec:stylized_example}

We now consider a stylized setting with $K=16$ in which the score $\widehat{S}(\boldz)=|\hat\mu(\boldz)|$ gives the same ranking as the local false discovery rate. This lets us examine the benefit of preserving this ranking through compound randomization in a setting where separate randomization has less power not because of resolution but because it does not retain the ranking.

In our simulation, we generate samples as in~\eqref{eq:normal_samples}, that is,
$$
Z_{ij}\mid\mu_i,\sigma_i\sim\mathrm{N}(\mu_i,\sigma_i^2),\qquad j=1,\ldots,K.
$$
Parameters are generated as follows (for a hyperparameter $a>0$ and a constant $c>0$):
$$
\begin{aligned}
\sigma_i/\sqrt K &\sim 0.85\delta_{0.5}+0.15\delta_{2.5},\\
H_i\mid\sigma_i &\sim \mathrm{Bernoulli}(1-\pi_0(\sigma_i)),
  \;\; \text{where }\pi_0(\sigma):=\frac{1}{1+c\exp(a^2\sigma^2/(2K))},\\
\mu_i\mid H_i,\sigma_i &\sim
  \begin{cases}
  \delta_0, & H_i=0,\\
  \tfrac12\delta_{a\sigma_i^2/K}+\tfrac12\delta_{-a\sigma_i^2/K}, & H_i=1.
  \end{cases}
\end{aligned}
$$
This construction is chosen specifically so that the local false discovery rate is decreasing in $|\hat\mu_i| = |\hat\mu(\boldZ_i)|$, with
$$
\mathrm{lfdr}(\boldZ_i):=\PP{H_i=0\mid\boldZ_i}
=\frac{1}{1+c\cosh(a\hat\mu(\boldZ_i))},
$$
and where $\hat{\mu}$ is defined in~\eqref{eq:summary_statistics}.
In our simulation, we take $n=200$, $K=16$, $a=1.2$, and choose $c$ in each case so that $\EE{1-\pi_0(\sigma_i)}=\pi_1$ with 
$
\pi_1\in\{0.005,0.01,0.02,\ldots,0.05\}.
$

We compare BH applied to compound orthogonal rotation p-values (``Comp. $\mathcal{H}_{\mathrm{orth}}$''), compound sign flip p-values (``Comp. $\mathcal{H}_{\mathrm{symm}}$''), separate sign flip p-values (``Sep. $\mathcal{H}_{\mathrm{symm}}$''), and ordinary t-test p-values (``t-test''). The randomization procedures all use $\widehat{S}(\boldz)=|\hat\mu(\boldz)|$. We also include the local false discovery rate oracle (``Oracle''), as described in Section~\ref{sec:existing}. Throughout, we use nominal FDR level $\alpha=0.1$ and report estimates of FDR and power as in~\eqref{eq:fdr_power_defs}, averaged over $1{,}000$ Monte Carlo replicates.

\begin{figure}
\centering
\includegraphics[width=0.7\linewidth]{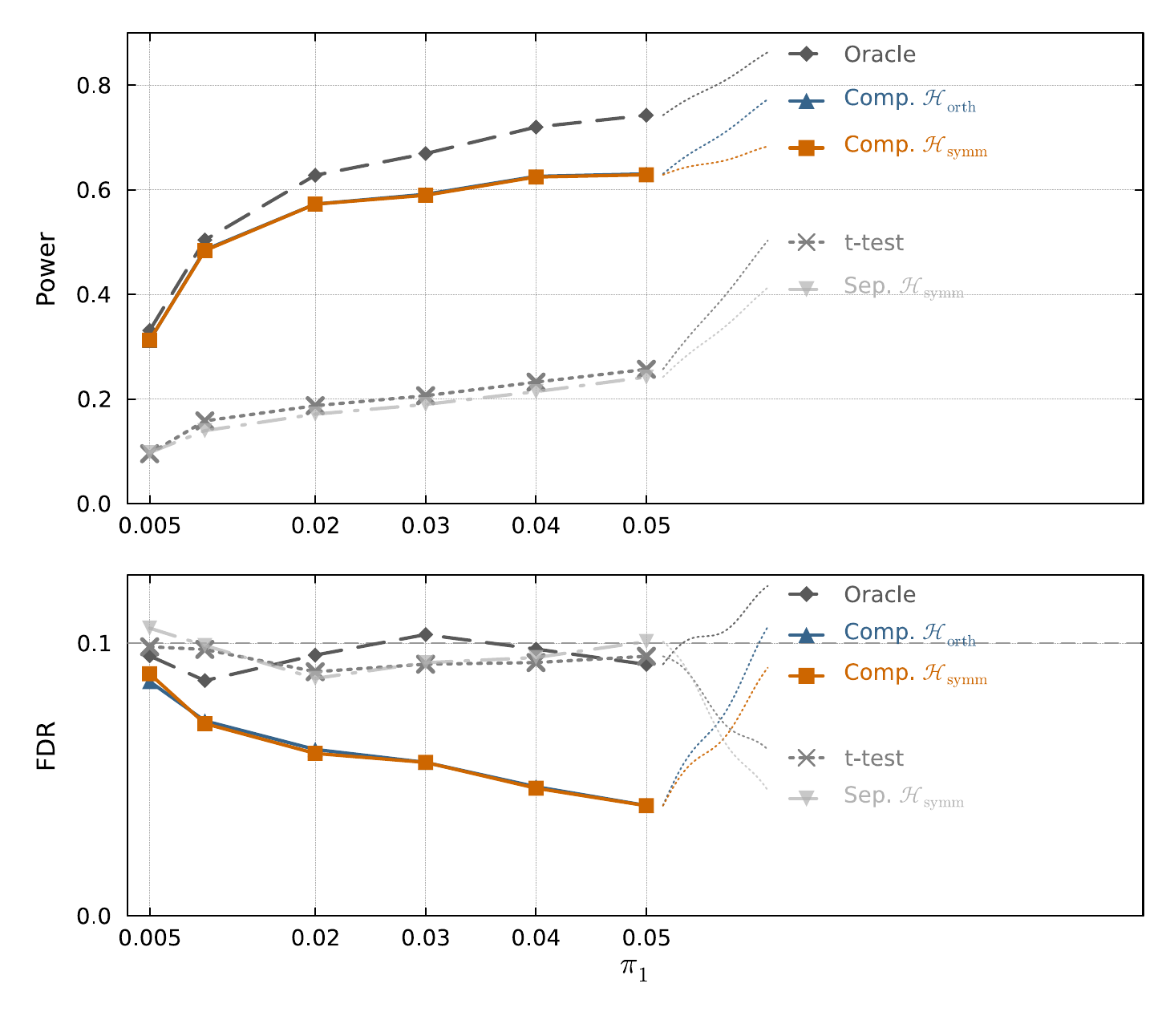}
\caption{Power (top) and FDR (bottom) with $n=200$, $K=16$, and nominal FDR level $\alpha=0.1$, as the non-null proportion $\pi_1$ varies. Our compound procedures (``Comp. $\mathcal{H}_{\mathrm{orth}}$'' and ``Comp. $\mathcal{H}_{\mathrm{symm}}$'') have nearly identical performance and substantially higher power than separate sign flips and ordinary t-tests. Separate sign flips closely track t-tests here. The main message is that compound randomization p-values have the benefit of preserving the ranking across hypotheses implied by the used score function, while separate randomization can rerank hypotheses.
}
\label{fig:stylized_lfdr}
\end{figure}

Results are reported in Figure~\ref{fig:stylized_lfdr}. We make the following observations. Here the separate sign flip procedure is not limited by resolution, and its FDR is very near the nominal level throughout. 
Moreover, its FDR and power very closely track those of the t-test. As explained after Proposition~\ref{prop:sep_pvalue_limma}, this is not surprising for larger $K$, see e.g., Example 15.2.4 in~\citet{lehmann2005testing} for a justification of this phenomenon as $K \to \infty$. What is more interesting is the difference in power between the compound and separate procedures with compound procedures having substantially higher power. This is because the compound procedures preserve the ranking implied by the local false discovery rate, i.e., by the score $\abs{\hat{\mu}_i}$, while the separate procedures start with $\abs{\hat{\mu}(\boldz)}$ but after randomization rank by the t-statistic $|T(\boldz)|$, exactly for rotations and approximately for sign flips. As a last observation we note that again the compound procedures increasingly get closer to the oracle as $\pi_1$ decreases.

\section{Facts about the limma model, the t-distribution and the F-distribution}
\label{sec:t_formulas}

For $\nu>0$, we write $t_{\nu}$ for the t-distribution  with $\nu$ degrees of freedom. Its density is given by
\begin{equation}
f_{t,\nu}(x)
=\frac{\Gamma((\nu+1)/2)}{\sqrt{\nu\pi}\,\Gamma(\nu/2)}
\p{1+\frac{x^2}{\nu}}^{-(\nu+1)/2},\;\; x\in\RR.
\label{eq:t_density}
\end{equation}
We also write $\bar{F}_{t,\nu}$ for the corresponding survival function.

Analogously, for $K \in \mathbb N_{\geq 2}$ and $\nu>0$, we write $F_{K,\nu}$ for the F-distribution with $K$ and $\nu$ degrees of freedom.  Its density is given by
\begin{equation}
f_{F,K,\nu}(x)=\frac{\Gamma((K+\nu)/2)}{\Gamma(K/2)\Gamma(\nu/2)}
  \p{\frac{K}{\nu}}^{K/2}
  x^{K/2-1}\p{1+\frac{Kx}{\nu}}^{-(K+\nu)/2},
  \;\; x>0.
\label{eq:f_density}
\end{equation}

\subsection{Moderated t-statistics and local fdr}

The following results are from~\citet{smyth2004linear}. We reprove them here for self-containedness.

\begin{prop}
\label{prop:moderated_two_groups}
Under~\eqref{eq:normal_samples} and~\eqref{eq:lonnsted_prior}, with $K\geq2$ and $0<\pi_0<1$,
\begin{equation}
T(\boldZ_i;\nu_0,s_0)\mid\mu_i=0\sim t_{\nu+\nu_0},
\;\;
T(\boldZ_i;\nu_0,s_0)
\mid\mu_i\neq0 \sim \sqrt{1+\lambda} \cdot t_{\nu+\nu_0}.
\label{eq:t_two_groups}
\end{equation}
\end{prop}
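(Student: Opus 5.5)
We will condition on $\sigma_i^2$ and use the standard Gaussian facts: given $(\mu_i,\sigma_i^2)$, $\hat\mu(\boldZ_i)\sim \mathrm{N}(\mu_i,\sigma_i^2/K)$ and $\nu\hat\sigma^2(\boldZ_i)/\sigma_i^2\sim\chi^2_\nu$, with the two independent (Cochran). Under the prior, $\nu_0 s_0^2/\sigma_i^2 \sim \chi^2_{\nu_0}$.

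\textbf{Null case.} When $\mu_i=0$, write $\sqrt K\hat\mu(\boldZ_i)=\sigma_i Z$ with $Z\sim\mathrm{N}(0,1)$ independent of $(\sigma_i^2,\hat\sigma^2(\boldZ_i))$. Next, $(\nu_0+\nu)\hat\sigma^2(\boldZ_i;\nu_0,s_0)/\sigma_i^2 = \nu_0 s_0^2/\sigma_i^2+\nu\hat\sigma^2(\boldZ_i)/\sigma_i^2$. Conditional on $\sigma_i^2$, the second summand is $\chi^2_\nu$, and its law does not depend on $\sigma_i^2$. Hence it is independent of $\sigma_i^2$, and so of $\nu_0s_0^2/\sigma_i^2\sim\chi^2_{\nu_0}$. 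The sum $W$ is therefore $\chi^2_{\nu_0+\nu}$.

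The delicate point is joint independence of $Z$ and $W$. Conditionally on $\sigma_i^2$, $Z$ is standard normal and independent of $\hat\sigma^2(\boldZ_i)$. Its conditional law does not depend on $\sigma_i^2$, so $Z$ is independent of the pair $(\sigma_i^2,\hat\sigma^2(\boldZ_i))$ and hence of $W$. It follows that $T(\boldZ_i;\nu_0,s_0)=Z/\sqrt{W/(\nu_0+\nu)}\sim t_{\nu+\nu_0}$. The same argument, applied to $\sqrt K(\hat\mu-\mu_i)$ for arbitrary $\mu_i$, gives the general pivot in~\eqref{eq:moderated_t_pivot}.

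\textbf{Alternative case.} When $\mu_i\neq0$, we have $\mu_i\mid\sigma_i^2\sim\mathrm{N}(0,\lambda\sigma_i^2/K)$. Marginalizing over $\mu_i$ gives $\hat\mu(\boldZ_i)\mid\sigma_i^2\sim\mathrm{N}(0,(1+\lambda)\sigma_i^2/K)$. The residual vector, and hence $\hat\sigma^2(\boldZ_i)$, does not involve $\mu_i$, and it remains conditionally independent of $\hat\mu(\boldZ_i)$ given $\sigma_i^2$, because it is independent given $(\mu_i,\sigma_i^2)$ and its law is free of $\mu_i$. So $\sqrt K\hat\mu(\boldZ_i)=\sqrt{1+\lambda}\,\sigma_i Z$, and the null argument goes through verbatim, yielding $\sqrt{1+\lambda}\cdot t_{\nu+\nu_0}$. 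The only step requiring care is tracking these conditional independences; everything else is routine.
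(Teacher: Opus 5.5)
Your proof is correct and follows essentially the same route as the paper. The paper represents $(\sqrt K\hat\mu_i/\sigma_i,\,\nu\hat\sigma_i^2/\sigma_i^2,\,\sigma_i^2)\mid\mu_i=0$ as $(G,V,\nu_0s_0^2/V_0)$ with jointly independent $G\sim\mathrm{N}(0,1)$, $V\sim\chi^2_\nu$, $V_0\sim\chi^2_{\nu_0}$, substitutes, and calls the alternative case analogous; you justify that joint independence explicitly and write out the alternative case, both of which are correct. Both arguments also implicitly use that $\{\mu_i=0\}$ is independent of $\sigma_i^2$ under~\eqref{eq:lonnsted_prior}: the mixing weight $\pi_0$ does not depend on $\sigma_i^2$, so conditioning on the event leaves the inverse-$\chi^2$ law of $\sigma_i^2$ unchanged.
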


\begin{proof}
We prove the result only given $\mu_i =0$; the alternative argument is analogous. Conditional on $\mu_i=0$, we can write
$$
(\sqrt K\hat{\mu}_i/\sigma_i,\,\nu\hat{\sigma}_i^2/\sigma_i^2,\, \sigma_i^2)\mid (\mu_i=0)\;\, \stackrel{\mathcal{D}}{=} \;\, \p{G,\, V,\,\frac{\nu_0s_0^2}{V_0}},
$$
where $G\sim\mathrm{N}(0,1)$, $V\sim\chi^2_\nu$ and $V_0\sim\chi^2_{\nu_0}$ are jointly independent.  Substituting into~\eqref{eq:moderated_stats} gives
$$
T(\boldZ_i;\nu_0,s_0) \mid (\mu_i=0)
\;\, \stackrel{\mathcal{D}}{=} \;\,\frac{G}{\sqrt{(V_0+V)/(\nu_0+\nu)}}\sim t_{\nu+\nu_0},
$$
since $V_0+V\sim\chi^2_{\nu_0+\nu}$ independently of $G$.
\end{proof}

\begin{prop}
\label{prop:localfdr_formula}
Under~\eqref{eq:normal_samples} and~\eqref{eq:lonnsted_prior}, with $K\geq2$ and $0<\pi_0<1$, the local false discovery rate is
$$
\mathrm{lfdr}(\boldz)
=\left[1+\frac{\pi_1}{\pi_0}(1+\lambda)^{(\nu_0+\nu)/2}
\left\{\frac{\nu_0+\nu+T^2(\boldz;\nu_0,s_0)}
{(1+\lambda)(\nu_0+\nu)+T^2(\boldz;\nu_0,s_0)}\right\}^{(\nu_0+K)/2}\right]^{-1}.
$$
\end{prop}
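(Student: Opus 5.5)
We compute $\mathrm{lfdr}(\boldz)$ by Bayes' rule, using the fact that $(\hat{\mu}(\boldz),\hat{\sigma}^2(\boldz))$ is sufficient for $(\mu_i,\sigma_i^2)$ under~\eqref{eq:normal_samples}. The likelihood ratio between alternative and null therefore depends on $\boldz$ only through these two statistics. We then write
$$
\mathrm{lfdr}(\boldz) = \Bigl[1 + \tfrac{\pi_1}{\pi_0}\, m_1(\boldz)/m_0(\boldz)\Bigr]^{-1},
$$
where $m_0$ and $m_1$ are the marginal densities of $\boldZ_i$ given $\mu_i=0$ and $\mu_i\neq 0$. The factors of the Gaussian density that do not involve $(\mu_i,\sigma_i^2)$ cancel in the ratio. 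So it suffices to compute each $m_h$ up to a common constant, as a function of $\hat{\mu}$ and $\hat{\sigma}^2$.

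First, fix $\sigma^2$ and integrate out $\mu_i$. Under the null, the likelihood is proportional to
$$
\sigma^{-K}\exp\bigl[-\{\nu\hat{\sigma}^2 + K\hat{\mu}^2\}/(2\sigma^2)\bigr].
$$
Under the alternative, $\hat{\mu}\mid\sigma^2 \sim \mathrm{N}(0,(1+\lambda)\sigma^2/K)$ independently of $\hat{\sigma}^2$. Compared with the null, this replaces $\exp\{-K\hat{\mu}^2/(2\sigma^2)\}$ by
$$
(1+\lambda)^{-1/2}\exp\bigl[-K\hat{\mu}^2/\{2(1+\lambda)\sigma^2\}\bigr].
$$

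Next, integrate against the scaled inverse-$\chi^2$ prior~\eqref{eq:limma_variance_prior}. Its density is proportional to
$$
(\sigma^2)^{-\nu_0/2-1}\exp\{-\nu_0 s_0^2/(2\sigma^2)\}.
$$
The integrals are inverse-gamma normalizing constants of the form $\Gamma\bigl((\nu_0+K)/2\bigr)\,(B/2)^{-(\nu_0+K)/2}$, with
$$
B_0 = \nu_0 s_0^2 + \nu\hat{\sigma}^2 + K\hat{\mu}^2, \qquad B_1 = \nu_0 s_0^2 + \nu\hat{\sigma}^2 + K\hat{\mu}^2/(1+\lambda).
$$
Hence
$$
\frac{m_1}{m_0} = (1+\lambda)^{-1/2}\,(B_0/B_1)^{(\nu_0+K)/2}.
$$

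Finally, divide $B_0$ and $B_1$ by $\nu_0 s_0^2 + \nu\hat{\sigma}^2 = (\nu_0+\nu)\hat{\sigma}^2(\boldz;\nu_0,s_0)$, and note that $K\hat{\mu}^2/\{(\nu_0+\nu)\hat{\sigma}^2(\boldz;\nu_0,s_0)\} = T^2/(\nu_0+\nu)$. This gives
$$
\frac{B_0}{B_1} = \frac{(1+\lambda)(\nu_0+\nu+T^2)}{(1+\lambda)(\nu_0+\nu)+T^2}.
$$
The prefactor is then
$$
(1+\lambda)^{-1/2}\,(1+\lambda)^{(\nu_0+K)/2} = (1+\lambda)^{(\nu_0+\nu)/2},
$$
using $K-1=\nu$, which matches the stated formula.

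No step is a real obstacle; the only point needing care is the bookkeeping of powers of $(1+\lambda)$. One must absorb the factor $(1+\lambda)$ from $B_0/B_1$ into the exponent and check that it combines with the $(1+\lambda)^{-1/2}$ from the $\mu$-integral to give exponent $(\nu_0+\nu)/2$. We also use $0<\pi_0<1$ so that both components have positive mass.
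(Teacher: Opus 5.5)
Your proposal is correct and follows essentially the same route as the paper. You integrate out $\mu_i$ to obtain the $(1+\lambda)^{-1/2}$ factor, then integrate against the scaled inverse-$\chi^2$ prior via the inverse-gamma normalizing constant to get $(1+\lambda)^{-1/2}(B_0/B_1)^{(\nu_0+K)/2}$, and finally substitute $T^2(\boldz;\nu_0,s_0)$. The differences are cosmetic: you work with the marginal density of $\boldZ_i$ rather than that of $(\hat{\mu}_i,\hat{\sigma}_i^2)$, and you carry out explicitly the $(1+\lambda)$-exponent bookkeeping that the paper leaves to the reader.
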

\begin{proof}
The pair $(\hat{\mu}(\boldZ_i),\hat{\sigma}^2(\boldZ_i)) = (\hat{\mu}_i, \hat{\sigma}_i^2)$ is sufficient for $(\mu_i,\sigma_i^2)$. 
Integrating over the distribution of $\mu_i$ given $\mu_i \neq 0, \sigma_i^2$, we find that,
$$
(\hat{\mu}_i, \hat{\sigma}_i^2) \mid\mu_i\neq0,\sigma_i^2
\,\sim\,\mathrm{N}(0,(1+\lambda)\sigma_i^2/K) \otimes \frac{\sigma_i^2}{\nu}\chi^2_\nu.
$$
Multiplying their densities gives
$$
\begin{aligned}
p(\hat{\mu},\hat{\sigma}^2\mid\mu_i\neq0,\sigma_i^2=\sigma^2)
=\frac{\sqrt K(\nu/2)^{\nu/2}(\hat{\sigma}^2)^{\nu/2-1}}
{\sqrt{2\pi(1+\lambda)}\,\Gamma(\nu/2)}
(\sigma^2)^{-K/2}
\exp\!\cb{-\frac{\nu\hat{\sigma}^2+K\hat{\mu}^2/(1+\lambda)}{2\sigma^2}}.
\end{aligned}
$$
The null density $p(\hat{\mu},\hat{\sigma}^2\mid\mu_i=0,\sigma_i^2=\sigma^2)$ is the same expression with $\lambda=0$. By~\eqref{eq:limma_variance_prior}, the prior density of $\sigma^2$ is
$$
\frac{(\nu_0s_0^2/2)^{\nu_0/2}}{\Gamma(\nu_0/2)}
(\sigma^2)^{-\nu_0/2-1}\exp\!\p{-\frac{\nu_0s_0^2}{2\sigma^2}},\;\; \sigma^2>0.
$$
Thus:
$$
\begin{aligned}
\frac{p(\hat{\mu},\hat{\sigma}^2\mid\mu_i\neq0)}
{p(\hat{\mu},\hat{\sigma}^2\mid\mu_i=0)}
&=\frac{1}{\sqrt{1+\lambda}}
\frac{\displaystyle\int_0^\infty(\sigma^2)^{-(\nu_0+K)/2-1}
\exp\!\cb{-\frac{\nu_0s_0^2+\nu\hat{\sigma}^2+K\hat{\mu}^2/(1+\lambda)}{2\sigma^2}}\dd\sigma^2}
{\displaystyle\int_0^\infty(\sigma^2)^{-(\nu_0+K)/2-1}
\exp\!\cb{-\frac{\nu_0s_0^2+\nu\hat{\sigma}^2+K\hat{\mu}^2}{2\sigma^2}}\dd\sigma^2}\\
&=\frac{1}{\sqrt{1+\lambda}}
\p{\frac{\nu_0s_0^2+\nu\hat{\sigma}^2+K\hat{\mu}^2}
{\nu_0s_0^2+\nu\hat{\sigma}^2+K\hat{\mu}^2/(1+\lambda)}}^{(\nu_0+K)/2}.
\end{aligned}
$$
The last equality uses $\int_0^\infty x^{-a-1}e^{-b/x}\dd x=\Gamma(a)b^{-a}$ for $a,b>0$. By sufficiency and Bayes' rule,
$$
\mathrm{lfdr}(\boldz) = \PP{\mu_i=0 \mid \hat{\mu}_i = \hat{\mu}(\boldz),\; \hat{\sigma}_i^2 = \hat{\sigma}^2(\boldz)}
=\p{1+\frac{\pi_1}{\pi_0}
\frac{p(\hat{\mu},\hat{\sigma}^2\mid\mu_i\neq0)}
{p(\hat{\mu},\hat{\sigma}^2\mid\mu_i=0)}}^{-1}.
$$
Substituting $K\hat{\mu}^2/(\nu_0s_0^2+\nu\hat{\sigma}^2)=T^2(\boldz;\nu_0,s_0)/(\nu_0+\nu)$ gives the desired formula.
\end{proof}

In what follows, we study the null ($t_{\nu+\nu_0}$) and non-null distributions ($\sqrt{1+\lambda}\cdot t_{\nu+\nu_0}$) of the moderated t-statistic $T(\boldZ_i; \nu_0, s_0)$ that were derived in Proposition~\ref{prop:moderated_two_groups} in more detail. Specifically we use short hand notation:
\begin{itemize}
\item $f_0$ and $\bar{F}_0$ for the density, resp. survival function of $t_{\nu+\nu_0}$ (omitting $\nu$, $\nu_0$ from our notation);
\item $f_1$ and $\bar{F}_1$ for the density, resp. survival function of $\sqrt{1+\lambda}\cdot t_{\nu+\nu_0}$ (omitting $\nu$, $\nu_0$ and $\lambda$ from our notation).
\end{itemize}
By virtue of Proposition~\ref{prop:localfdr_formula} and some abuse of notation (wherein we write $\mathrm{lfdr}(\cdot)$ with both arguments $\boldz \in \RR^K$ and $z \in \RR$)
, we may write:
\begin{equation*}
\begin{aligned}
&\mathrm{lfdr}(\boldz) \equiv \mathrm{lfdr}(T(\boldz; \nu_0, s_0)), \\
&\text{where }\,\mathrm{lfdr}(s) := \PP{\mu_i =0 \mid T(\boldZ_i; \nu_0,s_0) = s} = \frac{ \pi_0 f_0(s)}{\pi_0 f_0(s) + (1-\pi_0) f_1(s)}.
\end{aligned}
\end{equation*}
More concretely, Proposition~\ref{prop:localfdr_formula} establishes that $\PP{\mu_i =0 \mid \boldZ_i}$ is measurable with respect to $T(\boldZ_i; \nu_0, s_0)$ and so must be equal to $\PP{\mu_i =0 \mid T(\boldZ_i; \nu_0, s_0)}$.

Let us also recall the notation $\mathrm{Fdr}_n(s)$ for $s \geq 0$ from~\eqref{eq:mfdr_equilibrium_rewritten}:
$$
\mathrm{Fdr}_n(s) \equiv \mathrm{Fdr}(s) := \PP{ \mu_i =0 \,\mid \, |T(\boldZ_i; \nu_0, s_0)| \geq s   } = \frac{\pi_0\bar F_0(s)}
{\pi_0\bar F_0(s)+\pi_1\bar F_1(s)},
$$
and where in this section we omit the dependence on $n$ (which in turn is via dependence on $\pi_0, \pi_1, \lambda$ which can depend on $n$). We have the following monotonicity result.

\begin{lemm}[Monotonicity]
Suppose that $\pi_0 \in (0,1)$ and $\lambda >0$. Then the functions
$$
s \mapsto \frac{f_0(s)}{f_1(s)},\;\;\;\; s \mapsto \frac{\bar{F}_0(s)}{\bar{F}_1(s)},\;\;\;\; s \mapsto \mathrm{lfdr}(s), \;\;\;\; s \mapsto \mathrm{Fdr}(s)
$$
are strictly decreasing on $(0, \infty)$. 
\label{lemm:monotonicity_finite_sample}
\end{lemm}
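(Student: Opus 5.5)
The plan is to reduce all four claims to a single one: the strict monotonicity of the density ratio $f_0/f_1$ on $(0,\infty)$. The other three functions then follow by standard arguments.

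\emph{Density ratio.} Since $f_1(s)=f_0(s/\sqrt{1+\lambda})/\sqrt{1+\lambda}$, the formula~\eqref{eq:t_density} with $m:=\nu+\nu_0$ gives
$$
\frac{f_0(s)}{f_1(s)}=\sqrt{1+\lambda}\,\left\{\frac{m(1+\lambda)+s^2}{(1+\lambda)(m+s^2)}\right\}^{(m+1)/2}.
$$
Set $u=s^2$. The inner ratio is $(a+u)/\{(1+\lambda)(m+u)\}$ with $a=m(1+\lambda)>m$. Its derivative in $u$ has sign $(m+u)-(a+u)=m-a<0$. Hence the ratio is strictly decreasing in $u>0$, and therefore in $s>0$.

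\emph{Local fdr.} We have $\mathrm{lfdr}(s)=\{1+(\pi_1/\pi_0)\,f_1(s)/f_0(s)\}^{-1}$. Since $f_1/f_0$ is strictly increasing and $\pi_1/\pi_0\in(0,\infty)$, $\mathrm{lfdr}$ is strictly decreasing.

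\emph{Survival ratio.} For $s>0$ write $\bar F_0(s)=\int_s^\infty r(x)f_1(x)\,dx$ with $r:=f_0/f_1$ strictly decreasing. Differentiating,
$$
\frac{d}{ds}\frac{\bar F_0}{\bar F_1}(s)=\frac{-f_0(s)\bar F_1(s)+\bar F_0(s)f_1(s)}{\bar F_1(s)^2}.
$$
The numerator equals $f_1(s)\int_s^\infty\{r(x)-r(s)\}f_1(x)\,dx$. This is strictly negative because $r(x)<r(s)$ for all $x>s$ and $f_1>0$. So $\bar F_0/\bar F_1$ is strictly decreasing. This is the familiar fact that a monotone likelihood ratio implies a monotone ratio of tails, and it is the only step needing an argument beyond algebra. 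The one point to check is that the strict inequality survives integration, which holds since the integrand is strictly negative on a set of positive measure.

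\emph{Fdr.} Finally, $\mathrm{Fdr}(s)=\{1+(\pi_1/\pi_0)\,\bar F_1(s)/\bar F_0(s)\}^{-1}$, so its strict decrease follows from the previous step. No real obstacle is expected.
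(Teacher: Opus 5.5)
Your proposal is correct and follows essentially the same route as the paper. Both proofs first show strict monotonicity of the density ratio from the explicit $t$-density formula. Both then obtain the survival-ratio claim from the sign of the derivative of $\bar F_0/\bar F_1$ (your integral form $f_1(s)\int_s^\infty\{r(x)-r(s)\}f_1(x)\,dx$ is just the paper's inequality $\bar F_1(s)/\bar F_0(s) > f_1(s)/f_0(s)$ written differently), and then read off $\mathrm{lfdr}$ and $\mathrm{Fdr}$ as strictly monotone transforms.
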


\begin{proof}
Notice that
$$
\frac{f_1(s)}{f_0(s)}=\frac{1}{\sqrt{1+\lambda}}\p{\frac{1+s^2/(\nu+\nu_0)}{1+s^2/\{(\nu+\nu_0)(1+\lambda)\}}}^{(\nu+\nu_0+1)/2}
$$
is strictly increasing for $s>0$, since $\lambda>0$. Thus,
$$
\frac{\bar{F}_1(s)}{\bar{F}_0(s)}
=\frac{1}{\bar{F}_0(s)}\int_s^\infty\frac{f_1(u)}{f_0(u)}f_0(u)\dd u
>\frac{f_1(s)}{f_0(s)}.
$$
Consequently,
$$
\frac{\dd}{\dd s}\frac{\bar{F}_0(s)}{\bar{F}_1(s)}
=\frac{f_1(s)\bar{F}_0(s)-f_0(s)\bar{F}_1(s)}{\bar{F}_1(s)^2}<0,
$$
as claimed.

The claim for $\mathrm{lfdr}$ and $\mathrm{Fdr}$ follow.
\end{proof}

We conclude with some facts involving $F_{K,\nu_0}$, that is, the $F$ distribution with $K$ and $\nu_0$ degrees of freedom.

\begin{prop}
\label{prop:second_moment_f}
Under~\eqref{eq:normal_samples} and~\eqref{eq:lonnsted_prior}, with $\pi_0>0$, it holds that
\begin{equation}
\frac{\hat\tau_i^2}{s_0^2}\mid\mu_i=0\,\sim\,F_{K,\nu_0}.
\label{eq:second_moment_f}
\end{equation}.
\end{prop}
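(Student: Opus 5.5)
Under $\mu_i=0$ we condition on $\sigma_i^2$ and then integrate over the prior~\eqref{eq:limma_variance_prior}. By~\eqref{eq:hat_tau_i_bn}, conditional on $\sigma_i^2$ and $\mu_i=0$, we have $K\hat\tau_i^2/\sigma_i^2 \sim \chi^2_K$. We therefore write
$$
(K\hat\tau_i^2/\sigma_i^2,\ \sigma_i^2)\mid(\mu_i=0) \;\stackrel{\mathcal D}{=}\; \p{U,\ \frac{\nu_0 s_0^2}{V_0}},
$$
with $U\sim\chi^2_K$ and $V_0\sim\chi^2_{\nu_0}$ independent. Independence holds because, given $\mu_i=0$, the conditional law of $\hat\tau_i^2/\sigma_i^2$ does not depend on $\sigma_i^2$. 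Moreover, $\sigma_i^2$ is independent of the event $\{\mu_i=0\}$ under~\eqref{eq:lonnsted_prior}: the mixture weight $\pi_0$ does not depend on $\sigma_i^2$, and $\pi_0>0$ makes the conditioning well defined. This is the same representation used in the proof of Proposition~\ref{prop:moderated_two_groups}, with $\chi^2_K$ in place of $(G,V)$.

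Substituting gives
$$
\frac{\hat\tau_i^2}{s_0^2}=\frac{\sigma_i^2 U}{K s_0^2}\stackrel{\mathcal D}{=}\frac{\nu_0 U}{K V_0}=\frac{U/K}{V_0/\nu_0}.
$$
This is exactly the definition of $F_{K,\nu_0}$ as a ratio of independent chi-squares divided by their degrees of freedom. Its density matches~\eqref{eq:f_density} with $\nu$ replaced by $\nu_0$, and this can be checked directly if desired by integrating the scaled $\chi^2_K$ density against the inverse-gamma prior using $\int_0^\infty x^{-a-1}e^{-b/x}\,dx=\Gamma(a)b^{-a}$.

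There is no real obstacle here. The only care needed is to justify that the conditional law of $\sigma_i^2$ given $\mu_i=0$ is still the scaled inverse chi-square, which follows from the product structure of~\eqref{eq:lonnsted_prior}.
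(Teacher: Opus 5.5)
Your proposal is correct and follows essentially the same route as the paper: represent $(\hat\tau_i^2/\sigma_i^2,\ \sigma_i^2/s_0^2)$ given $\mu_i=0$ as $(U/K,\ \nu_0/V_0)$ with independent $U\sim\chi^2_K$ and $V_0\sim\chi^2_{\nu_0}$, then read off the $F_{K,\nu_0}$ ratio. Your added remark that $\sigma_i^2$ is independent of the event $\{\mu_i=0\}$ (because $\PP{\mu_i=0\mid\sigma_i^2}=\pi_0$ does not depend on $\sigma_i^2$) makes explicit a step the paper leaves implicit.
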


\begin{proof}
Conditional on $\mu_i=0$, we can write
$$
\p{\frac{\hat\tau_i^2}{\sigma_i^2},\,\frac{\sigma_i^2}{s_0^2}}\mid(\mu_i=0)
\;\,\stackrel{\mathcal D}{=}\;\,\p{\frac{V}{K},\,\frac{\nu_0}{V_0}},
$$
where $V\sim\chi^2_K$ and $V_0\sim\chi^2_{\nu_0}$ are independent. Consequently,
$$
\frac{\hat\tau_i^2}{s_0^2}\mid(\mu_i=0)
\;\,\stackrel{\mathcal D}{=}\;\,\frac{V/K}{V_0/\nu_0}\sim F_{K,\nu_0}.
$$
\end{proof}

\begin{lemm}
Let $Q_\nu(p)$ denote the $p$-quantile of $F_{K,\nu}$, the $F$-distribution with $K$ and $\nu$ degrees of freedom.
For any $0<p_1<p_2<1$, let $R(\nu):=Q_\nu(p_2)/Q_\nu(p_1)$. Then $R(\cdot)$
is strictly decreasing on $(0,\infty)$.\footnote{This proof is due to GPT-5.6 and was checked by the authors.}
\label{lemm:quantile_ratio_monotone}
\end{lemm}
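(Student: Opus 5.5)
The approach is a dispersive-ordering argument on the log scale. Since $R(\nu)=\exp\{\log Q_\nu(p_2)-\log Q_\nu(p_1)\}$, and $\log Q_\nu(p)$ is the $p$-quantile of $\log X_\nu$ with $X_\nu\sim F_{K,\nu}$, it suffices to show the following. For $\nu<\nu'$, the quantile spread $q_{\log X_\nu}(p_2)-q_{\log X_\nu}(p_1)$ strictly exceeds the corresponding spread for $\log X_{\nu'}$. This is exactly the statement that $\log X_{\nu'}$ is strictly smaller than $\log X_\nu$ in the dispersive order, i.e., $u\mapsto q_{\log X_\nu}(u)-q_{\log X_{\nu'}}(u)$ is strictly increasing on $(0,1)$.

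\textbf{Step 1 (decomposition).} As in the proof of Proposition~\ref{prop:second_moment_f}, write $X_\nu \stackrel{\mathcal D}{=} (V/K)/(V_0/\nu)$ with $V\sim\chi^2_K$ and $V_0\sim\chi^2_\nu$ independent. Then
$$\log X_\nu \stackrel{\mathcal D}{=} A + B_\nu,\qquad A:=\log(V/K),\qquad B_\nu:=-\log(V_0/\nu).$$
The variable $A$ does not depend on $\nu$. Its density, proportional to $\exp\{Ky/2-Ke^y/2\}$, is log-concave.

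\textbf{Step 2 (reduction).} I will use the Lewis--Thompson theorem: if $W$ has a log-concave density and is independent of $Y,Y'$, and $Y'\le_{\mathrm{disp}}Y$, then $W+Y'\le_{\mathrm{disp}}W+Y$. This reduces the claim to showing $B_{\nu'}\le_{\mathrm{disp}} B_\nu$ for $\nu<\nu'$. Since the additive constant $\log\nu$ and the factor $2$ in $\chi^2_\nu=2\,\mathrm{Gamma}(\nu/2)$ do not affect the dispersive order, and negation preserves it, what remains is a single claim: $\log G_a$ with $G_a\sim\mathrm{Gamma}(a,1)$ is strictly decreasing in $a$ in the dispersive order. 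For strictness, I will check that strict dispersive inequality of the summand passes through the convolution, because $A$ has a positive density on all of $\mathbb R$.

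\textbf{Step 3 (main obstacle): dispersive monotonicity of log-gamma.} The first route I would try is the Beta--Gamma algebra. For $a<a'$ one has $G_a \stackrel{\mathcal D}{=} G_{a'}\,\beta$ with $\beta\sim\mathrm{Beta}(a,a'-a)$ independent of $G_{a'}$, so $\log G_a=\log G_{a'}+\log\beta$. The density of $\log\beta$ is proportional to $e^{ay}(1-e^y)^{a'-a-1}$ on $y<0$, which is log-concave whenever $a'-a\ge1$. I do not see how to close that gap by chaining: the Lewis--Thompson step only ever adds a log-concave summand, so it cannot produce the comparison between $\log G_a$ and $\log G_{a'}$ when $a'-a<1$.

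For the gap $a'-a<1$ I would therefore switch to a direct check of the quantile-density characterization. Let $q_a$ be the quantile function of $\log G_a$ and $g_a$ its density, $g_a(y)=e^{ay-e^y}/\Gamma(a)$. Dispersive monotonicity is equivalent to $g_a(q_a(u))$ being increasing in $a$ for every $u\in(0,1)$. Differentiating in $a$, this amounts to a sign condition of the form
$$\partial_a q_a(u)\,\bigl(a-e^{q_a(u)}\bigr)+q_a(u)-\psi(a)\ \ge 0,$$
where $\psi$ is the digamma function and $\partial_a q_a(u)=-\partial_a\Gamma_{\mathrm{inc}}/g_a$ evaluated at $q_a(u)$, with $\Gamma_{\mathrm{inc}}$ the regularized incomplete gamma function. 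I expect this inequality, together with its strictness, to be the main technical obstacle. I would attack it by rewriting it via $\partial_a\log$ of the lower incomplete gamma function and a Chebyshev-type covariance inequality, using that $y$ and $e^y$ are comonotone.

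Alternatively, the same derivative criterion could be checked directly on the Beta representation: $F_{K,\nu}=\tfrac{\nu}{K}\,\beta/(1-\beta)$ with $\beta\sim\mathrm{Beta}(K/2,\nu/2)$. Once the claim holds for arbitrary $a<a'$, Steps 1--2 give $R(\nu)>R(\nu')$ for all $0<\nu<\nu'$.
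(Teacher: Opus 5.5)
Your reformulation is correct: the lemma says exactly that $\log X_\nu$ is strictly decreasing in $\nu$ in the dispersive order. Steps 1--2 are also sound, although for strictness what you need is that the density of $A$ is strictly log-concave, not merely positive. But the proof has a genuine gap: Step 3, which carries all the content, is never established. You write down a derivative sign condition for the log-gamma family and sketch how you might attack it, but you do not verify it. So nothing in the proposal proves $\log G_{a'}\le_{\mathrm{disp}}\log G_a$ when $a'-a<1$.

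The obstruction you give for abandoning the Beta--Gamma route is not real. In $\log G_a\stackrel{\mathcal D}{=}\log G_{a'}+\log\beta$ you do not need $\log\beta$ to be log-concave. The log-concave summand is $\log G_{a'}$ itself, whose density $\propto e^{a'y-e^y}$ is strictly log-concave for every $a'>0$. A point mass is dispersively smaller than any random variable. So Lewis--Thompson, applied with $Z=\log G_{a'}$ and $0\le_{\mathrm{disp}}\log\beta$, gives $\log G_{a'}\le_{\mathrm{disp}}\log G_{a'}+\log\beta\stackrel{\mathcal D}{=}\log G_a$ for all $a<a'$. This is the standard fact that $X\le_{\mathrm{disp}}X+Y$ for independent $Y$ whenever $X$ has a log-concave density.

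With this observation you do not even need Steps 1--2. If $\nu<\nu'$ and $\beta\sim\mathrm{Beta}(\nu/2,(\nu'-\nu)/2)$ is independent of $X_{\nu'}$, then $X_\nu\stackrel{\mathcal D}{=}(\nu/\nu')X_{\nu'}/\beta$. Moreover, $\log X_{\nu'}$ (Fisher's $z$) has density proportional to $e^{Ky/2}(1+Ke^y/\nu')^{-(K+\nu')/2}$, which is strictly log-concave. Strictness then follows because the kernel $f(x-t)$ is strictly $\mathrm{TP}_2$. Hence $F_{\log X_{\nu'}+c}-F_{\log X_\nu}$ has at most one zero for every $c$, whereas equal quantile spreads at some $p_1<p_2$ would give two zeros for a suitable $c$.

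The paper proves the same single-crossing property by hand. It shows that $x\mapsto F_{\nu_2}(cx)-F_{\nu_1}(x)$, with $c$ chosen so it vanishes at $Q_{\nu_1}(p_1)$, has no other zero in $(0,\infty)$. It does this by analysing the shape of $r(x)=cf_{\nu_2}(cx)/f_{\nu_1}(x)$, following Saunders. That argument is elementary and self-contained. The repaired version of your route is shorter and more conceptual, but it imports the Lewis--Thompson theorem. That theorem is itself the variation-diminishing property of $\mathrm{PF}_2$ kernels, i.e., the same kind of sign-change argument the paper carries out directly.
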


\begin{proof}
The argument follows~\citet{saunders1978quantiles}.
Let $F_\nu$ and $f_\nu$ denote the CDF and density of the
$F_{K,\nu}$ distribution. Fix $\nu_2>\nu_1>0$ and define
$$
H(x):=F_{\nu_2}(cx)-F_{\nu_1}(x),\;\;\text{ where }\; c:=Q_{\nu_2}(p_1)\big / Q_{\nu_1}(p_1).
$$
By definition, $H(0)=0$, $H(Q_{\nu_1}(p_1))=0$, and $\lim_{x\to\infty}H(x)=0$.
We will show that $Q_{\nu_1}(p_1)$ is the unique zero of $H$ in $(0,\infty)$ and that $H(x) >0$ for $x>Q_{\nu_1}(p_1)$. To this end, write
$$
H'(x) = c f_{\nu_2}(cx) - f_{\nu_1}(x)
=
f_{\nu_1}(x)\cb{r(x)-1},
\;\;\;
r(x):=\frac{cf_{\nu_2}(cx)}{f_{\nu_1}(x)}.
$$
Notice that
$$
2\log f_{\nu_1}(x) = (K-2)\log x - (K + \nu_1) \log(Kx + \nu_1) + \mathrm{const},
$$
where $\mathrm{const}$ does not depend on $x$ (but depends on $K,\nu_1$). Hence
$$
\frac{2}{K}(\log f_{\nu_1})'(x) = \frac{K-2}{Kx} - \frac{K+\nu_1}{Kx + \nu_1}.
$$
Differentiation of $\log r$ yields,
$$
\frac{2}{K}(\log r)'(x)
=
\frac{
\nu_2(K+\nu_1)-c\nu_1(K+\nu_2)
+cK(\nu_1-\nu_2)x
}{
(\nu_1+Kx)(\nu_2+cKx)
}.
$$
The denominator is strictly positive and strictly increasing, while 
the numerator is strictly decreasing in $x$. Hence $r$ is
either strictly decreasing, or first strictly increasing and then
strictly decreasing. Moreover,
$$
\begin{aligned}
2\log r(x)
&=(K+\nu_1)\log(Kx+\nu_1)-(K+\nu_2)\log(cKx+\nu_2)+\mathrm{const}\\
&=(\nu_1-\nu_2)\log x+(K+\nu_1)\log(K+\nu_1/x)-(K+\nu_2)\log(cK+\nu_2/x)+\mathrm{const},
\end{aligned}
$$
and since $\nu_2>\nu_1$, we get $\log r(x) \to -\infty$ as $x\to\infty$ and so $r(x)\to0$.
Thus the sign of
$H'$ can only have one of the forms
$$
-,\;\;\;\,\text{or}\;\;\;\, +,-, \;\;\;\,\text{or}\;\;\;\, -,+,-.
$$
The first possibility is incompatible with
the two roots of $H$ we exhibited above. The second is also impossible: in that case $H$
first increases from zero and then decreases to its limit zero, and
therefore $H(x)>0$ for every $x>0$. 

The only remaining
possibility is
$
-,+,-
$, and so $H$ first decreases from zero to a negative local minimum, then
increases, and finally decreases to zero. Since the final segment is
decreasing to zero, its local maximum must be positive. Hence $H$
crosses zero exactly once on its increasing segment. Since
$H(Q_{\nu_1}(p_1))=0$, this unique crossing occurs at $Q_{\nu_1}(p_1)$, and therefore
$$
H(x)>0\quad\text{for }x>Q_{\nu_1}(p_1).
$$
Now $Q_{\nu_1}(p_2)>Q_{\nu_1}(p_1)$, and hence
$
H\p{Q_{\nu_1}(p_2)}>0.
$
Equivalently,
$
F_{\nu_2}\p{cQ_{\nu_1}(p_2)}>p_2.
$
Since $F_{\nu_2}$ is strictly increasing,
$
Q_{\nu_2}(p_2)<cQ_{\nu_1}(p_2).
$
Substituting the definition of $c$ yields
$$
R(\nu_2)=\frac{Q_{\nu_2}(p_2)}{Q_{\nu_2}(p_1)}
<
\frac{Q_{\nu_1}(p_2)}{Q_{\nu_1}(p_1)} = R(\nu_1).
$$
Since $\nu_2>\nu_1$ were arbitrary, the result follows.
\end{proof}

\subsection{Tail behavior of the t-distribution}

We have the following result.
\begin{lemm}
\label{lemm:t_tails}
For each fixed $\nu>0$,
$$
\lim_{x\to\infty}x^\nu\bar F_{t,\nu}(x)
=L_\nu
:=\frac{\nu^{\nu/2-1}\Gamma((\nu+1)/2)}
{\sqrt{\pi}\,\Gamma(\nu/2)}.
$$
\end{lemm}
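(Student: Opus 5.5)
The plan is to work directly from the density of $t_\nu$ in~\eqref{eq:t_density} and apply L'H\^opital's rule to the ratio $\bar F_{t,\nu}(x)/x^{-\nu}$. Both numerator and denominator tend to zero as $x\to\infty$, so L'H\^opital applies. It gives
$$
\lim_{x\to\infty}\frac{\bar F_{t,\nu}(x)}{x^{-\nu}}=\lim_{x\to\infty}\frac{f_{t,\nu}(x)}{\nu x^{-\nu-1}}=\lim_{x\to\infty}\frac{x^{\nu+1}f_{t,\nu}(x)}{\nu},
$$
provided the last limit exists.

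The next step is to compute $x^{\nu+1}f_{t,\nu}(x)$ explicitly. Write $(1+x^2/\nu)^{-(\nu+1)/2}=x^{-(\nu+1)}\,\nu^{(\nu+1)/2}\,(1+\nu/x^2)^{-(\nu+1)/2}$. Then
$$
x^{\nu+1}f_{t,\nu}(x)=\frac{\Gamma((\nu+1)/2)}{\sqrt{\nu\pi}\,\Gamma(\nu/2)}\,\nu^{(\nu+1)/2}\,(1+\nu/x^2)^{-(\nu+1)/2}.
$$
As $x\to\infty$ the last factor tends to $1$, so this expression tends to $\nu^{\nu/2}\Gamma((\nu+1)/2)/\{\sqrt{\pi}\,\Gamma(\nu/2)\}$.

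Dividing by $\nu$ then gives exactly $L_\nu=\nu^{\nu/2-1}\Gamma((\nu+1)/2)/\{\sqrt\pi\,\Gamma(\nu/2)\}$, which is the claimed constant.

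There is no real obstacle. The only point needing care is checking the hypotheses of L'H\^opital in the $0/0$ form at infinity: both functions are differentiable, the derivative of $x^{-\nu}$ is nonzero, and the limit of the ratio of derivatives exists, as computed above. Alternatively, one can avoid L'H\^opital and integrate the asymptotic density directly. Since $f_{t,\nu}(u)=\nu L_\nu u^{-\nu-1}(1+o(1))$ as $u\to\infty$, integrating from $x$ to $\infty$ gives $\bar F_{t,\nu}(x)=L_\nu x^{-\nu}(1+o(1))$.
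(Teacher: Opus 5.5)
Your proposal is correct and follows the paper's own proof: L'H\^opital on $\bar F_{t,\nu}(x)/x^{-\nu}$, together with the limit $x^{\nu+1}f_{t,\nu}(x)\to\nu L_\nu$ read off from the density. The only difference is that you carry out the algebra for this limit explicitly, and your computation checks out.
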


\begin{proof}
By~\eqref{eq:t_density}, $x^{\nu+1}f_{t,\nu}(x)\to\nu L_\nu$. L'H\^opital's rule gives
$$
\lim_{x\to\infty}\frac{\bar F_{t,\nu}(x)}{x^{-\nu}}
=\lim_{x\to\infty}\frac{f_{t,\nu}(x)}{\nu x^{-\nu-1}}=L_\nu.
$$
\end{proof}
\begin{lemm}
\label{lemm:t_tail_rescaling}
It holds that:
$$ \lim_{a \to 1}\sup_{x\geq0}\abs{\frac{\bar F_{t,\nu}(ax)}{\bar F_{t,\nu}(x)}-1}=0. $$
\end{lemm}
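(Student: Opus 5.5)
We need uniform convergence: $\sup_{x\ge0}|\bar F_{t,\nu}(ax)/\bar F_{t,\nu}(x)-1|\to0$ as $a\to1$. The plan is to split the range of $x$ into a compact piece $[0,M]$ and a tail piece $[M,\infty)$, with $M$ chosen once $\varepsilon>0$ is fixed. On the compact piece we use continuity. On the tail piece we use the regular variation from Lemma~\ref{lemm:t_tails}.

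\emph{Tail piece.} By Lemma~\ref{lemm:t_tails}, $x^\nu\bar F_{t,\nu}(x)\to L_\nu>0$. So for any $\eta>0$ there is $M$ with
\[
(1-\eta)L_\nu\le x^\nu\bar F_{t,\nu}(x)\le(1+\eta)L_\nu \quad\text{for all } x\ge M/2 .
\]
Restrict to $a\in[1/2,2]$. Then for $x\ge M$ we have $ax\ge M/2$, and therefore
\[
\frac{\bar F_{t,\nu}(ax)}{\bar F_{t,\nu}(x)}=a^{-\nu}\,\frac{(ax)^\nu\bar F_{t,\nu}(ax)}{x^\nu\bar F_{t,\nu}(x)}\in\Big[a^{-\nu}\tfrac{1-\eta}{1+\eta},\;a^{-\nu}\tfrac{1+\eta}{1-\eta}\Big].
\]
Choose $\eta$ small so that the factor $(1\pm\eta)/(1\mp\eta)$ is within $\varepsilon/2$ of $1$. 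Then take $a$ close enough to $1$ that $a^{-\nu}$ is also close to $1$. This makes the supremum over $x\ge M$ at most $\varepsilon$.

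\emph{Compact piece.} The map $(a,x)\mapsto\bar F_{t,\nu}(ax)/\bar F_{t,\nu}(x)$ is continuous on $[1/2,2]\times[0,M]$. Its denominator is bounded below by $\bar F_{t,\nu}(M)>0$, since the $t$ survival function is continuous and strictly positive. It is therefore uniformly continuous on this compact set, and it equals $1$ at $a=1$. Hence $\sup_{x\in[0,M]}|\cdot-1|\le\varepsilon$ for $|a-1|$ small enough.

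Combining the two pieces gives the claim. The only delicate point is the order of choices: $M$ must be fixed from $\eta$, with $a$ confined to $[1/2,2]$, before $a$ is sent to $1$. Doing it in that order makes the tail bound uniform in $a$.
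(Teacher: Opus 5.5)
Your proof is correct, but it takes a different route from the paper's.

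You split $[0,\infty)$ into two pieces:
\begin{itemize}
\item on $[0,M]$ you use uniform continuity of $(a,x)\mapsto \bar F_{t,\nu}(ax)/\bar F_{t,\nu}(x)$ on a compact set;
\item on $[M,\infty)$ you use the regular variation $x^{\nu}\bar F_{t,\nu}(x)\to L_\nu$ from Lemma~\ref{lemm:t_tails}.
\end{itemize}
Your order of choices ($\eta$, then $M$, with $a\in[1/2,2]$, then $a\to1$) is exactly what makes this work.

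The paper instead proves the explicit, nonasymptotic bound
$\sup_{x\geq0}\abs{\bar F_{t,\nu}(ax)/\bar F_{t,\nu}(x)-1}\leq\abs{a^{-\nu}-1}$
directly from the density. For $a\geq1$ one has $af_{t,\nu}(au)\geq a^{-\nu}f_{t,\nu}(u)$; on $u>0$ this is the same as saying $u^{\nu+1}f_{t,\nu}(u)$ is nondecreasing. Integrating over $[x,\infty)$ gives $a^{-\nu}\bar F_{t,\nu}(x)\leq\bar F_{t,\nu}(ax)\leq\bar F_{t,\nu}(x)$. The case $0<a<1$ follows by applying this with $1/a$ at the point $ax$.

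The paper's route buys several things:
\begin{itemize}
\item it is shorter and self-contained, needing neither Lemma~\ref{lemm:t_tails} nor a compactness argument;
\item it gives an explicit rate in $|a-1|$;
\item it gives one-sided information, namely that the ratio is at most $1$ for $a\geq1$.
\end{itemize}
Your argument is softer but more general: it works verbatim for any continuous, strictly positive survival function with a regularly varying tail. It gives no quantitative control, however.

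For the downstream use in Theorem~\ref{theo:sparse_regime} either version suffices. There $\bar F_1$ is a rescaling of $\bar F_0$, and the supremum over $x\geq0$ is invariant under rescaling, so uniformity in $\lambda$ comes for free in both.
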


\begin{proof}
We start by showing an inequality which implies the desired conclusion.
For each fixed $\nu>0$ and any $a>0$,
$$
\sup_{x\geq0}\abs{\frac{\bar F_{t,\nu}(ax)}{\bar F_{t,\nu}(x)}-1}
\leq \abs{a^{-\nu}-1}.
$$
For $a\geq1$, the density in~\eqref{eq:t_density} satisfies $af_{t,\nu}(au)\geq a^{-\nu}f_{t,\nu}(u)$. Integrating gives
$$
a^{-\nu}\bar F_{t,\nu}(x)
\leq\int_x^\infty af_{t,\nu}(au)\dd u
=\bar F_{t,\nu}(ax)
\leq\bar F_{t,\nu}(x).
$$
For $0<a<1$, apply this bound with $1/a$ and $ax$ to obtain
$1\leq\bar F_{t,\nu}(ax)/\bar F_{t,\nu}(x)\leq a^{-\nu}$.

\end{proof}

\subsection{Some limits and results in the sparse regime}

We work in the sparse regime of Section~\ref{sec:power_analysis}: as $n\to\infty$,
$$
\pi_1\to0,\;\; n\pi_1\to\infty,\;\; \lambda\to\infty,\;\;
\pi_1\lambda^{(\nu+\nu_0)/2}\to c_\star\in(0,\infty),
$$
with $K$, $\nu_0$, and $s_0^2$ fixed, and $\nu=K-1$. Recall that $\pi_0=1-\pi_1$, $\bar F_0(s)=\bar F_{t,\nu+\nu_0}(s)$, and $\bar F_1(s)=\bar F_0(s/\sqrt{1+\lambda})$. For $\tau>0$, let
$$
\zeta_n(\tau):=\p{\frac{L_{\nu+\nu_0}}{\tau\pi_1}}^{1/(\nu+\nu_0)},\;\;
\Psi_1(\tau):=\bar F_0\p{\cb{\frac{L_{\nu+\nu_0}}{\tau c_\star}}^{1/(\nu+\nu_0)}},\;\;
\mathrm{Fdr}_\infty(\tau):=\frac{\tau}{\tau+\Psi_1(\tau)},
$$
where $L_{\nu+\nu_0}$ is given in Lemma~\ref{lemm:t_tails} (by taking $\nu+\nu_0$ in the statement therein).

\begin{lemm}
\label{lemm:sparse_limits}
For every fixed $\tau>0$, as $n\to\infty$,
$$
\frac{\bar F_0(\zeta_n(\tau))}{\pi_1}\to\tau,\;\;
\bar F_1(\zeta_n(\tau))\to\Psi_1(\tau),\;\;
\mathrm{Fdr}_n(\zeta_n(\tau))\to\mathrm{Fdr}_\infty(\tau).
$$
\end{lemm}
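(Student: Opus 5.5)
The plan is to prove the three limits in the order stated, using Lemma~\ref{lemm:t_tails} for the first two and continuity for the third. Throughout write $d:=\nu+\nu_0$ and $L:=L_{d}$.

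\emph{First limit.} Since $\pi_1\to0$, we have $\zeta_n(\tau)=(L/(\tau\pi_1))^{1/d}\to\infty$. Lemma~\ref{lemm:t_tails} therefore gives $\zeta_n(\tau)^{d}\,\bar F_0(\zeta_n(\tau))\to L$. Substituting $\zeta_n(\tau)^{d}=L/(\tau\pi_1)$ yields
\[
\frac{\bar F_0(\zeta_n(\tau))}{\pi_1}=\frac{\tau\,\zeta_n(\tau)^{d}\,\bar F_0(\zeta_n(\tau))}{L}\to\tau .
\]

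\emph{Second limit.} Write
\[
\bar F_1(\zeta_n(\tau))=\bar F_0\bigl(\zeta_n(\tau)/\sqrt{1+\lambda}\bigr).
\]
The argument satisfies
\[
\frac{\zeta_n(\tau)^{d}}{(1+\lambda)^{d/2}}=\frac{L}{\tau\,\pi_1(1+\lambda)^{d/2}}.
\]
Because $\lambda\to\infty$, we have $(1+\lambda)^{d/2}/\lambda^{d/2}\to1$, so $\pi_1(1+\lambda)^{d/2}\to c_\star$ by the regime assumption. Hence
\[
\zeta_n(\tau)/\sqrt{1+\lambda}\to x_\tau:=\bigl(L/(\tau c_\star)\bigr)^{1/d}\in(0,\infty).
\]
Since $\bar F_0=\bar F_{t,d}$ is continuous and does not depend on $n$, it follows that $\bar F_1(\zeta_n(\tau))\to\bar F_0(x_\tau)=\Psi_1(\tau)$.

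\emph{Third limit.} Divide the numerator and denominator of $\mathrm{Fdr}_n$ by $\pi_1$:
\[
\mathrm{Fdr}_n(\zeta_n(\tau))=\frac{\pi_0\,\bar F_0(\zeta_n)/\pi_1}{\pi_0\,\bar F_0(\zeta_n)/\pi_1+\bar F_1(\zeta_n)}.
\]
Now $\pi_0\to1$, and by the first two limits this converges to $\tau/(\tau+\Psi_1(\tau))=\mathrm{Fdr}_\infty(\tau)$. The limit is well defined because $\tau>0$ and $\Psi_1(\tau)>0$, so the limiting denominator is positive.

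There is no real obstacle here. The only points needing care are the replacement of $\lambda$ by $1+\lambda$ in the regime condition, and noting that the convergence of $\bar F_1$ uses continuity of a fixed function at a point where the arguments converge. The uniform rescaling bound of Lemma~\ref{lemm:t_tail_rescaling} could be used as an alternative route but is not needed.
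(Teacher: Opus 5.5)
Your proposal is correct and follows the paper's proof essentially step for step: the first limit from Lemma~\ref{lemm:t_tails} via $\pi_1\zeta_n(\tau)^{\nu+\nu_0}=L_{\nu+\nu_0}/\tau$, and the second from $\zeta_n(\tau)/\sqrt{1+\lambda}\to\{L_{\nu+\nu_0}/(\tau c_\star)\}^{1/(\nu+\nu_0)}$ together with continuity of $\bar F_0$. The third then follows from the first two and $\pi_0\to1$. Your explicit remark that $(1+\lambda)^{(\nu+\nu_0)/2}/\lambda^{(\nu+\nu_0)/2}\to1$ fills in a small step the paper leaves implicit.
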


\begin{proof}
Since $\zeta_n(\tau)\to\infty$ and $\pi_1\zeta_n(\tau)^{\nu+\nu_0}=L_{\nu+\nu_0}/\tau$, Lemma~\ref{lemm:t_tails} gives the first limit. Moreover,
$$
\frac{\zeta_n(\tau)}{\sqrt{1+\lambda}}
=\p{\frac{L_{\nu+\nu_0}}{\tau\pi_1(1+\lambda)^{(\nu+\nu_0)/2}}}^{1/(\nu+\nu_0)}
\to\p{\frac{L_{\nu+\nu_0}}{\tau c_\star}}^{1/(\nu+\nu_0)},
$$
so continuity of $\bar F_0$ gives the second limit. The third follows from the first two and $\pi_0\to1$.
\end{proof}

\begin{lemm}
The function $\tau\mapsto\mathrm{Fdr}_\infty(\tau)$ is continuous and strictly increasing on $(0,\infty)$, with
$$
\lim_{\tau\downarrow0}\mathrm{Fdr}_\infty(\tau)=\frac{1}{1+c_\star},\;\;
\lim_{\tau\to\infty}\mathrm{Fdr}_\infty(\tau)=1.
$$
Consequently, $\mathrm{Fdr}_\infty(\tau_\alpha)=\alpha$ has a unique solution $\tau_\alpha>0$ if and only if $1/(1+c_\star)<\alpha<1$.
\label{lemm:monotonicity_infty}
\end{lemm}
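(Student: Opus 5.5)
We plan to rewrite $\mathrm{Fdr}_\infty(\tau)=\{1+\Psi_1(\tau)/\tau\}^{-1}$ and study the ratio $g(\tau):=\Psi_1(\tau)/\tau$. This reduces everything to showing that $g$ is continuous and strictly decreasing on $(0,\infty)$, and to computing its limits at $0$ and $\infty$. Continuity is immediate, since $\tau\mapsto\{L_{\nu+\nu_0}/(\tau c_\star)\}^{1/(\nu+\nu_0)}$ is continuous and $\bar F_0=\bar F_{t,\nu+\nu_0}$ is continuous. It is convenient to substitute $x:=\{L_{\nu+\nu_0}/(\tau c_\star)\}^{1/(\nu+\nu_0)}$. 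This is a strictly decreasing bijection of $(0,\infty)$ onto itself, and it gives $\tau=L_{\nu+\nu_0}/(c_\star x^{\nu+\nu_0})$, so that
$$
g(\tau)=\frac{c_\star}{L_{\nu+\nu_0}}\,x^{\nu+\nu_0}\bar F_0(x)=:\frac{c_\star}{L_{\nu+\nu_0}}\,h(x).
$$
Hence $g$ is strictly decreasing in $\tau$ exactly when $h(x)=x^{m}\bar F_{t,m}(x)$, with $m:=\nu+\nu_0$, is strictly increasing in $x$.

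For the monotonicity of $h$, we will differentiate: $h'(x)=x^{m-1}\{m\bar F_{t,m}(x)-xf_{t,m}(x)\}$. So it suffices to show $m\bar F_{t,m}(x)>xf_{t,m}(x)$ for $x>0$. The plan is to write $\bar F_{t,m}(x)=\int_x^\infty f_{t,m}(u)\,\dd u$ and compare it with $\int_x^\infty \frac{\dd}{\dd u}\{-u f_{t,m}(u)/m\}\,\dd u=xf_{t,m}(x)/m$, using $uf_{t,m}(u)\to0$. This requires the pointwise inequality $-(uf_{t,m}(u))'<m f_{t,m}(u)$, that is, $-f-uf'<mf$. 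From~\eqref{eq:t_density} we have $uf'(u)/f(u)=-(m+1)u^2/(m+u^2)$, so the needed inequality becomes $(m+1)u^2/(m+u^2)<m+1$, which is true. This step is the only genuine obstacle, and it is elementary once this comparison is set up. Alternatively, one could note that $x^m\bar F(x)=\int_x^\infty (x/u)^m\cdot u^m f(u)\,\dd u$ and argue directly, but the derivative route is cleaner.

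For the limits: as $\tau\to\infty$ we have $x\to0$, so $h(x)\to0$, hence $g\to0$ and $\mathrm{Fdr}_\infty\to1$. As $\tau\downarrow0$ we have $x\to\infty$, and Lemma~\ref{lemm:t_tails} gives $h(x)\to L_{m}$. Then $g\to c_\star$ and $\mathrm{Fdr}_\infty\to1/(1+c_\star)$. Strict monotonicity of $g$ makes $\mathrm{Fdr}_\infty$ strictly increasing. Consequently its range is exactly the open interval $(1/(1+c_\star),1)$: both limits are approached but not attained, because of strict monotonicity. Continuity together with strict monotonicity then gives a unique solution $\tau_\alpha$ of $\mathrm{Fdr}_\infty(\tau_\alpha)=\alpha$ if and only if $\alpha$ lies in this range.
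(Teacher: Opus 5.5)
Your proof is correct and takes essentially the same route as the paper. The paper also reduces monotonicity to the inequality $(\nu+\nu_0)\bar F_0(x)>xf_0(x)$: it obtains this from the fact that $u^{\nu+\nu_0+1}f_0(u)$ is strictly increasing, which is exactly your pointwise bound $-uf'<(m+1)f$. It then shows $\Psi_1(\tau)-\tau\Psi_1'(\tau)>0$, which is equivalent to your statement that $x^m\bar F_{t,m}(x)$ is increasing. The endpoint limits are handled the same way, via Lemma~\ref{lemm:t_tails} as $\tau\downarrow0$ and via $\Psi_1(\tau)\to1/2$ as $\tau\to\infty$.
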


\begin{proof}
Write $f_0=f_{t,\nu+\nu_0}$. Notice that, for $u>0$,
$$
u^{\nu+\nu_0+1}f_0(u)
=C\p{\frac{u^2}{1+u^2/(\nu+\nu_0)}}^{(\nu+\nu_0+1)/2},
$$
where $C>0$ is the normalizing constant. The above expression is strictly increasing in $u$. Thus, for $x>0$,
$$
\bar{F}_0(x)=\int_x^\infty f_0(u)\dd u
>f_0(x)\int_x^\infty\p{\frac{x}{u}}^{\nu+\nu_0+1}\dd u
=\frac{xf_0(x)}{\nu+\nu_0}.
$$
Now write $x_\tau:=\{L_{\nu+\nu_0}/(\tau c_\star)\}^{1/(\nu+\nu_0)}$. Since $\Psi_1(\tau)=\bar{F}_0(x_\tau)$ and $\frac{\dd}{\dd\tau}x_\tau=-x_\tau/\{(\nu+\nu_0)\tau\}$, we have
$$
\tau\frac{\dd}{\dd\tau}\Psi_1(\tau)=\frac{x_\tau f_0(x_\tau)}{\nu+\nu_0}<\bar{F}_0(x_\tau)=\Psi_1(\tau).
$$
Consequently,
$$
\frac{\dd}{\dd\tau}\mathrm{Fdr}_\infty(\tau)
=\frac{\Psi_1(\tau)-\tau\frac{\dd}{\dd\tau}\Psi_1(\tau)}{\{\tau+\Psi_1(\tau)\}^2}>0.
$$
Continuity is immediate. The endpoint limits follow since $\Psi_1(\tau)/\tau\to c_\star$ as $\tau\downarrow0$ by Lemma~\ref{lemm:t_tails}, and $\Psi_1(\tau)\to1/2$ as $\tau\to\infty$.
\end{proof}

\section{Proofs}
\label{sec:proofs}

\subsection{Proof of Theorem~\ref{theo:compound-p}}

\begin{proof}
Since $\widehat{S}(\cdot)$ is learned only from $\mathcal{O}_{1:n}$, we can treat it as fixed when conditioning on $\mathcal{O}_{1:n}$.
We first record the conditional distribution of $\boldZ_i$ under the null. For any bounded measurable function $f:\mathcal{Z}\to\RR$, define
$$
Qf(\boldz):=\int_{\mathcal{H}} f(H\boldz)\dd\mu_{\mathcal{H}}(H).
$$
By right-invariance, for every $G\in\mathcal{H}$,
$$
Qf(G\boldz)
=
\int_{\mathcal{H}} f(HG\boldz)\dd\mu_{\mathcal{H}}(H)
=
Qf(\boldz),
$$
so $Qf(\boldZ_i)$ depends on $\boldZ_i$ only through its orbit $\mathcal{O}_i$. If $H_{0,i}$ holds, then for every bounded measurable $g$,
\begin{align*}
\EE{g(\mathcal{O}_i)Qf(\boldZ_i)}
=
\int_{\mathcal{H}}
\EE{g(\mathcal{O}_i)f(H\boldZ_i)}
\dd\mu_{\mathcal{H}}(H)=
\EE{g(\mathcal{O}_i)f(\boldZ_i)},
\end{align*}
where the last equality follows from $\boldZ_i\stackrel{\mathcal D}=H\boldZ_i$ and
$\mathcal{H}\cdot(H\boldZ_i)=\mathcal{H}\cdot\boldZ_i$. Thus
\begin{equation}
\EE{f(\boldZ_i)\mid\mathcal{O}_i}
= Qf(\boldZ_i) = 
\int_{\mathcal{H}} f(H\boldZ_i)\dd\mu_{\mathcal{H}}(H)
\;\text{ almost surely.}
\label{eq:haar_conditional_orbit}
\end{equation}
For $k\in \{1,\ldots,n\}$, let
$$
\bar F_k(s)
:=
\int_{\mathcal{H}}
\ind\cb{\widehat{S}(H\boldZ_k)\geq s}
\dd\mu_{\mathcal{H}}(H),
\;\;
\bar F(s):=\frac{1}{n}\sum_{k=1}^n \bar F_k(s).
$$
The function $\bar F_k$ depends on $\boldZ_k$ only through its orbit. Indeed, for any $G\in\mathcal{H}$ and using right-invariance again,
$$
\int_{\mathcal{H}}
\ind\cb{\widehat{S}(HG\boldZ_k)\geq s}
\dd\mu_{\mathcal{H}}(H)
=
\bar F_k(s).
$$
Consequently, $\bar F_1,\ldots,\bar F_n$ and $\bar F$ are fixed conditionally on $\mathcal{O}_{1:n}$.
For every null $i$, we get
\begin{equation}
\PP{\widehat{S}(\boldZ_i)\geq s\mid\mathcal{O}_{1:n}}
=
\bar F_i(s),
\;\;
\PP{\widehat{S}(\boldZ_i)>s\mid\mathcal{O}_{1:n}}
=
\lim_{u\downarrow s}\bar F_i(u).
\label{eq:conditional_score_survival}
\end{equation}
Fix $t\in(0,1)$ and define
$$
s^*(t):=\inf\cb{s\in\RR:\bar F(s)\leq t}.
$$
By~\eqref{eq:compound_pvalue_fun},
$
\pcomp_i=\bar F(\widehat{S}(\boldZ_i)).
$
Notice that 
$\pcomp_i \leq t$ if and only if $\widehat{S}(\boldZ_i) \geq s^*(t)$ (when $\bar{F}(s^*(t))\leq t$) or if and only if $\widehat{S}(\boldZ_i)> s^*(t)$ (when $\bar{F}(s^*(t))>t$). Thus let us take two cases.
If $\bar F(s^*(t))\leq t$, then
\begin{align*}
\sum_{i:H_i=0}\PP{\pcomp_i\leq t\mid\mathcal{O}_{1:n}}
=
\sum_{i:H_i=0}\bar F_i(s^*(t))
\leq
\sum_{i=1}^n\bar F_i(s^*(t))
=
n\bar F(s^*(t))
\leq nt.
\end{align*}
If instead $\bar F(s^*(t))>t$, then
\begin{align*}
\sum_{i:H_i=0}\PP{\pcomp_i\leq t\mid\mathcal{O}_{1:n}}
=
\sum_{i:H_i=0}\lim_{u\downarrow s^*(t)}\bar F_i(u)\leq
n\lim_{u\downarrow s^*(t)}\bar F(u)
\leq nt,
\end{align*}
where the last inequality follows because $\bar F(u)\leq t$ for every $u>s^*(t)$ by the definition of $s^*(t)$. Thus, conditionally on $\mathcal{O}_{1:n}$,
$$
\sum_{i:H_i=0}\PP{\pcomp_i\leq t\mid\mathcal{O}_{1:n}}
\leq nt
$$
for every $t\in(0,1)$. We have shown that the $\pcomp_i$ are compound p-values conditionally on $\mathcal{O}_{1:n}$. 

Finally, to prove that the $\pcomp_i$, $i\in [n]$ are independent conditionally on $\mathcal{O}_{1:n}$, we use that for any independent variables $(T_i)_{i\in [n]}$, the variables 
$(T_i)_{i\in [n]}$ are also independent conditionally on $(\Psi_i(T_i))_{i\in [n]}$ for any measurable function $\Psi_i(\cdot)$.

The proof of FDR control follows by applying Theorem~\ref{theo:bh} conditionally on $\mathcal{O}_{1:n}$, and then using iterated expectation.

\end{proof}

\subsection{Proof of Theorem~\ref{theo:seqstep_multi}}

\begin{proof}
Condition on $\mathcal{O}_{1:n}$, so that $\widehat{S}(\cdot)$ is fixed. The $\tilde S_i$ are independent by the same argument as in the proof of Theorem~\ref{theo:compound-p}. The separate p-values in~\eqref{eq:sep_pvalue_fun} are valid conditionally on $\mathcal{O}_{1:n}$. For every null $i$, we therefore have
$$
\PP{\tilde S_i>0\mid\mathcal{O}_{1:n}}
=\PP{\psep_i\leq1/(\kappa+1)\mid\mathcal{O}_{1:n}}
\leq\frac{1}{\kappa+1}.
$$
The proof of FDR control now is an application of~\citet[Theorem 3]{barber2015controlling} arguing as in the proof of Proposition~3.3 of~\citet{gimenez2019improving}
\end{proof}

\subsection{Proof of Proposition~\ref{prop:sep_pvalue_limma}}

\begin{proof}
Condition on $\mathcal{O}_{1:n}$, so that $\hat{\nu}_0$ and $\hat{s}_0^2$ are fixed. For both choices of $\mathcal{H}$, notice that $\Norm{\boldz}_2^2 = \Norm{\boldZ_i}_2^2$ for all $\boldz \in \mathcal{O}_i$. Thus,
$$
T^2(\boldz; \hat{\nu}_0, \hat{s}_0) = \frac{(\nu+\hat{\nu}_0)K\hat{\mu}^2(\boldz)}{\hat{\nu}_0\hat{s}_0^2 + \Norm{\boldZ_i}_2^2 - K\hat{\mu}^2(\boldz)},\;\;\;\;
T^2(\boldz; 0, 1) = \frac{\nu K\hat{\mu}^2(\boldz)}{\Norm{\boldZ_i}_2^2 - K\hat{\mu}^2(\boldz)}.
$$
Both are strictly increasing functions of $K\hat{\mu}^2(\boldz)$ on the orbit. Hence they induce the same comparisons in~\eqref{eq:sep_pvalue_fun}, proving the first claim.

For the second claim, we continue to condition on $\mathcal{O}_{1:n}$.
Let $H$ be Haar distributed on $O(K)$. Then  $H\boldZ_i$ is uniform on the sphere of radius $\Norm{\boldZ_i}_2$, and so has the same distribution as $\Norm{\boldZ_i}_2\boldG/\Norm{\boldG}_2$, where $\boldG \sim \mathrm{N}(0,I_K)$. Since $|T(\cdot;0,1)|$ is invariant to rescaling, we thus have, conditionally on $\boldZ_i$,
$$
T(H\boldZ_i;0,1) \mid \boldZ_i \;\stackrel{d}{=} T(\boldG;0,1)\sim t_\nu.
$$
Combining this with the first claim, we conclude.
\end{proof}

\subsection{Proof of Proposition~\ref{prop:orthogonal_rotation_cpvalue}}

\begin{proof}
For fixed $\boldz$ and $i$, let us define:
$$h_i(\boldz) := \int_{O(K)} \ind\cb{ \widehat{S}(H \boldZ_i) \geq \widehat{S}(\boldz)} \dd\mu_{\mathcal{H}}(H).$$
To compute the above, note that we need to integrate the indicator of the following event:
$$
\frac{ (\nu + \hat{\nu}_0)K \hat{\mu}^2(H\boldZ_i) }{ \nu \hat{\sigma}^2(H\boldZ_i) + \hat{\nu}_0 \hat{s}_0^2} \geq \widehat{S}^2(\boldz).
$$
Now note that
$$ \nu \hat{\sigma}^2(H\boldZ_i) = \Norm{H\boldZ_i}_2^2 - K\hat{\mu}^2(H\boldZ_i) = \Norm{\boldZ_i}_2^2  - K\hat{\mu}^2(H\boldZ_i).$$
Thus, by rearranging, we find that:
$$
K\hat{\mu}^2(H\boldZ_i) \geq \frac{\widehat{S}^2(\boldz)/\{\nu + \hat{\nu}_0\}}{ 1 +\widehat{S}^2(\boldz)/\{\nu + \hat{\nu}_0\}}\p{ \Norm{\boldZ_i}^2_2 + \hat{\nu}_0 \hat{s}_0^2}. 
$$
Now write $\hat{\mu}(H\boldZ_i)  = \mathbf{1}^\top H\boldZ_i/K$, then using again that $\Norm{H\boldZ_i}_2 = \Norm{\boldZ_i}_2$,
$$
\frac{ (\mathbf{1}^\top H\boldZ_i/\sqrt{K})^2}{ \Norm{H\boldZ_i}^2_2} \geq \frac{\widehat{S}^2(\boldz)/\{\nu + \hat{\nu}_0\}}{ 1 +\widehat{S}^2(\boldz)/\{\nu + \hat{\nu}_0\}}\p{ 1 + \frac{\hat{\nu}_0 \hat{s}_0^2}{\Norm{\boldZ_i}^2_2}}. 
$$
Thus
$$
h_i(z) = \bar{F}_{B, 1/2, \nu/2}\cb{ \frac{\widehat{S}^2(\boldz)/\{\nu + \hat{\nu}_0\}}{ 1 +\widehat{S}^2(\boldz)/\{\nu + \hat{\nu}_0\}}\p{ 1 + \frac{\hat{\nu}_0 \hat{s}_0^2}{\Norm{\boldZ_i}^2_2}}}.
$$
We conclude by noting that $\pcompfun(\boldz; \widehat{S}(\cdot), \mathcal{O}_{1:n}, \mathcal{H}_{\mathrm{orth}}) = n^{-1}\sum_{i=1}^n h_i(\boldz)$.
\end{proof}

\subsection{Preliminary lemmata}

\begin{lemm}
Let $\hat{F}_1,\ldots,\hat{F}_n:\RR\to[0,1]$ be independent random functions, and suppose that $F_i(t):=\EEInline{\hat{F}_i(t)}$ is continuous for each $i$. Assume either that all $\hat{F}_i(\cdot)$ are nondecreasing and tend to $0$ at $-\infty$ almost surely, or that all are nonincreasing and tend to $0$ at $+\infty$ almost surely. Define
$$
\hat{F} = \frac{1}{n}\sum_{i=1}^n \hat{F}_i,\;\;\;\; F= \frac{1}{n}\sum_{i=1}^n F_i.
$$
Then, for any $\varepsilon, \delta \in (0,1)$
$$
\PP{1-\varepsilon \leq \frac{\hat{F}(t)}{F(t)} \leq 1+\varepsilon  \, \text{ for all }\, t \, \text{ with } F(t) \geq \delta } \geq 
1-
\frac{
4\exp\!\p{-n\varepsilon^2\delta/24}
}{
1-\exp\!\p{-n\varepsilon^3\delta/72}
}.
$$
\label{lemm:gc_multiplicative_compound}
\end{lemm}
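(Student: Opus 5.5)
The plan is to prove a pointwise multiplicative Chernoff bound on a geometric grid of levels of $F$, and then use monotonicity to pass from the grid to all $t$. By symmetry (replace $t$ by $-t$) it suffices to treat the nonincreasing case. In that case $F$ is continuous, nonincreasing and tends to $0$ at $+\infty$. The set $\{t: F(t)\geq\delta\}$ is therefore an interval $(-\infty,t^\ast]$ with $F(t^\ast)=\delta$; if it is empty there is nothing to prove.

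\textbf{Step 1 (pointwise bound).} For fixed $t$, $n\hat F(t)=\sum_i\hat F_i(t)$ is a sum of independent $[0,1]$-valued variables with mean $nF(t)$. The standard multiplicative Chernoff bounds give, for $\eta\in(0,1)$,
\[
\PP{|\hat F(t)-F(t)|>\eta F(t)}\leq 2\exp\!\p{-n\eta^2F(t)/3}.
\]
These bounds hold for $[0,1]$-valued summands by convexity of the exponential. I take $\eta:=\varepsilon/\sqrt8$, so that $\eta^2/3=\varepsilon^2/24$.

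\textbf{Step 2 (grid).} Set $\rho:=1+\varepsilon/3$. By continuity and the intermediate value theorem, choose $t_0=t^\ast>t_1>t_2>\cdots$ with $F(t_j)=\delta\rho^j$, for $j=0,\ldots,J$, where $J$ is the last index with $\delta\rho^J\leq\sup F$. Take a union bound over the grid, using $\rho^j\geq1+j\varepsilon/3$:
\[
\sum_{j\ge0}2\exp\!\p{-\tfrac{n\varepsilon^2\delta}{24}\rho^j}\leq 2\exp\!\p{-\tfrac{n\varepsilon^2\delta}{24}}\sum_{j\geq0}\exp\!\p{-\tfrac{n\varepsilon^3\delta}{72}j},
\]
which is a geometric series and gives half of the stated bound. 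The remaining factor of $2$ in the constant $4$ is reserved for the top cell (Step 4).

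\textbf{Step 3 (interpolation).} On the good event, take $t$ in a cell $[t_{j+1},t_j]$. Monotonicity of $\hat F$ and of $F$ together with $F(t_j)\leq F(t)\leq F(t_{j+1})=\rho F(t_j)$ gives
\[
\hat F(t)\leq\hat F(t_{j+1})\leq(1+\eta)F(t_{j+1})\leq(1+\eta)\rho\,F(t),\qquad
\hat F(t)\geq\hat F(t_j)\geq(1-\eta)F(t_j)\geq\frac{1-\eta}{\rho}F(t).
\]
It then remains to check the elementary inequalities $(1+\varepsilon/\sqrt8)(1+\varepsilon/3)\leq1+\varepsilon$ and $(1-\varepsilon/\sqrt8)/(1+\varepsilon/3)\geq1-\varepsilon$ for $\varepsilon\in(0,1)$. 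The first holds because the linear terms give about $0.69\varepsilon$ and the cross term is at most $0.12\varepsilon$. The second follows by expanding.

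\textbf{Step 4 (top cell, the main obstacle).} The delicate part is the unbounded cell $t\in(-\infty,t_J]$, where $F(t)\in[\delta\rho^J,\sup F]$. Here $\sup F<\rho F(t_J)$, but $\sup F$ may not be attained. The lower bound there is immediate from $\hat F(t)\geq\hat F(t_J)$. For the upper bound I would add one extra control point: a bound on $\lim_{s\to-\infty}\hat F(s)=\frac1n\sum_i\lim_{s\to-\infty}\hat F_i(s)$. Its mean is $\sup F$ by monotone convergence, and it is again an average of independent $[0,1]$ variables, so Step 1 applies with mean at least $\delta$. Together with the second Chernoff tail, this extra control point accounts for the constant $4$. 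I expect the main care to go into making this limiting argument and the constants align exactly with the stated bound $4\exp(-n\varepsilon^2\delta/24)/\{1-\exp(-n\varepsilon^3\delta/72)\}$.
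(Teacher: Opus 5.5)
Your proposal is correct and is essentially the paper's argument. Both use a pointwise multiplicative deviation bound, a geometric grid of levels $\delta(1+\varepsilon/3)^j$ summed via $(1+\varepsilon/3)^j\geq 1+j\varepsilon/3$, an extra control point at the limit of $\hat F$ at the end of the domain (the paper's $t_J=\infty$ after it reduces to the nondecreasing case), and interpolation by monotonicity. The only difference is cosmetic: you use the multiplicative Chernoff bound with deviation $\varepsilon/\sqrt{8}$ where the paper uses Bernstein with deviation $\varepsilon/3$, and both give the exponent $n\varepsilon^2\delta/24$. Your edge case where $\delta\rho^J=\sup F$ is not attained, so $t_J$ does not exist, is handled by that same limit point.
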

\begin{proof}
It suffices to prove the nondecreasing case. Indeed, in the nonincreasing case, the functions $t\mapsto\hat{F}_i(-t)$ are nondecreasing and tend to $0$ as $t\to-\infty$, with expectations $F_i(-t)$. Applying the nondecreasing case to these functions gives the same event in the statement, since $-t$ ranges over all of $\RR$.

Write $\hat{F}_i(\infty):=\lim_{t\to\infty}\hat{F}_i(t)$ and $F(\infty):=\lim_{t\to\infty}F(t)$. Dominated convergence gives $\EEInline{\hat{F}(\infty)}=F(\infty)$. Fix a single $t\in\mathbb R\cup\{\infty\}$ such that $F(t)>0$. Notice that
$$
\Var{\hat{F}(t)} \leq \frac{1}{n^2}\sum_{i=1}^n \EE{\hat{F}_i(t)} \leq \frac{1}{n}F(t).
$$
By Bernstein's inequality we have that:
$$
\PP{\hat{F}(t) \geq (1+\varepsilon) F(t)} \leq \exp\!\p{  - \frac{n^2 \varepsilon^2 F^2(t)/2}{ nF(t) + n \varepsilon F(t)/3}} = \exp\!\p{ -\frac{ n \varepsilon^2 F(t)/2  }{ 1 + \varepsilon/3 }} \leq \exp\!\p{ -\frac{3}{8} n \varepsilon^2 F(t)}.
$$
Analogously we get:
$$
\PP{\hat{F}(t) \leq (1-\varepsilon) F(t)} \leq \exp\!\p{ -\frac{3}{8} n \varepsilon^2 F(t)}.
$$
We may assume that $F(t)\geq\delta$ for some finite $t$, since otherwise the conclusion is immediate. Next we do a peeling-type argument to get simultaneous control. Let us define for $\eta = \varepsilon/3>0$
$
\delta_j := (1+\eta)^j\delta. 
$
Let $J:=\min\{j\geq0:\delta_j>F(\infty)\}$, set $t_j:=\inf\{t:F(t)\geq\delta_j\}$ for $j<J$, and set $t_J:=\infty$, with $\inf\varnothing=\infty$. By continuity, $F(t_j)=\delta_j$ for $j<J$, and $F(t_J)=F(\infty)$. 
Define the event,
$$
A := \bigcap_{j=0}^{J} A_j,\;\text{ where }\; A_j:=\Big\{1-\eta \leq \frac{\hat{F}(t_j)}{F(t_j)} \leq 1+\eta\Big\}.
$$
We will first show that $A$ has high probability:
$$
\begin{aligned}
\PP{A^c}
&\leq 2\sum_{j=0}^{J-1}\exp\!\p{-\frac38n\eta^2\delta_j}
+2\exp\!\p{-\frac38n\eta^2F(\infty)}\\
&\leq4\sum_{j=0}^{\infty}\exp\!\p{-\frac38n\eta^2(1+\eta)^j\delta}
\leq\frac{4\exp\!\p{-\frac38n\eta^2\delta}}{1-\exp\!\p{-\frac38n\eta^3\delta}}.
\end{aligned}
$$
The last step uses the fact that $\sum_{j\geq 0} \exp\{-a (1+\eta)^j\}\leq \sum_{j\geq 0} \exp\{-a (1+j\eta)\}= \exp(-a)/\{1-\exp(-a\eta)\}$ for any $a>0$.

Now we interpolate on the event $A$. Every finite $t$ with $F(t)\geq\delta$ lies in $[t_j,t_{j+1})$ for some $j<J$. Since $F(t_{j+1})\leq(1+\eta)F(t_j)$, including at the endpoint $t_J=\infty$, we have:
$$
\frac{\hat{F}(t)}{F(t)} \leq \frac{\hat{F}(t_{j+1})}{F(t_j)} \leq \frac{(1+\eta)F(t_{j+1})}{F(t_j)} \leq (1+\eta)^2 \leq 1+\varepsilon. 
$$
Analogously, $\hat{F}(t)/F(t) \geq 1-\varepsilon$.
\end{proof}

\begin{lemm}[Multiplicative control of moderated t-statistics]
\label{lemm:moderated_multiplicative}

Fix $\varepsilon \in (0,1)$. Suppose that
$$
 1-\varepsilon \leq \frac{\nu_0 + \nu}{\hat{\nu}_0 +\nu} \leq 1+\varepsilon,\;\;\;\,  1-\varepsilon \leq \frac{\nu_0 + \nu}{\hat{\nu}_0 +\nu} \cdot \frac{\hat{\nu}_0\hat{s}_0^2}{\nu_0 s_0^2} \leq 1+\varepsilon.
$$
Then,
$$
\sup_{z} \abs{\frac{T^2(z; \nu_0, s_0)}{T^2(z; \hat{\nu}_0, \hat{s}_0)} - 1} \leq \varepsilon.
$$
\end{lemm}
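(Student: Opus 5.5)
Since $T(\boldz;\nu_0,s_0)$ and $T(\boldz;\hat\nu_0,\hat s_0)$ have the same numerator $\sqrt{K}\hat\mu(\boldz)$, the plan is to write the ratio of squares as a ratio of moderated variances and bound it directly. For any $\boldz$ with $\hat\mu(\boldz)\neq 0$ (when $\hat\mu(\boldz)=0$ both statistics vanish and we interpret the ratio as $1$ by convention),
\begin{equation*}
\frac{T^2(\boldz;\nu_0,s_0)}{T^2(\boldz;\hat\nu_0,\hat s_0)}
=\frac{\hat\sigma^2(\boldz;\hat\nu_0,\hat s_0)}{\hat\sigma^2(\boldz;\nu_0,s_0)}
=\frac{\nu_0+\nu}{\hat\nu_0+\nu}\cdot\frac{\hat\nu_0\hat s_0^2+\nu\hat\sigma^2(\boldz)}{\nu_0 s_0^2+\nu\hat\sigma^2(\boldz)}.
\end{equation*}

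The key step is to write the right-hand side as a convex combination of the two quantities controlled in the hypotheses. Set $a:=\nu_0 s_0^2>0$ and $b:=\nu\hat\sigma^2(\boldz)\geq 0$, and let $w:=a/(a+b)\in(0,1]$. Then the display equals
\begin{equation*}
w\cdot\frac{\nu_0+\nu}{\hat\nu_0+\nu}\cdot\frac{\hat\nu_0\hat s_0^2}{\nu_0 s_0^2}
\;+\;(1-w)\cdot\frac{\nu_0+\nu}{\hat\nu_0+\nu}.
\end{equation*}
By the two assumptions, both of these terms lie in $[1-\varepsilon,1+\varepsilon]$, so their convex combination does as well. The weight $w$ depends on $\boldz$, but the bound does not, so taking the supremum over $\boldz$ gives the claim.

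No step here is a genuine obstacle. The only care needed is the degenerate cases: $\hat\mu(\boldz)=0$, where both squares are zero, and $\hat\sigma^2(\boldz)=0$, where $w=1$ and the convex-combination argument still applies. We also need to check that $\nu_0 s_0^2>0$, which holds by assumption, so that the denominator is positive.
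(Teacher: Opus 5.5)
Your proposal is correct and matches the paper's proof: both reduce the ratio of squared moderated t-statistics to a ratio of moderated variances, then write it as a convex combination, with weights $\nu_0 s_0^2/(\nu_0 s_0^2+\nu\hat\sigma^2(\boldz))$ and $\nu\hat\sigma^2(\boldz)/(\nu_0 s_0^2+\nu\hat\sigma^2(\boldz))$, of the two quantities bounded in the hypotheses. Your explicit treatment of the degenerate case $\hat\mu(\boldz)=0$, which the paper leaves implicit, is a small but welcome addition.
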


\begin{proof}
Write,
$$
\begin{aligned}
\frac{T^2(z; \nu_0, s_0)}{T^2(z; \hat{\nu}_0, \hat{s}_0)} &= \frac{ \hat{\sigma}^2(z; \hat{\nu}_0, \hat{s}_0)}{\hat{\sigma}^2(z; \nu_0, s_0)} \\ 
&= \frac{ (\nu_0 + \nu)(\hat{\nu}_0  \hat{s}_0^2 + \nu \hat{\sigma}^2(z))}{(\hat{\nu}_0+\nu)(\nu_0  s_0^2 + \nu \hat{\sigma}^2(z))} \\ 
& = \frac{\nu_0 + \nu}{\hat{\nu}_0 +\nu}\cdot \frac{\nu \hat{\sigma}^2(z)}{\nu_0s_0^2 + \nu \hat{\sigma}^2(z)} \,+\,  \frac{\nu_0 + \nu}{\hat{\nu}_0 +\nu}\cdot \frac{\hat{\nu}_0\hat{s}_0^2}{\nu_0 s_0^2} \cdot  \frac{\nu_0 s_0^2} {\nu_0s_0^2 + \nu \hat{\sigma}^2(z)}.
\end{aligned}
$$
The conclusion follows.
\end{proof}

\begin{lemm}[Consistency of $\hat{s}_0^2$ and $\hat{\nu}_0$ under sparsity]
Suppose that $n \to \infty$ and $\pi_1 \to 0$. Then, under~\eqref{eq:normal_samples} and~\eqref{eq:lonnsted_prior}, we have that $\hat{\nu}_0 \to \nu_0$ and $\hat{s}_0^2 \to s_0^2$ almost surely, where $\hat{\nu}_0$ and $\hat{s}_0^2$ are defined in~\eqref{eq:nu_0_quantile} and~\eqref{eq:s_0_median}.
\label{lemm:nu_0_s_0_consistency}
\end{lemm}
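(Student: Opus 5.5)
The plan is to show that the empirical quantiles $\widehat Q_n(p)$ of $\hat\tau_1^2,\ldots,\hat\tau_n^2$ converge almost surely, for $p\in\{1/4,1/2,3/4\}$, to $s_0^2 Q_{\nu_0}(p)$, the quantiles of $s_0^2F_{K,\nu_0}$. Consistency of $\hat\nu_0$ and $\hat s_0^2$ then follows from the continuous mapping theorem, using continuity and invertibility of $R$ and continuity of $\nu\mapsto Q_\nu(1/2)$.

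\textbf{Step 1: the $\hat\tau_i^2$ are i.i.d.\ from a mixture.} Under \eqref{eq:normal_samples}--\eqref{eq:lonnsted_prior}, the $\hat\tau_i^2$ are i.i.d.\ (for each $n$) with CDF
\[
G_n=\pi_0 G_0+\pi_1 G_{1,n},
\]
where $G_0$ is the CDF of $s_0^2F_{K,\nu_0}$ (Proposition~\ref{prop:second_moment_f}) and $G_{1,n}$ is the non-null CDF, which depends on $\lambda=\lambda_n$. Hence
\[
\sup_x|G_n(x)-G_0(x)|\le \pi_1\to0 .
\]

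\textbf{Step 2: uniform convergence of the empirical CDF.} Let $\widehat G_n$ be the empirical CDF. Since the distribution changes with $n$ (a triangular array), the classical Glivenko--Cantelli theorem does not apply directly. I would use the DKW inequality instead:
\[
\PP{\sup_x|\widehat G_n-G_n|>\epsilon}\le 2e^{-2n\epsilon^2}.
\]
This bound is summable in $n$, so Borel--Cantelli gives $\sup_x|\widehat G_n-G_n|\to0$ almost surely along the sequence, whatever coupling across $n$ is used. Alternatively, Lemma~\ref{lemm:gc_multiplicative_compound} applied with indicator functions gives the same conclusion. Combined with Step 1, this yields $\sup_x|\widehat G_n-G_0|\to0$ almost surely.

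\textbf{Step 3: quantile convergence.} $G_0$ is continuous and strictly increasing on $(0,\infty)$, since the $F$ density is positive. Uniform convergence of $\widehat G_n$ to $G_0$ therefore implies $\widehat Q_n(p)\to G_0^{-1}(p)=s_0^2Q_{\nu_0}(p)$ for each fixed $p\in(0,1)$, by the standard sandwich argument: for any $\eta>0$ we have $G_0(G_0^{-1}(p)\pm\eta)\gtrless p$ with a strict gap, which $\widehat G_n$ eventually respects. In particular,
\[
\widehat Q_n(3/4)/\widehat Q_n(1/4)\to R(\nu_0).
\]

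\textbf{Step 4: inverting $R$.} This step is the main obstacle. Lemma~\ref{lemm:quantile_ratio_monotone} gives strict monotonicity of $R$, and $R$ is continuous in $\nu$ because $F_{K,\nu}$ depends continuously on $\nu$ and has positive density, so quantiles are continuous by the implicit function argument. A strictly monotone continuous function has a continuous inverse on its range, an interval. Since $R(\nu_0)$ is an interior point of that range, the ratio eventually lies in the range and $\hat\nu_0=R^{-1}(\text{ratio})\to\nu_0$.

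Some care is needed about the convention for $R^{-1}$ off the range: the limits $\nu\to0$ and $\nu\to\infty$ give endpoint values, with $R(\infty)$ equal to the chi-square quantile ratio. This matters only finitely often almost surely. Finally, $\hat s_0^2=\widehat Q_n(1/2)/Q_{\hat\nu_0}(1/2)\to s_0^2Q_{\nu_0}(1/2)/Q_{\nu_0}(1/2)=s_0^2$, by continuity of $\nu\mapsto Q_\nu(1/2)$ at $\nu_0$.
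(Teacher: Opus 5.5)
Your proposal is correct and follows the paper's proof essentially step by step. Both arguments show the null $s_0^2F_{K,\nu_0}$ CDF is within $O(\pi_1)$ of the mixture CDF, obtain uniform convergence of the empirical CDF of the $\hat\tau_i^2$, deduce convergence of the quartile ratio to $R(\nu_0)$, invert the continuous, strictly decreasing $R$, and finish with continuity of $\nu\mapsto Q_\nu(1/2)$. Your use of DKW plus Borel--Cantelli, and your handling of the range of $R^{-1}$, are somewhat more careful than the paper's direct appeal to Glivenko--Cantelli, which is apt since the sampling distribution changes with $n$.
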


\begin{proof}
Let us write $F_{K, \nu_0, s_0}(t)$ for the null CDF of $\hat{\tau}_i^2$. We also let $\widehat{F}(t)$ denote the empirical CDF of $\hat{\tau}_1^2,\ldots,\hat{\tau}_n^2$. Notice that for any $t>0$,
$$
\abs{\EEInline{\widehat{F}(t)} - F_{K,\nu_0, s_0^2}(t)} \leq 2 \pi_1.
$$
Combining the above with the Glivenko-Cantelli theorem we thus have that,
$$
\sup_{t > 0} \abs{\widehat{F}(t) - F_{K,\nu_0, s_0}(t)} \to 0 \text{ almost surely.}
$$
From the above and the fact that $F_{K, \nu_0, s_0}(\cdot)$ is continuous and strictly increasing, we conclude that $\widehat{Q}_n(p) \to \smash{Q_{K, \nu_0, s_0}(p)}$ almost surely for any $p \in (0,1)$, where $Q_{K, \nu_0, s_0}(p)$ is the $p$-quantile of $F_{K, \nu_0, s_0}(\cdot)$. In particular, we find that
$$
\frac{ \widehat{Q}_n(3/4)}{\widehat{Q}_n(1/4)} \to R(\nu_0),
$$
where $R(\cdot)$ is defined as in Lemma~\ref{lemm:quantile_ratio_monotone} for $p_1=1/4$ and $p_2=3/4$. Since $R(\cdot)$ is strictly decreasing and continuous, we conclude that $\hat{\nu}_0 \to \nu_0$ almost surely. By continuity of $Q_{K, \nu_0, s_0}(1/2)$ in $\nu_0$, we also conclude that $\hat{s}_0^2 \to s_0^2$ almost surely.
\end{proof}

\subsection{Proof of Proposition~\ref{prop:asymptotic_threshold}}

\begin{proof}
Existence and uniqueness of $\tau_\alpha$ follow from Lemma~\ref{lemm:monotonicity_infty}, since $c_\star>(1/\alpha)-1$.
Existence and uniqueness of $s_{\alpha,n}$ for all large enough $n$ follows from a similar argument, using the monotonicity of $\mathrm{Fdr}_n(\cdot)$, also via Lemma~\ref{lemm:monotonicity_finite_sample}.

Now let us turn to the statement about the limiting behaviour of $s_{\alpha,n}$.
Suppose first that:
$$
\liminf_{n \to \infty} s_{\alpha,n}/\zeta_n(\tau_\alpha) < 1.
$$
Then there exists $\varepsilon>0$ and a subsequence $n_k$ such that $s_{\alpha,n_k} < (1-\varepsilon)\zeta_{n_k}(\tau_{\alpha})$ for all $k$. Also notice that 
$(1-\varepsilon)\zeta_{n_k}(\tau_{\alpha}) = \zeta_{n_k}(\tau_{\alpha}')$ for $\tau_\alpha' :=\tau_\alpha/ (1-\varepsilon)^{\nu+\nu_0}$.
Again by Lemma~\ref{lemm:monotonicity_finite_sample}, we have that $\mathrm{Fdr}_n(s_{\alpha,n_k}) \geq \mathrm{Fdr}_n(\zeta_{n_k}(\tau_{\alpha}'))$ and
$$
 \lim_{k \to \infty }\mathrm{Fdr}_{n_k}(\zeta_{n_k}(\tau_{\alpha}')) = \mathrm{Fdr}_{\infty}(\tau_\alpha') > \alpha.
$$
This is a contradiction, since $\mathrm{Fdr}_n(s_{\alpha,n_k}) = \alpha$. We conclude that $\liminf_{n \to \infty} s_{\alpha,n}/\zeta_n(\tau_\alpha) \geq 1$.
We can argue analogously to show that $\limsup_{n \to \infty} s_{\alpha,n}/\zeta_n(\tau_\alpha) \leq 1$. Thus we conclude that $s_{\alpha,n}/\zeta_n(\tau_\alpha) \to 1$ as $n \to \infty$.

\end{proof}

\subsection{Proof of Theorem~\ref{theo:sparse_regime}}

We split the proof into several parts, corresponding to the different claims in the theorem.

\begin{proof}[Proof for oracle local false discovery rate procedure]
Throughout this section, we write
$$
S^{\star}(\boldz) := \abs{T(\boldz; \nu_0, s_0)}.
$$
where $T(\boldz; \nu_0, s_0)$ is the moderated $t$-statistic given by \eqref{eq:moderated_stats}.
As our first result we note the following:
\begin{equation}
\EE{\p{\frac{\sum_{i=1}^n \ind\{\mu_i \neq 0\}}{n\pi_1} - 1}^2}  \to 0\, \text{ as }\, n \to \infty.
\label{eq:pi1_convergence}
\end{equation}
To see this, note that
$$\EE{\p{\frac{\sum_{i=1}^n \ind\{\mu_i \neq 0\}}{n\pi_1} - 1}} =0,\;\;\, \Var{\frac{\sum_{i=1}^n \ind\{\mu_i \neq 0\}}{n\pi_1}} = \frac{1-\pi_1}{n\pi_1} \to 0 \text{ as } n \to \infty,
$$
since $n \pi_1 \to \infty$ by our asymptotic regime. Next, let 
$$
D_n :=\frac{1}{n}\sum_{i=1}^n
\ind\cb{S^{\star}(Z_i)\geq s_{\alpha,n}}.
$$
Notice that 
$$ 
\EE{D_n} = 2\pi_0 \bar F_0(s_{\alpha,n}) + 2\pi_1 \bar F_1(s_{\alpha,n}) = 2\pi_1\{\tau_{\alpha} + \Psi_1(\tau_{\alpha})\}\{1+o(1)\}\, \text{ as }\, n \to \infty.
$$
Applying Lemma~\ref{lemm:gc_multiplicative_compound} to $t\mapsto\ind\{S^*(\boldZ_i)\geq t\}$ at $t=s_{\alpha,n}$, with $\delta = c \pi_1$ for sufficiently small $c>0$ and $\varepsilon = (n \pi_1)^{-1/4}$ (so that $n \varepsilon^2\delta, n \varepsilon^3 \delta \to \infty$), we find that:
$$
\frac{D_n}{\EE{D_n}} \stackrel{\mathbb P}{\to} 1 \text{ as } n \to \infty,\; \text{ and so},\,\;\frac{D_n}{2\pi_1\{\tau_{\alpha} + \Psi_1(\tau_{\alpha})\}} \stackrel{\mathbb P}{\to} 1\, \text{ as }\, n \to \infty.
$$
We find that also as $n\to\infty$,
$$
\frac{1}{2\pi_1 \tau_{\alpha}  }\cdot \frac{1}{n}\sum_{i=1}^n \ind\cb{S^{\star}(Z_i)\geq s_{\alpha,n},\, \mu_i = 0}\stackrel{\mathbb P}{\to} 1 ,\;\; \frac{1}{2 \pi_1\Psi_1(\tau_{\alpha})}\cdot \frac{1}{n}\sum_{i=1}^n \ind\cb{S^{\star}(Z_i)\geq s_{\alpha,n},\, \mu_i \neq 0}\stackrel{\mathbb P}{\to} 1.
$$
From these two displays we find that the false discovery proportion satisfies:
$$
\mathrm{FDP}_n = \frac{\sum_{i=1}^n \ind\cb{S^{\star}(Z_i)\geq s_{\alpha,n},\, \mu_i = 0}}{1\lor (D_n \cdot n)} \stackrel{\mathbb P}{\to} \frac{\tau_{\alpha}}{\tau_{\alpha} + \Psi_1(\tau_{\alpha})} = \alpha\; \text{ as }\; n \to \infty.
$$
Hence by moving to subsequences and applying the dominated convergence theorem ($\mathrm{FDP}_n \in [0,1]$), we conclude that $\mathrm{FDR}_n \to \alpha$ as $n \to \infty$.

Analogously, for the power, we have that
$$
\frac{\sum_{i=1}^n \ind\cb{S^{\star}(Z_i)\geq s_{\alpha,n},\, \mu_i \neq 0}}{1 \lor \sum_{i=1}^n \ind\cb{\mu_i \neq 0}} \stackrel{\mathbb P}{\to} 2\Psi_1(\tau_{\alpha})\; \text{ as }\; n \to \infty.
$$
Using again the dominated convergence theorem, we find that $\mathrm{Pow}_n \to 2\Psi_1(\tau_{\alpha})$ as $n \to \infty$.
\end{proof}

\begin{proof}[Proof for BH with compound orthogonal rotation p-values]
We first show that $\zeta_n^{-1}(\hat{s}) \stackrel{\mathbb P}{\to} \tau_\alpha$, where $\hat{s}$ is the BH threshold on the learned score $\widehat{S}(\cdot)$. To this end, define
\begin{align}
&N_{0,n}(s) := \frac{1}{n}\sum_{i=1}^n \int_{O(K)} \!\ind\!\cb{\widehat{S}(H\boldZ_i)\geq s,\,\mu_i=0}\dd\mu_{\mathcal{H}}(H),\, &&D_{0,n}(s) := \frac{1}{n}\sum_{i=1}^n \ind\!\cb{\widehat{S}(\boldZ_i)\geq s,\, \mu_i=0},\notag\\
&N_{1,n}(s) := \frac{1}{n}\sum_{i=1}^n \int_{O(K)} \!\ind\!\cb{\widehat{S}(H\boldZ_i)\geq s,\,\mu_i\neq 0}\dd\mu_{\mathcal{H}}(H),\, &&D_{1,n}(s) := \frac{1}{n}\sum_{i=1}^n \ind\!\cb{\widehat{S}(\boldZ_i)\geq s,\, \mu_i\neq0}\notag\\
&N_n(s) := N_{0,n}(s)+N_{1,n}(s),\;\;\;
&&D_n(s) := D_{0,n}(s)+D_{1,n}(s).
\label{eq:compound_counts}
\end{align}
Since the compound p-values are decreasing functions of $\widehat{S}(\boldZ_i)$, BH rejects at the smallest threshold $\hat{s} \in \{\widehat{S}(\boldZ_1),\ldots,\widehat{S}(\boldZ_n)\}$ with $N_n(\hat{s})/D_n(\hat{s})\leq\alpha$. We set $\hat{s}=\infty$ if no such threshold exists.

Let $N_{0,n}^*(s)$, $N^*_{1,n}(s)$, $N_n^*(s)$ and $D_n^*(s)$ be defined as above with $\widehat{S}$ replaced by $S^*(\cdot) = \abs{T(\cdot; \nu_0, s_0)}$ (that is, with true rather than estimated values of $\nu_0$ and $s_0^2$). Fix $\tau>0$ and write $b_n:=\zeta_n(\tau)$. By null rotational invariance,
$$
\EE{N_{0,n}^*(s)}=2\pi_0\bar{F}_0(s),\;\;\;\;
\EE{D_n^*(s)}=2\pi_0\bar{F}_0(s)+2\pi_1\bar{F}_1(s).
$$
On $[0,2b_n]$, both expectations are bounded below by $2\pi_0\bar{F}_0(2b_n)$, since $\bar{F}_0$ is decreasing. Moreover, $\liminf_{n \to \infty}\bar{F}_0(2b_n)/\pi_1>0$ (to see this, recall that $\bar{F}_0(b_n)/\pi_1 \to \tau$) and $\pi_0\to1$, so this lower bound is at least a positive constant times $\pi_1$ for all large enough $n$. Applying Lemma~\ref{lemm:gc_multiplicative_compound} with $\delta=c\pi_1$ for sufficiently small $c>0$ and $\varepsilon=(n\pi_1)^{-1/4}$, we thus find that as $n\to \infty$,
$$
\sup_{0\leq s\leq 2b_n}\abs{\frac{N_{0,n}^*(s)}{2\pi_0\bar{F}_0(s)}-1}\stackrel{\mathbb P}{\to}0,\;\;\;\;\sup_{0\leq s\leq 2b_n}\abs{\frac{D_n^*(s)}{2\pi_0\bar{F}_0(s)+2\pi_1\bar{F}_1(s)}-1}
\stackrel{\mathbb P}{\to}0.
$$
Here we apply the lemma to the independent functions averaged in $N_{0,n}^*(t)$ and $D_n^*(t)$.

Next we show that the contribution of $N_{1,n}^*$ to $N_n^*$ is negligible. Notice that
\begin{equation}
S^*(\boldz)
=\frac{\sqrt{K(\nu+\nu_0)}\abs{\hat{\mu}(\boldz)}}{\sqrt{\nu_0s_0^2+\nu\hat{\sigma}^2(\boldz)}} \leq
\sqrt{\frac{\nu+\nu_0}{\nu}}\abs{T(\boldz;0,1)}.
\label{eq:rotation_bound}
\end{equation}
As in the proof of Proposition~\ref{prop:sep_pvalue_limma}, $T(H\boldZ_i;0,1)$ has a $t_\nu$ distribution conditionally on $\boldZ_i$. Consequently,
$$
N_{1,n}^*(s)\leq\frac{1}{n}\sum_{i=1}^n\ind\{\mu_i\neq0\}\cdot2\bar{F}_{t,\nu}\p{s\sqrt{\nu/(\nu+\nu_0)}}.
$$
By the tail behavior of the t-distribution, there exists a constant $C>0$ such that
$$
\sup_{u \in [0,s]}\cb{\frac{\bar{F}_{t,\nu}(u\sqrt{\nu/(\nu+\nu_0)})}{\bar{F}_0(u)}}\leq C(1+s)^{\nu_0} \; \text{ for all } s\geq 0.$$
Thus, using~\eqref{eq:pi1_convergence} and recalling that $b_n = \zeta_n(\tau)=\Theta(\pi_1^{-1/(\nu+\nu_0)})$ ,
$$
\sup_{0\leq s\leq 2b_n}\frac{N_{1,n}^*(s)}{\bar{F}_0(s)}
=O_{\mathbb P}\p{\pi_1(1+b_n)^{\nu_0}}
=O_{\mathbb P}\p{\pi_1^{\nu/(\nu+\nu_0)}}\stackrel{\mathbb P}{\to}0.
$$
We now turn our attention to $N_n$ and $D_n$. 
By Lemmata~\ref{lemm:moderated_multiplicative} and~\ref{lemm:nu_0_s_0_consistency}, there exists $\varepsilon_n\stackrel{\mathbb P}{\to}0$ such that, with probability tending to one,
$$
(1-\varepsilon_n)S^*(\boldz)\leq\widehat{S}(\boldz)\leq(1+\varepsilon_n)S^*(\boldz)\;\text{ for all }\;\boldz.
$$
Hence $N_n(s)$ lies between $N_n^*(s/(1-\varepsilon_n))$ and $N_n^*(s/(1+\varepsilon_n))$, and the same bounds hold for $D_n(s)$ with $D_n^*$ in place of $N_n^*$. For $j\in\{0,1\}$, Lemma~\ref{lemm:t_tail_rescaling} gives
$$
\sup_{s \geq 0}\abs{\frac{\bar{F}_j(as)}{\bar{F}_j(s)}-1}
\to 0\; \text{ as }\; a \to 1,
$$
uniformly in $\lambda$. (Formally, the lemma established it for $j=0$, but it also works for $j=1$ since $\bar F_1$ is just rescaling the argument of $\bar{F}_0$.)
Using the earlier ratio concentration, we find that
\begin{equation}
\sup_{0\leq s\leq\zeta_n(\tau)}\abs{\frac{N_n(s)}{n^{-1}\lor D_n(s)} - \mathrm{Fdr}_n(s)} \stackrel{\mathbb P}{\to}0
\;\text{ for every fixed }\;\tau>0.
\label{eq:orthogonal_fdr_uniform}
\end{equation}
Take $0<\tau_-<\tau_\alpha<\tau_+$. Using Lemma~\ref{lemm:sparse_limits} and the monotonicity of $\mathrm{Fdr}_{\infty}$ in Lemma~\ref{lemm:monotonicity_infty},
$$
\mathrm{Fdr}_n(\zeta_n(\tau_-))\to\mathrm{Fdr}_\infty(\tau_-)<\alpha,
\;\;\;\;
\mathrm{Fdr}_n(\zeta_n(\tau_+))\to\mathrm{Fdr}_\infty(\tau_+)>\alpha.
$$
Since $\mathrm{Fdr}_n(s)$ is decreasing in $s$, equation~\eqref{eq:orthogonal_fdr_uniform} implies that, with probability tending to one, no $s\leq\zeta_n(\tau_+)$ satisfies $N_n(s)/D_n(s)\leq\alpha$. This means that on the same event, $\hat{s}>\zeta_n(\tau_+)$.
Meanwhile, 
$N_n(\zeta_n(\tau_-))/D_n(\zeta_n(\tau_-))<\alpha$ with probability tending to one. Thus we would like to conclude that $\hat{s}\leq\zeta_n(\tau_-)$, but this is not immediate because $\zeta_n(\tau_-)$ need not be an element of the set of observed scores $\{\widehat{S}(\boldZ_1),\ldots,\widehat{S}(\boldZ_n)\}$.\footnote{
A reader may notice that this argument is more cumbersome than usual arguments for asymptotic FDR control with BH, e.g., in~\citet{storey2004strong}. Indeed, for compound BH with $\mathcal{H}_{\mathrm{orth}}$ we could redefine Procedure~\ref{proc:compound_threshold} to instead use
\begin{equation}
\hat{s}=\inf\left\{s\geq0:\frac{N_n(s)}{n^{-1}\lor D_n(s)}\leq\alpha\right\},
\label{eq:compound_threshold_all_scores}
\end{equation}
in which case we could avoid the next step. The reason that we present this more involved argument is that we also reuse it later for compound BH with $\mathcal{H}_{\mathrm{symm}}$. In that case, $N_n(\cdot)$ is no longer continuous (instead it is left-
continuous with downward jumps), and redefining $\hat{s}$ as in~\eqref{eq:compound_threshold_all_scores} and rejecting all hypotheses with $\widehat{S}(\boldZ_i) \geq \hat{s}$ is no longer equivalent to Procedures~\ref{proc:compound_pvalues}/~\ref{proc:compound_threshold}.
}
Instead we argue as follows. Fix $\tau\in(\tau_-,\tau_\alpha)$. Since $\mathrm{Fdr}_\infty(\tau)<\alpha$, equation~\eqref{eq:orthogonal_fdr_uniform} also 
yields $N_n(\zeta_n(\tau))/D_n(\zeta_n(\tau))\leq\alpha$ with probability tending to one. Moreover, 
$$
\frac{D_n(\zeta_n(\tau))}{\pi_1}\stackrel{\mathbb P}{\to}2\{\tau+\Psi_1(\tau)\},\;\;\;\;
\frac{D_n(\zeta_n(\tau_-))}{\pi_1}\stackrel{\mathbb P}{\to}2\{\tau_-+\Psi_1(\tau_-)\}.
$$
The first limit is strictly larger than the second, so $D_n(\zeta_n(\tau))>D_n(\zeta_n(\tau_-))$ with probability tending to one. Their difference is the number of observed scores in $[\zeta_n(\tau),\zeta_n(\tau_-))$ divided by $n$, so this interval contains an observed score. Moving from $\zeta_n(\tau)$ to the first such score leaves $D_n$ unchanged and can only decrease $N_n$. Thus,
$$
\PP{\zeta_n(\tau_+)\leq\hat{s}\leq\zeta_n(\tau_-)} \to 1.
$$
Since $\zeta_n$ is strictly decreasing, we therefore have
$$
\PP{\tau_-\leq\zeta_n^{-1}(\hat{s})\leq\tau_+}  \to 1
$$
Since we chose $\tau_-,\tau_+$ as arbitrary numbers with $0<\tau_-<\tau_\alpha<\tau_+$, this proves that 
\begin{equation}
\zeta_n^{-1}(\hat{s})\stackrel{\mathbb P}{\to}\tau_\alpha.
\label{eq:tau_alpha_convergence_BH_orthog}
\end{equation}
Now for every fixed $\tau>0$, 
$$
\frac{D_{0,n}(\zeta_n(\tau))}{\pi_1}\stackrel{\mathbb P}{\to}2\tau,\;\;\;\;
\frac{D_{1,n}(\zeta_n(\tau))}{\pi_1}\stackrel{\mathbb P}{\to}2\Psi_1(\tau).
$$
By monotonicity of $D_{0,n}$ and $D_{1,n}$, and using continuity of $\Psi_1$, we conclude that
$$
\frac{D_{0,n}(\hat{s})}{\pi_1}\stackrel{\mathbb P}{\to}2\tau_\alpha,\;\;\;\;
\frac{D_{1,n}(\hat{s})}{\pi_1}\stackrel{\mathbb P}{\to}2\Psi_1(\tau_\alpha).
$$
Together with~\eqref{eq:pi1_convergence}, these limits yield
$$
\frac{nD_{0,n}(\hat{s})}{1\lor (nD_n)(\hat{s})}\stackrel{\mathbb P}{\to}\frac{\tau_\alpha}{\tau_\alpha+\Psi_1(\tau_\alpha)}=\alpha,
\;\;\;\;
\frac{nD_{1,n}(\hat{s})}{1\lor\sum_{i=1}^n\ind\{\mu_i\neq0\}}\stackrel{\mathbb P}{\to}2\Psi_1(\tau_\alpha).
$$
By dominated convergence, we conclude that $\mathrm{FDR}_n\to\alpha$ and $\mathrm{Pow}_n\to2\Psi_1(\tau_\alpha)$ as $n\to\infty$.
\end{proof}

\begin{proof}[Proof for DDR with compound orthogonal rotation p-values]

This proof is similar to the proof for BH with compound orthogonal rotation p-values and will reuse several of the arguments used therein.  Let us introduce some more notation:
$$
\begin{aligned}
N^{\mathrm{DDR}}_{n}(s) := \frac{1}{n}\sum_{i=1}^n \frac{\int_{O(K)} \!\ind\!\cb{\widehat{S}(H\boldZ_i)\geq s}\dd\mu_{\mathcal{H}}(H)}{1-\int_{O(K)} \!\ind\!\cb{\widehat{S}(H\boldZ_i)\geq s_{\tau_n}}\dd\mu_{\mathcal{H}}(H) }.
\end{aligned}
$$
Meanwhile, $N_{n}$, $D_{0,n}, D_{1,n}$, and $D_n$ are defined as in~\eqref{eq:compound_counts}.
By continuity of the rotation survival functions and the definition of $s_{\tau_n}$ in Procedure~\ref{proc:compound_ddr}, we have $N_n^{\mathrm{DDR}}(s_{\tau_n})=\tau_n$.

One can check that DDR will reject at the smallest threshold 
$$ \hat{s} \in \{\widehat{S}(\boldZ_1),\ldots,\widehat{S}(\boldZ_n)\} \cap [s_{\tau_n}, \infty),$$ such that $N^{\mathrm{DDR}}_n(\hat{s})/D_n(\hat{s})\leq\alpha$ (allowing $\hat{s}=\infty$ if no such threshold exists).

The rest of the proof works as follows. We show that the conclusion in~\eqref{eq:tau_alpha_convergence_BH_orthog} for BH that 
\smash{$\zeta_n^{-1}(\hat{s})\stackrel{\mathbb P}{\to}\tau_\alpha$} also holds for DDR. In turn this argument requires two main steps: showing that $N_n^{\mathrm{DDR}}$ can be uniformly controlled by $N_n$, and second that $\PP{\zeta_n(\tau)>s_{\tau_n}}\to1$ for every fixed $\tau>0$ as $n\to\infty$.

First we show that \smash{$s_{\tau_n}\stackrel{\mathbb P}{\to}\infty$}. Indeed, for every fixed $M>0$, the previous proof gives \smash{$N_n(M)\stackrel{\mathbb P}{\to}2\bar{F}_0(M)>0$}. Since $N_n\leq N_n^{\mathrm{DDR}}$ and $\tau_n \to 0$, we have
$$
\PP{s_{\tau_n}\leq M}\leq\PP{N_n(M)\leq\tau_n} \to 0.
$$
Applying the argument in~\eqref{eq:rotation_bound}
to $\widehat{S}(\cdot)$ instead of $S^*(\cdot)$, we find that
$$
\varepsilon_n:=\max_{1\leq i\leq n}\int_{O(K)}\!\ind\!\cb{\widehat{S}(H\boldZ_i)\geq s_{\tau_n}}\dd\mu_{\mathcal{H}}(H)
\leq2\bar{F}_{t,\nu}\p{s_{\tau_n}\sqrt{\frac{\nu}{\nu+\hat\nu_0}}}\stackrel{\mathbb P}{\to}0,
$$
where the convergence follows from $s_{\tau_n}\stackrel{\mathbb P}{\to}\infty$ and $\hat\nu_0\stackrel{\mathbb P}{\to}\nu_0$. Meanwhile,
$$
N_n(s)\leq N_n^{\mathrm{DDR}}(s)\leq\frac{N_n(s)}{1-\varepsilon_n}\;\text{ for all }\;s\geq0.
$$
Combining this bound with~\eqref{eq:orthogonal_fdr_uniform}, we find that
$$
\sup_{0\leq s\leq\zeta_n(\tau)}\abs{\frac{N_n^{\mathrm{DDR}}(s)}{n^{-1}\lor D_n(s)}-\mathrm{Fdr}_n(s)}\stackrel{\mathbb P}{\to}0
\;\text{ for every fixed }\;\tau>0.
$$
Next, fix $\tau>0$. The previous proof gives $N_n(\zeta_n(\tau))/\pi_1\stackrel{\mathbb P}{\to}2\tau$, and hence
$$
\frac{N_n^{\mathrm{DDR}}(\zeta_n(\tau))}{\tau_n}
\leq\frac{\pi_1}{\tau_n}\cdot\frac{N_n(\zeta_n(\tau))}{\pi_1}\cdot\frac{1}{1-\varepsilon_n}
\stackrel{\mathbb P}{\to} 0,
$$
where the last step used the assumption of the theorem that $\tau_n/\pi_1\to\infty$.
Since $N_n^{\mathrm{DDR}}$ is decreasing and $N_n^{\mathrm{DDR}}(s_{\tau_n})=\tau_n$, this proves that $\PP{\zeta_n(\tau)>s_{\tau_n}} \to 1$.

From here on, we can argue almost verbatim as in the proof for BH with compound orthogonal rotation p-values. We omit the details for brevity.

\end{proof}

\begin{proof}[Proof for BH with compound sign flipping p-values]
We first show that $\zeta_n^{-1}(\hat{s})\stackrel{\mathbb P}{\to}\tau_{\alpha_\nu}$. We use the notation of the preceding proof, with the numerator now given by
$$
N_n(s):=\frac{1}{n2^K}\sum_{i=1}^n\sum_{H\in\mathcal{H}_{\mathrm{symm}}}\ind\cb{\widehat{S}(H\boldZ_i)\geq s},
$$
and $N_{0,n}(s)$ and $N_{1,n}(s)$ defined analogously. The denominator terms ($D_n(s)$, $D_{0,n}(s)$, and $D_{1,n}(s)$) are defined as in~\eqref{eq:compound_counts}. We also use the notation $N_n^*(s)$, $N_{0,n}^*(s)$, and $N_{1,n}^*(s)$ and $D^*(s)$ for the same quantities with $\widehat{S}$ replaced by $S^*$.
We let $\hat{s}$ be the smallest observed score satisfying $N_n(\hat{s})/D_n(\hat{s})\leq\alpha$, setting $\hat{s}=\infty$ if there are no rejections. 

Here we also have that $\EE{N_{0,n}^*}=2\pi_0\bar{F}_0(s)$. For $N_{1,n}^*$, we observe that the sign flips $\mathbf{1}$ and $-\mathbf{1}$ leave $S^*(\boldZ_i)$ unchanged. Hence
\begin{equation}
N_{1,n}^*(s)=2^{-\nu}D_{1,n}^*(s)+ R_n^*(s),\; \text{ where }\, R_n^*(s):=\frac{1}{n2^K}\sum_{i:\mu_i\neq0}\sum_{H\notin\{\mathbf{1},-\mathbf{1}\}}\ind\cb{S^*(H\boldZ_i)\geq s},
\label{eq:R_n_star}
\end{equation}
with the inner sum being taken over $\mathcal{H}_{\mathrm{symm}}\setminus\{\mathbf{1},-\mathbf{1}\}$. We also recall the notation $D_{1,n}^*(s):=n^{-1}\sum_{i:\mu_i\neq0}\ind\{S^*(\boldZ_i)\geq s\}$. We will show that, for any fixed $\tau>0$ and $b_n:=\zeta_n(\tau)$,
$$
\sup_{0\leq s\leq2b_n}\frac{R_n^*(s)}{\bar{F}_0(s)}\stackrel{\mathbb P}{\to}0.
$$
Writing $H_j\in\{-1,1\}$ for the sign applied to coordinate $j$, define
$$
\delta_K(H):=\frac{\#\cb{j:H_j=1}-\#\cb{j:H_j=-1}}{K}.
$$
In what follows, we recall that under~\eqref{eq:normal_samples}--\eqref{eq:lonnsted_prior}, $\mu_i/\sigma_i\mid\mu_i\neq0\sim\mathrm{N}(0,\lambda/K)$ and moreover that under our sparse limit, $\lambda \to \infty$ with the distribution of $\sigma_i$ held fixed. These limits imply the following: for any fixed $C>0$ and $\varepsilon>0$,
$$
\PP{|\mu_i|\leq C\mid\mu_i\neq0} \to 0,\;\,
\PP{\Norm{\boldZ_i/\mu_i-\mathbf{1}}_2>\varepsilon\mid\mu_i\neq0} \to 0.
$$
Now for any sign flip $H \in \mathcal{H}_{\mathrm{symm}}$, the above also imply that for any $\varepsilon>0$,
$$
\PP{ \abs{\frac{\hat\mu(H\boldZ_i)}{\mu_i} -\delta_K(H)} > \varepsilon \mid \mu_i \neq 0} \to 0.
$$
Now recall that
$$
S^{*2}(H\boldZ_i)
=\frac{(\nu+\nu_0)K\{\hat{\mu}(H\boldZ_i)/\mu_i\}^2}
{\nu_0s_0^2/\mu_i^2+\Norm{\boldZ_i/\mu_i}_2^2-K\{\hat{\mu}(H\boldZ_i)/\mu_i\}^2}.
$$
Now moreover require that $H\notin\{\mathbf{1},-\mathbf{1}\}$, so that $|\delta_K(H)|<1$. By using the above limits, we find that for any $\varepsilon>0$:
\begin{equation}
\PP{ \abs{S^{*2}(H\boldZ_i) - \frac{(\nu+\nu_0)\delta_K(H)^2}{1-\delta_K(H)^2}} > \varepsilon \mid \mu_i \neq 0} \to 0\, \text{ as } n \to \infty.
\label{eq:s_star_alt_limit}
\end{equation}
Choose 
$$ M:= 1 + \max\!\cb{\sqrt{\frac{(\nu+\nu_0)\delta_K(H)^2}{1-\delta_K(H)^2}}\;:\;H\in\mathcal{H}_{\mathrm{symm}}\setminus\{\mathbf{1},-\mathbf{1}\}  }.
$$
Then $\PP{S^*(H\boldZ_i)\geq M \mid \mu_i \neq 0}\to0$ for each $H \notin \{\mathbf{1}, -\mathbf{1}\}$. Using~\eqref{eq:R_n_star}, we find that,
$$
\frac{\EE{R_n^*(M)}}{\pi_1}
=\frac{1}{2^K}\sum_{H\in\mathcal{H}_{\mathrm{symm}}\setminus\{\mathbf{1},-\mathbf{1}\}}\PP{S^*(H\boldZ_1)\geq M\mid\mu_1\neq0} \to 0,
$$
where we also use the 
fact that $K$ is fixed in our asymptotics. Also, $R_n^*(0)\leq n^{-1}\sum_i\ind\{\mu_i\neq0\}=O_{\mathbb P}(\pi_1)$. By monotonicity, for $n$ large enough such that $2b_n>M$,
$$
\sup_{0\leq s\leq2b_n}\frac{R_n^*(s)}{\bar{F}_0(s)} \leq \sup_{0\leq s\leq M}\frac{R_n^*(s)}{\bar{F}_0(s)} + \sup_{M\leq s\leq 2b_n}\frac{R_n^*(s)}{\bar{F}_0(s)} \leq 
\frac{R_n^*(0)}{\bar{F}_0(M)}+\frac{R_n^*(M)}{\bar{F}_0(2b_n)}
\stackrel{\mathbb P}{\to}0,
$$
since $\liminf_{n \to \infty} \bar{F}_0(2b_n)/\pi_1 >0$. 
Applying Lemma~\ref{lemm:gc_multiplicative_compound} to the independent functions averaged in $N_{0,n}^*(t)$, $D_{1,n}^*(t)$, and $D_n^*(t)$, and using the bound on $R_n^*$, we find that
$$
\sup_{0\leq s\leq2b_n}\abs{\frac{N_n^*(s)}{2\pi_0\bar{F}_0(s)+2^{1-\nu}\pi_1\bar{F}_1(s)}-1}\stackrel{\mathbb P}{\to}0,\;\; \sup_{0\leq s\leq2b_n}\abs{\frac{D_n^*(s)}{2\pi_0\bar{F}_0(s)+2\pi_1\bar{F}_1(s)}-1}\stackrel{\mathbb P}{\to}0.
$$
We now turn to $N_n$ and $D_n$. Arguing as in the proof for compound BH with $\mathcal{H}_{\mathrm{orth}}$, we find that for any $\tau>0$,
$$
\sup_{0\leq s\leq\zeta_n(\tau)}\abs{\frac{N_n(s)}{n^{-1}\lor D_n(s)}-\cb{2^{-\nu}+(1-2^{-\nu})\mathrm{Fdr}_n(s)}}\stackrel{\mathbb P}{\to}0.
$$
Here we used the identity
$$
\frac{\pi_0\bar{F}_0(s)+2^{-\nu}\pi_1\bar{F}_1(s)}{\pi_0\bar{F}_0(s)+\pi_1\bar{F}_1(s)}
=2^{-\nu}+(1-2^{-\nu})\mathrm{Fdr}_n(s).
$$
The limit of the right-hand side as $n \to \infty$ equals $\alpha$ precisely when $\mathrm{Fdr}_\infty(\tau)=\alpha_\nu$.

We omit the remainder of the proof, which is analogous to the proof for compound BH with $\mathcal{H}_{\mathrm{orth}}$.
\end{proof}

\begin{proof}[Proof for SeqStep+]
Write $H=H_{\mathrm{half}}$. Let us temporarily assume that $K$ is even. Then notice the following.\footnote{This argument and its variant for odd $K$ was suggested to us by GPT-6 Astra
} For any $\boldz \in \RR^K$:
$$
\hat{\mu}^2(H\boldz) = \p{\frac{1}{K}\sum_{k=1}^{K}H_k\{z_k - \hat{\mu}(\boldz)\}}^2 \leq  \frac{1}{K}\sum_{k=1}^{K}H_k^2\{z_k - \hat{\mu}(\boldz)\}^2 = \frac{\nu}{K} \hat{\sigma}^2(\boldz),
$$
where we used Jensen's inequality and the facts that $\sum_{k} H_k =0$ and $H_k^2=1$. 
Then applying the above with $H\boldz$ in place of $\boldz$,
$$\hat{\mu}^2(\boldz)  = \hat{\mu}^2(H\cdot H\boldz) \leq \frac{\nu}{K} \hat{\sigma}^2(H\boldz).$$
The implication is the following:
$$
\begin{aligned}
\widehat{S}(\boldz) \cdot \widehat{S}(H\boldz) &= K(\nu+\hat{\nu}_0) \frac{|\hat{\mu}(\boldz) \hat{\mu}(H\boldz)|}{\{\hat{\nu}_0 \hat{s}_0^2 + \nu \hat{\sigma}^2(\boldz)\}^{1/2}\{\hat{\nu}_0 \hat{s}_0^2 + \nu \hat{\sigma}^2(H\boldz)\}^{1/2}} \\ 
&\leq (\nu+\hat{\nu}_0) \frac{\nu^{1/2}\hat{\sigma}(\boldz) \nu^{1/2}\hat{\sigma}(H\boldz)}{\{\hat{\nu}_0 \hat{s}_0^2 + \nu \hat{\sigma}^2(\boldz)\}^{1/2}\{\hat{\nu}_0 \hat{s}_0^2 + \nu \hat{\sigma}^2(H\boldz)\}^{1/2}} \,\leq\, \nu + \hat{\nu}_0.
\end{aligned}
$$
In particular,
$$
\min\cb{\widehat{S}(\boldz),\widehat{S}(H\boldz)}
\leq\cb{\widehat{S}(\boldz)\cdot\widehat{S}(H\boldz)}^{1/2}
\leq\sqrt{\nu+\hat\nu_0}.
$$
Since $\hat{\nu}_0 \stackrel{\mathbb P}{\to} \nu_0$, we find that with $M:= 2\{1+(\nu + \nu_0)^{1/2}\} > 1$, then:
\begin{equation}
\PP{A_n}  \to 1 \text{ as } n \to \infty,\, \text{ where }
A_n  := \cb{\min\{\widehat{S}(\boldz),\widehat{S}(H\boldz)\}<M \text{ for all } \boldz \in \RR^K}
\label{eq:seqstep_event}
\end{equation}
At the end of the proof we will show that in fact~\eqref{eq:seqstep_event} holds also for odd $K$ (not just for even $K$).

So we run the rest of the proof assuming that we are on the event $A_n$. The upshot is as follows. Fix any $s \geq M$. Now suppose that $\widehat{S}(\boldZ_i) \geq s$, then it must be the case that $\widehat{S}(H\boldZ_i) < M \leq s \leq \widehat{S}(\boldZ_i)$ and therefore $\tilde{S}_i = \widehat{S}(\boldZ_i)$. Similarly, if $\widehat{S}(H\boldZ_i) \geq s$, then $\tilde{S}_i = -\widehat{S}(H\boldZ_i)$.

Now recall $\widehat{\mathrm{FDR}}(s)=
(1+\sum_{i=1}^n\ind\{\tilde{S}_i\leq-s\})/
(1\lor\sum_{i=1}^n\ind\{\tilde{S}_i\geq s\})$ from~\eqref{eq:fdr_seqstep}. Then, on $A_n$ it holds that, 
\begin{equation}
\widehat{\mathrm{FDR}}(s)=  \widehat{\mathrm{FDR}^*}(s)\, \text{ for all }\,s \geq M,\; \text{ where }\; \widehat{\mathrm{FDR}^*}(s):=
\frac{1+\sum_{i=1}^n\ind\!\cb{\widehat{S}(H\boldZ_i)\geq s}}
{1\lor\sum_{i=1}^n\ind\!\cb{\widehat{S}(\boldZ_i)\geq s}}.
\label{eq:above_M_simplification}
\end{equation}
In particular, this also means that on $A_n$ and if $\hat{s} \geq M$, then the rejected hypotheses are precisely those with $\widehat{S}(\boldZ_i) \geq \hat{s}$. 

Our strategy is as follows: first we show that  $\hat{s} \geq M$ with probability tending to one, then we can focus on $s \geq M$ where the above simplification holds. 

Let us start by studying $\widehat{\mathrm{FDR}}(s)$ for $s \leq M$. Conditioning on orbits $\mathcal{O}_{1:n}$ and $\boldmu = (\mu_1,\ldots,\mu_n)$, we have that:
$$
\begin{aligned}
&\EE{\sum_{i:\mu_i=0}\ind\{\tilde{S}_i\leq-s\}\,\cond\,\mathcal{O}_{1:n},\,\boldmu}=\EE{\sum_{i:\mu_i=0}\ind\{\tilde{S}_i\geq s\}\,\cond\,\mathcal{O}_{1:n},\,\boldmu}=\frac{1}{2}\sum_{i:\mu_i=0}\ind\{|\tilde{S}_i|\geq s\},
\end{aligned}
$$
where the last step uses that $\tilde{S}_i$ is symmetric about zero under the null. By Bretagnolle's (independent but not iid)   Dvoretzky-Kiefer-Wolfowitz inequality applied conditionally on $\mathcal{O}_{1:n}$ and $\boldmu$, we have for any $\varepsilon>0$ that with $\pi_{0,n} := n^{-1}\sum_{i=1}^n\ind\{\mu_i=0\}$,
$$
\PP{\sup_{s} \abs{\sum_{i:\mu_i=0}\ind\{\tilde{S}_i\geq s\} - \frac{1}{2}\sum_{i:\mu_i=0}\!\!\ind\{|\tilde{S}_i|\geq s\}}  \geq \varepsilon \cdot n \cdot \pi_{0,n} \, \mid \, \mathcal{O}_{1:n},\,\boldmu} \leq 2 e \exp\p{-2\cdot \pi_{0,n}\cdot n\varepsilon^2}.
$$
An analogous statement holds for $\sum_{i:\mu_i=0}\ind\{\tilde{S}_i\leq -s\}$. 
Moreover, for $s\leq M$, we have that:
$$
\frac{1}{2n}\sum_{i:\mu_i=0}\ind\{|\tilde{S}_i|\geq s\}
\geq\frac{1}{2n}\sum_{i:\mu_i=0}\ind\{\widehat{S}(\boldZ_i)\geq M\}
\stackrel{\mathbb P}{\to}\bar{F}_0(M)>0,
$$
where the convergence follows from the proof for compound BH with $\mathcal{H}_{\mathrm{orth}}$. Since also $\pi_{0,n}\stackrel{\mathbb P}{\to}1$, we conclude that
$$
\sup_{0<s\leq M}\abs{
\frac{\sum_{i:\mu_i=0}\ind\{\tilde{S}_i\geq s\}}
{\frac12\sum_{i:\mu_i=0}\ind\{|\tilde{S}_i|\geq s\}}-1}
\stackrel{\mathbb P}{\to}0,\;\;\sup_{0<s\leq M}\abs{
\frac{\sum_{i:\mu_i=0}\ind\{\tilde{S}_i\leq -s\}}
{\frac12\sum_{i:\mu_i=0}\ind\{|\tilde{S}_i|\geq s\}}-1}
\stackrel{\mathbb P}{\to}0.
$$
We arrive at the following result (note that we can ignore the contribution of the $+1$ and also of non-nulls, since their number is $o_{\mathbb P}(n)$ by~\eqref{eq:pi1_convergence}):
$$
\sup_{0<s\leq M}\abs{\widehat{\mathrm{FDR}}(s)-1}\stackrel{\mathbb P}{\to}0,
\;\text{ and so, }\;
\PP{\hat{s}>M} \to 1.
$$
Now let $D_n$ be defined as in~\eqref{eq:compound_counts}, and define
$$
N_n^{\mathrm{Seq}}(s):=\frac{1}{n}+\frac{1}{n}\sum_{i=1}^n\ind\cb{\widehat{S}(H\boldZ_i)\geq s}.
$$
Then $\widehat{\mathrm{FDR}^*}(s)=N_n^{\mathrm{Seq}}(s)/(n^{-1}\lor D_n(s))$.

Next, the proof for compound BH with sign flips in~\eqref{eq:s_star_alt_limit} shows that $\PP{S^*(H\boldZ_i)\geq M/2\mid\mu_i\neq0}\to0$ (because the limit of $S^*(H\boldZ_i)$ in~\eqref{eq:s_star_alt_limit} is $0$ for even $K$ and $\sqrt{(\nu+\nu_0)/(K^2-1)}$ for odd $K$, both below $M/2=1+\{\nu+\nu_0\}^{1/2}$). Hence, for every fixed $\tau>0$,
$$
\sup_{M/2\leq s\leq2\zeta_n(\tau)}
\frac{\sum_{i:\mu_i\neq0}\ind\{S^*(H\boldZ_i)\geq s\}}{n\bar{F}_0(s)}
\leq\frac{\sum_{i:\mu_i\neq0}\ind\{S^*(H\boldZ_i)\geq M/2\}}{n\bar{F}_0(2\zeta_n(\tau))}
\stackrel{\mathbb P}{\to}0.
$$
To see the convergence, the expected numerator is $n\pi_1\PP{S^*(H\boldZ_i)\geq M/2\mid\mu_i\neq0}=o(n\pi_1)$, while $\liminf_{n \to \infty}\bar{F}_0(2\zeta_n(\tau))/\pi_1>0$ by the usual argument, and then we can apply Markov's inequality.

Under the null, $S^*(H\boldZ_i)$ has the same distribution as $S^*(\boldZ_i)$. Thus the same concentration and multiplicative approximation arguments used to prove~\eqref{eq:orthogonal_fdr_uniform} give, for every fixed $\tau>0$,
$$
\sup_{M\leq s\leq\zeta_n(\tau)}\abs{\frac{N_n^{\mathrm{Seq}}(s)}{2\pi_0\bar{F}_0(s)}-1}\stackrel{\mathbb P}{\to}0.
$$
Using our results on $D_n$ in the proof for compound BH with orthogonal rotations, we then get:
$$
\sup_{M\leq s\leq\zeta_n(\tau)}\abs{\widehat{\mathrm{FDR}^*}(s)-\mathrm{Fdr}_n(s)}\stackrel{\mathbb P}{\to}0.
$$
Since \smash{$\widehat{\mathrm{FDR}}(s)=\widehat{\mathrm{FDR}^*}(s)$} for $s\geq M$ on $A_n$, we can repeat the threshold argument for compound BH with $\mathcal{H}_{\mathrm{orth}}$ to conclude that \smash{$\zeta_n^{-1}(\hat{s})\stackrel{\mathbb P}{\to}\tau_\alpha$}. The rest of the proof is then identical to the proof for compound BH with $\mathcal{H}_{\mathrm{orth}}$, where we only recall the comment after~\eqref{eq:above_M_simplification} that on the event $A_n$ and when $\hat{s} \geq M$, then
the rejected hypotheses are precisely those with $\widehat{S}(\boldZ_i) \geq \hat{s}$.

We still have to fill in the argument that~\eqref{eq:seqstep_event} also holds for odd $K$. In this case, we have $\sum_k H_k=1$. Suppose first that $|\hat{\mu}(\boldz)|\leq|\hat{\mu}(H\boldz)|$. Then, by the triangle inequality and Jensen's inequality,
$$
|\hat{\mu}(\boldz)|
\leq|\hat{\mu}(H\boldz)|
=\abs{\frac{\hat{\mu}(\boldz)}{K}+\frac1K\sum_{k=1}^K H_k\{z_k-\hat{\mu}(\boldz)\}}\leq\frac{|\hat{\mu}(\boldz)|}{K}+\sqrt{\frac\nu K}\hat{\sigma}(\boldz).
$$
Rearranging gives $|\hat{\mu}(\boldz)|(1-1/K)\leq \sqrt{\nu/K}\hat{\sigma}(\boldz)$, and hence 
$$\widehat{S}(\boldz) = \sqrt{K(\nu + \hat{\nu}_0)}\frac{ |\hat{\mu}(\boldz)|}{ \{ \hat{\nu}_0 \hat{s}_0^2 + \nu \hat{\sigma}^2(\boldz)\}^{1/2}}
\leq\frac{K}{K-1}\sqrt{\nu+\hat\nu_0}.$$ 
If instead $|\hat{\mu}(\boldz)|>|\hat{\mu}(H\boldz)|$, we apply the same argument to $H\boldz$, using $H(H\boldz)=\boldz$. Thus,
$$
\min\cb{\widehat{S}(\boldz),\widehat{S}(H\boldz)}\leq\frac{K}{K-1}\sqrt{\nu+\hat\nu_0}
\;\text{ for all }\;\boldz\in\RR^K.
$$
Since $\hat\nu_0\stackrel{\mathbb P}{\to}\nu_0$ and recalling the definition of $M$, we conclude that~\eqref{eq:seqstep_event} also holds for odd $K$.

\end{proof}

\begin{proof}[Proof for BH with standard t-tests]

Let us write $\mathrm{Fdr}_{\infty,\nu}$ for the marginal FDR in the two-groups model with standard t-statistics $T_i = T(\boldZ_i) = T(\boldZ_i; 0, 1)$, that is,
$$
\mathrm{Fdr}_{\infty,\nu}(s) := \frac{\pi_0\bar{F}_{0,\nu}(s)}{\pi_0\bar{F}_{0,\nu}(s) + \pi_1\bar{F}_{1,\nu}(s)},\;\;\; s \geq 0,
$$
where $\bar{F}_{0,\nu}(s) = \bar{F}_{t,\nu}(s)$ and $\bar{F}_{1,\nu}(s) = \bar{F}_{t,\nu}\p{s/\{1+\lambda\}^{1/2}}$ are the null and alternative survival functions of $T_i$, respectively. Here the dependence on $n$ through $\pi_1$ and $\lambda$ is implicit.

Under~\eqref{eq:normal_samples}, we thus have for all $i:\mu_i=0$ that $P_i^t \sim \mathrm{Unif}[0,1]$, and moreover all p-values are independent. Thus we have that:
$$\mathrm{FDR}_n[t\textnormal{-}\mathrm{BH}(\alpha)] = \pi_0 \alpha \to \alpha \text{ as } n \to \infty.$$
Now let us turn to the power. Write $f_{0,\nu}$ and $f_{1,\nu}$ for the null and alternative densities of $T_i$. Notice that
\begin{equation}
\frac{f_{1,\nu}(s)}{f_{0,\nu}(s)} = \frac{1}{\sqrt{1+\lambda}}\p{\frac{1+s^2/\nu}{1+s^2/\{\nu(1+\lambda)\}}}^{(\nu+1)/2} \leq (1+\lambda)^{\nu/2},
\label{eq:t_density_ratio}
\end{equation}
that is, $f_{1,\nu}(s) \leq (1+\lambda)^{\nu/2} f_{0,\nu}(s)$ for all $s \geq 0$.
Integrating the above inequality yields $\bar{F}_{1,\nu}(s) \leq (1+\lambda)^{\nu/2}\bar{F}_{0,\nu}(s)$, and so
$$
1 \geq \mathrm{Fdr}_{\infty,\nu}(s) \geq \frac{\pi_0}{\pi_0 + \pi_1(1+\lambda)^{\nu/2}} \to 1
\;\text{ uniformly over }\; s \geq 0,
$$
since $\pi_1(1+\lambda)^{\nu/2} \to 0$ by our asymptotic regime in~\eqref{eq:lambda_asymptotics}.

Next, let us define
$$
D_n(s) := \frac{1}{n}\sum_{i=1}^n \ind\cb{\abs{T_i} \geq s},\;\;\;\;
\EE{D_n(s)} = 2\pi_0\bar{F}_{0,\nu}(s) + 2\pi_1\bar{F}_{1,\nu}(s).
$$
The FDR estimate used by BH at threshold $s$ is $2\bar{F}_{0,\nu}(s)/D_n(s)$, where we set the ratio to $\infty$ if $D_n(s)=0$. Notice that
$$
\frac{2\bar{F}_{0,\nu}(s)}{\EE{D_n(s)}} = \frac{\mathrm{Fdr}_{\infty,\nu}(s)}{\pi_0} \to 1
\;\text{ uniformly over }\; s \geq 0.
$$
Now fix $\eta >0$. 
Applying Lemma~\ref{lemm:gc_multiplicative_compound} to the independent functions averaged in $D_n(t)$, with $\delta = \eta\pi_1$ and $\varepsilon = (n\pi_1)^{-1/4}$, we find that
$$
\sup_{s \geq 0\,:\, \EE{D_n(s)} \geq \eta \pi_1} \abs{\frac{D_n(s)}{\EE{D_n(s)}} - 1} \stackrel{\mathbb P}{\to} 0 \text{ as } n \to \infty,
$$
which also implies that,
$$
\sup_{s \geq 0\,:\, D_n(s) \geq 2\eta \pi_1} \abs{\frac{2\bar{F}_{0,\nu}(s)}{D_n(s)} - 1} \stackrel{\mathbb P}{\to} 0 \text{ as } n \to \infty.
$$
However, BH must reject at a threshold $\hat{s}$ such that $2\bar{F}_{0,\nu}(\hat{s})/D_n(\hat{s}) \leq \alpha$. By the above display:
$$
\PP{D_n(\hat{s}) \geq 2\eta \pi_1} \leq  \PP{ \sup_{s \geq 0\,:\, D_n(s) \geq 2\eta \pi_1} \abs{\frac{2\bar{F}_{0,\nu}(s)}{D_n(s)} - 1} \geq  (1-\alpha)/2} \to 0,
$$
and since $\eta>0$ was arbitrary, we conclude that $D_n(\hat{s})/\pi_1 \stackrel{\mathbb P}{\to} 0$ as $n \to \infty$. 
Recalling~\eqref{eq:pi1_convergence}, we thus find that:
$$
\frac{\sum_{i=1}^n \ind\cb{\abs{T_i} \geq \hat{s},\, \mu_i \neq 0}}{1\lor\sum_{i=1}^n \ind\cb{\mu_i \neq 0} } \leq \frac{D_n(\hat{s})}{n^{-1}\lor \p{n^{-1}\sum_{i=1}^n \ind\cb{\mu_i \neq 0}}}  \stackrel{\mathbb P}{\to} 0 \; \text{ as }\; n \to \infty.
$$
We conclude by dominated convergence that
$$
\mathrm{Pow}_n[t\textnormal{-}\mathrm{BH}(\alpha)] \to 0 \text{ as } n \to \infty.
$$
\end{proof}

\begin{proof}[Proof for BH with separately sign flip randomization]

Let $R_n$ denote the number of rejections of $\mathrm{sBH}(\alpha, \mathcal{H}_{\mathrm{symm}})$.
Here our goal is to show that 
$$
\PP{R_n = 0} \to 1\, \text{ as }\, n \to \infty.
$$
This will imply the results on $\mathrm{FDR}_n$ and $\mathrm{Pow}_n$.
We start by conditioning on all $\mu_i$ and $\sigma_i^2$. By construction, it holds that $\psep_i \in \{2^{-\nu}, 2^{-\nu+1}, \ldots, 1\}$ for all $i$. Conditionally on $\mu_i = 0$, we have that $\psep_i \sim \mathrm{Unif}\{2^{-\nu}, 2^{-\nu+1}, \ldots, 1\}$. Let us define the following:
$$
P_i := \psep_i \ind\{ \mu_i =0\} + 2^{-\nu} \ind\{ \mu_i \neq 0\}.
$$
Since $P_i \leq \psep_i$ for all $i$, if we run BH on the $P_i$ we will reject at least as many hypotheses as if we run BH on the $\psep_i$. Thus it suffices to show that $\PP{R_n' = 0} \to 1$ where $R_n'$ is the number of rejections of BH run on the $P_i$.

Next, for any $t \in \{2 ^{-\nu}, 2^{-\nu+1}, \ldots, 1\}$, we have that as $n \to \infty$,
$$
\begin{aligned}
\frac{n t}{\sum_{i=1}^n \ind\{P_i \leq t\}} = \frac{n t}{\sum_{i=1}^n \ind\{\mu_i = 0, P_i \leq t\} + \sum_{i=1}^n \ind\{\mu_i \neq 0\}}  \stackrel{\mathbb P }{\to} \frac{t}{t + 0} = 1.
\end{aligned}
$$
This means that
$$
\PP{R_n' =0} = \PP{  \frac{n t}{\sum_{i=1}^n \ind\{P_i \leq t\}} > \alpha \;\text{ for all }\; t \in \{2 ^{-\nu}, 2^{-\nu+1}, \ldots, 1\} } \to 1 \text{ as } n \to \infty,
$$
as claimed.
\end{proof}

\subsection{Proof of Proposition~\ref{prop:sparse_regime_sens}}
\begin{proof}
Under~\eqref{eq:normal_samples} and~\eqref{eq:lonnsted_prior}, we have the following distribution results:
$$
\begin{aligned}
&\frac{X_i}{V_i} \mid (\mu_i=0) \sim t_{K-2},\;\; \;\;&&\frac{X_i}{V_i} \mid (\mu_i\neq0) \sim \sqrt{1+\lambda}\cdot t_{K-2},\\
&\frac{X_i^0}{V_i} \mid (\mu_i=0) \sim t_{K-2} \;\;&&\frac{X_i^0}{V_i} \mid (\mu_i\neq0) \sim t_{K-2}.
\end{aligned}
$$
We call $f_{0,K-2}$ and $f_{1,K-2}$ the null and non-null densities of $X_i/V_i$. The same argument as in~\eqref{eq:t_density_ratio} (with $K-2$ in place of $\nu$) gives,
\begin{equation}
\frac{f_{1,K-2}(s)}{f_{0,K-2}(s) }\leq(1+\lambda)^{(K-2)/2}\,\text{ for all }\,s\in\RR.
\label{eq:t_density_ratio_K_minus_2}
\end{equation}
Moreover, $s \mapsto f_{1,K-2}(s)/f_{0,K-2}(s)$ is strictly increasing in $|s|$ by Lemma~\ref{lemm:monotonicity_finite_sample}, and so
$$ 
\PP{ \mu_i=0\, \mid \, X_i/V_i = s} = \frac{\pi_0 f_{0,K-2}(s)}{\pi_0 f_{0,K-2}(s) + \pi_1 f_{1,K-2}(s)}
$$
is strictly decreasing in $|s|$. Since the transformation $s\mapsto\Phi^{-1}(F_{t,K-2}(s))$ in~\eqref{eq:sens_t} is odd and strictly increasing, $\hat{g}(t)$ likewise depends only on $|t|$ and is strictly decreasing in $|t|$. 

Recall the notation $U_i,U_i^0,W_i$ as defined in~\eqref{eq:hat_sens}, with the oracle score $\hat{g}$. We define
$$
D_n(w):=\frac{1}{n}\sum_{i=1}^n\ind\cb{W_i\geq w},\;\;\;\;
N_n(w):=\frac{1}{n}\sum_{i=1}^n\ind\cb{W_i\leq-w}.
$$
For all $w\geq0$, it holds that:
$$
\begin{aligned}
\PP{W_i\geq w\mid\mu_i=0}
&=\PP{W_i\leq-w\mid\mu_i=0} = \frac{1}{2}\PP{|W_i|\geq w\mid\mu_i=0}\\
&=\frac12\PP{\exp(-U_i)\lor\exp(-U_i^0)\geq w\mid\mu_i=0}
\geq\frac12\PP{\exp(-U_i)\geq w\mid\mu_i=0}.
\end{aligned}
$$
By definition, $W_i\geq w$ implies $\exp(-U_i)\geq w$. Rearranging and then integrating~\eqref{eq:t_density_ratio_K_minus_2} with respect to $s$, we get:
$$
\begin{aligned}
\PP{W_i\geq w\mid\mu_i\neq0}
&\leq\PP{\exp(-U_i)\geq w\mid\mu_i\neq0}\\
&=\int_{\RR}\ind\cb{\exp\bigl[-\hat{g}\{\Phi^{-1}(F_{t,K-2}(s))\}\bigr]\geq w}f_{1,K-2}(s)\dd s\\
&\leq(1+\lambda)^{(K-2)/2}\PP{\exp(-U_i)\geq w\mid\mu_i=0}\\
&\leq2(1+\lambda)^{(K-2)/2}\PP{W_i\geq w\mid\mu_i=0}.
\end{aligned}
$$
Now we have: 
$$
\begin{aligned}
\EE{N_n(w)} &= \pi_0\PP{W_i\leq-w\mid\mu_i=0} + \pi_1\PP{W_i\leq-w\mid\mu_i\neq0}\\ 
&\geq \pi_0\PP{W_i\leq-w\mid\mu_i=0} \\ 
&=  \pi_0\PP{W_i\geq w\mid\mu_i=0}.
\end{aligned}
$$
Meanwhile,
$$
\begin{aligned}
\EE{D_n(w)} &= \pi_0\PP{W_i\geq w\mid\mu_i=0} + \pi_1\PP{W_i\geq w\mid\mu_i\neq0}\\
&\leq\cb{\pi_0+2\pi_1(1+\lambda)^{(K-2)/2}}\PP{W_i\geq w\mid\mu_i=0}.
\end{aligned}
$$
Thus
$$
\frac{\EE{N_n(w)}}{\EE{D_n(w)}}
\geq\frac{\pi_0}{\pi_0+2\pi_1(1+\lambda)^{(K-2)/2}},
$$
for all $w\geq0$ with $\EE{D_n(w)}>0$. Consequently,
$$
\p{1-\frac{\EE{N_n(w)}}{\EE{D_n(w)}}}_+
\leq\frac{2\pi_1(1+\lambda)^{(K-2)/2}}{\pi_0+2\pi_1(1+\lambda)^{(K-2)/2}}
\to0\text{ uniformly over these }w,
$$
since $\pi_1(1+\lambda)^{(K-2)/2}\to0$ by~\eqref{eq:lambda_asymptotics}.

Now fix $\eta>0$. Applying Lemma~\ref{lemm:gc_multiplicative_compound} to the independent functions averaged in $D_n(t)$ and $N_n(t)$, as in the proof for BH with standard t-tests in Theorem~\ref{theo:sparse_regime}, we find that
$$
\sup_{w\geq0\,:\,D_n(w)\geq2\eta\pi_1}
\p{1-\frac{n^{-1}+N_n(w)}{n^{-1}\lor D_n(w)}}_+
\stackrel{\mathbb P}{\to}0\text{ as }n\to\infty.
$$
However, the threshold $\hat{w}$ in~\eqref{eq:seqstep_hatw} must have this ratio at most $\alpha$ whenever there are rejections. Thus
$$
\PP{D_n(\hat{w})\geq2\eta\pi_1}\to 0\,\text{ as }\,n\to\infty,
$$
and since $\eta>0$ was arbitrary, we conclude that $D_n(\hat{w})/\pi_1\stackrel{\mathbb P}{\to}0$.

To conclude we note that
$$
\frac{\sum_{i=1}^n\ind\cb{W_i\geq\hat{w},\,\mu_i\neq0}}{1\lor\sum_{i=1}^n\ind\cb{\mu_i\neq0}}
\leq
\frac{D_n(\hat{w})/\pi_1}{\p{1\lor\sum_{i=1}^n\ind\cb{\mu_i\neq0}}/(n\pi_1)}
\stackrel{\mathbb P}{\to}0,
$$
where the denominator on the right converges to $1$ by~\eqref{eq:pi1_convergence}. By dominated convergence, $\mathrm{Pow}_n[\mathrm{SENS\textnormal{-}limma}(\alpha)] \to 0$.
\end{proof}

\end{document}

%% file: figures/design_swaps_methylation.tikz
\begin{tikzpicture}[font=\normalsize,
  swap/.style={<->,>={Stealth[length=1.6mm,width=1.25mm]},line width=.85pt}]
  \matrix (D) [matrix of math nodes,
    nodes={minimum width=9mm,minimum height=7mm,inner sep=0pt},
    column sep=.5mm,row sep=0pt] {
    0&1&0&0&0&0\\
    1&1&0&0&0&0\\
    0&1&0&0&1&0\\[4mm]
    0&1&1&0&0&0\\
    0&1&1&0&1&0\\
    0&1&1&0&0&1\\[4mm]
    0&1&0&1&0&0\\
    1&1&0&1&0&0\\
    0&1&0&1&1&0\\
    0&1&0&1&0&1\\
  };
  \begin{scope}[on background layer]
    \foreach \first/\last in {1/3,4/6,7/10} {
      \fill[restgreen!12] (D-\first-1.north west) rectangle (D-\last-1.south east);
      \fill[black!3] (D-\first-2.north west) rectangle (D-\last-6.south east);
    }
    \foreach \i in {1,2,7,8} {
      \fill[restgreen!7] (D-\i-2.north west) rectangle (D-\i-6.south east);
    }
  \end{scope}
  \foreach \j/\first/\second in {
    1/{Rest.}/{Treg},2/{}/{$\mathbf{1}$},3/{}/{M29},
    4/{}/{M30},5/{Act.}/{naive},6/{Act.}/{Treg}} {
    \node[font=\small,anchor=base]
      at ($(D-1-\j.center)+(0,12mm)$) {\first};
    \node[font=\small,anchor=base]
      at ($(D-1-\j.center)+(0,8mm)$) {\second};
  }
  \node[text=restgreen] at ($(D-1-1.north)+(0,21mm)$) {$W$};
  \node at ($(D-1-2.north west)!0.5!(D-1-6.north east)+(0,21mm)$) {$X$};
  \draw[restgreen!60,line width=.4pt]
    ($(D-1-1.north west)+(0,17mm)$) -- ($(D-1-1.north east)+(0,17mm)$);
  \draw[black!35,line width=.4pt]
    ($(D-1-2.north west)+(0,17mm)$) -- ($(D-1-6.north east)+(0,17mm)$);
  \draw[line width=.55pt] ($(D.north west)+(1mm,0)$) --
    ($(D.north west)+(-1mm,0)$) -- ($(D.south west)+(-1mm,0)$) --
    ($(D.south west)+(1mm,0)$);
  \draw[line width=.55pt] ($(D.north east)+(-1mm,0)$) --
    ($(D.north east)+(1mm,0)$) -- ($(D.south east)+(1mm,0)$) --
    ($(D.south east)+(-1mm,0)$);
  \draw[gray!55] ($(D-1-1.north east)+(.5mm,0)$) --
    ($(D-10-1.south east)+(.5mm,0)$);
  \foreach \i/\label in {1/Naive,2/Rest.\ Treg,3/Act.\ naive,
    4/Naive,5/Act.\ naive,6/Act.\ Treg,
    7/Naive,8/Rest.\ Treg,9/Act.\ naive,10/Act.\ Treg} {
    \node[anchor=east] at ($(D-\i-1.west)+(-17mm,0)$) {\label};
  }
  \node[anchor=east,font=\small,text=black!65]
    at ($(D-2-1.west)+(-36mm,0)$) {M28};
  \node[anchor=east,font=\small,text=black!65]
    at ($(D-5-1.west)+(-36mm,0)$) {M29};
  \node[anchor=east,font=\small,text=black!65]
    at ($(D-8-1.west)!0.5!(D-9-1.west)+(-36mm,0)$) {M30};
  \coordinate (lane3) at ($(D-1-1.west)+(-8.5mm,0)$);
  \foreach \k/\x in {1/.9,2/2.6,4/4.3} {
    \coordinate (lane\k) at ($(D-1-6.east)+(\x,0)$);
  }
  \foreach \i in {1,...,10} {
    \draw[black!32,line width=.4pt] ($(D-\i-1.west)+(-16mm,0)$) --
      ($(D-\i-1.west)+(-3mm,0)$);
    \draw[black!32,line width=.4pt] ($(D-\i-6.east)+(3mm,0)$) --
      ($(D-\i-6.east)+(49mm,0)$);
  }
  \foreach \k/\col/\first/\second in {
    1/naiveblue/{Act.\ naive}/{naive},
    2/tregred/{Act.\ Treg}/{Rest.\ Treg},
    3/restgreen/{Rest.\ Treg}/{naive},
    4/actpurple/{Act.\ Treg}/{Act.\ naive}} {
    \node[text=\col,font=\small,anchor=base]
      at ($(lane\k)+(0,16mm)$) {\first};
    \node[text=\col,font=\small,anchor=base]
      at ($(lane\k)+(0,12mm)$) {vs.};
    \node[text=\col,font=\small,anchor=base]
      at ($(lane\k)+(0,8mm)$) {\second};
    \draw[\col!50,line width=.4pt] ($(lane\k)+(-6.5mm,6mm)$) --
      ($(lane\k)+(6.5mm,6mm)$);
  }
  \foreach \k/\i/\j/\col in {1/1/3/naiveblue,1/4/5/naiveblue,
    1/7/9/naiveblue,2/8/10/tregred,3/1/2/restgreen,3/7/8/restgreen,
    4/5/6/actpurple,4/9/10/actpurple} {
    \draw[swap,\col] (lane\k |- D-\i-1.center)
      .. controls ($(lane\k |- D-\i-1.center)+(3mm,0)$)
              and ($(lane\k |- D-\j-1.center)+(3mm,0)$) ..
      (lane\k |- D-\j-1.center);
    \foreach \row in {\i,\j} {
      \fill[\col] ($(lane\k |- D-\row-1.center)+(-.7mm,0)$) circle (.45mm);
    }
  }
\end{tikzpicture}

%% file: tables/methylation.tex
\begin{table}
\centering
\caption{Prior estimates $(\hat\nu_0,\hat s_0^2)$ and number of discoveries in the methylation study at $\alpha=0.05$, for the four contrasts in Figure~\ref{fig:design_swaps_methylation}.}
\label{tab:methylation-discoveries}
\small
\setlength{\tabcolsep}{4pt}
\begin{tabular}{@{}lcccc@{}}
\toprule
& \textcolor{restgreen}{\shortstack{\strut Rest.\ Treg\\\strut vs.\\\strut naive}}
& \textcolor{naiveblue}{\shortstack{\strut Act.\ naive\\\strut vs.\\\strut naive}}
& \textcolor{tregred}{\shortstack{\strut Act.\ Treg\\\strut vs.\\\strut Rest.\ Treg}}
& \textcolor{actpurple}{\shortstack{\strut Act.\ Treg\\\strut vs.\\\strut Act.\ naive}} \\
\midrule
\# Permutations: effective (full) & $4\ (4)$ & $4\ (8)$ & $2\ (2)$ & $4\ (4)$ \\
\midrule
Limma prior & \multicolumn{4}{c}{$(3.96,0.055)$} \\
Orbit quantile prior & $(3.46,0.062)$ & $(3.66,0.065)$ & $(3.54,0.060)$ & $(3.43,0.059)$ \\
\midrule
BH + t-test & $4$ & $1$ & $0$ & $0$ \\
BH + Limma & $3\,023$ & $627$ & $0$ & $1\,478$ \\
Compound BH with rotations & $3\,033$ & $815$ & $12$ & $1\,723$ \\
DDR with rotations & $2\,989$ & $802$ & $12$ & $1\,690$ \\
BH + (compound/separate) permutations & $0$ & $0$ & $0$ & $0$ \\
SeqStep+ & $3\,095$ & $1\,280$ & $88$ & $2\,367$ \\
\bottomrule
\end{tabular}
\end{table}